\documentclass{article}
\pdfoutput=1
\usepackage{graphicx} 

\usepackage{jheppub}
\usepackage{hyperref}
\usepackage{amsmath}
\usepackage{amsfonts}
\usepackage{graphicx}
\usepackage{caption}
\usepackage{subcaption}
\usepackage{placeins}
\usepackage{tikz}
\usepackage{tikz-feynman}
\usepackage{feynmp-auto}

\usepackage[english]{babel}
\usepackage{amsthm}
\newtheorem{theorem}{Theorem}
\newtheorem{lemma}{Lemma}
\newtheorem{example}{Example}
\newtheorem{definition}{Definition}
\theoremstyle{remark}
\newtheorem{remark*}{Remark}

\usepackage{array}
\usepackage{mathtools}
\usepackage{amsmath}
\usepackage{amsfonts}

\title{
The Tropical Geometry of Subtraction Schemes
} 

\author[a]{Giulio Salvatori}
\affiliation[a]{Max-Plank-Instit\"ut fur Physik, Werner-Heisenberg-Institut, D–85748 Garching bei M\"unchen, Germany}

\emailAdd{giulios@mpp.mpg.de}

\date{\today}

\abstract{
We study the construction of local subtraction schemes through the lenses of tropical geometry.
We focus on individual Feynman integrals in parametric presentation, and think of them as particular instances of \emph{Euler integrals}.
We provide a necessary and sufficient condition for a combination of Euler integrands to be locally finite, i.e. to be expandable as a Taylor series in the exponent variables directly under sign of integration.
We use this to construct a local subtraction scheme that is applicable to a class of Euler integrals that satisfy a certain geometric property.
We apply this to compute the Laurent expansion in the dimensional regulator $\epsilon$ of various Feynman integrals involving both UV and IR singularities, as well as to generalizations of Feynman integrals that arise in effective field theories and in phase-space integrations, for which we provide new analytic results.
}

\begin{document}

\maketitle

\section{Introduction}

In high-energy physics, crucial quantities such as scattering amplitudes and cross-sections are typically represented as complicated integrals.
A common theme of many modern developments in the field is to try to derive the properties of the final integrated result directly from the analytical properties of \emph{the integrand}.
We will apply this philosophy to the computation of ultraviolet (UV) and infrared (IR) divergences of Feynman integrals, which in dimensional regularization appear as poles in the regulator $\epsilon$.
Traditionally this is done by constructing \emph{subtraction schemes}\footnote{In the literature the terminology of ``subtraction scheme'' is usually applied to the computation of actual observables, such as amplitudes and cross-sections, rather than for individual diagrams. Here we are adopting the graph-by-graph point of view on renormalization spelled out in \cite{Collins:1984xc}}, that is by rewriting 
\begin{align}
    I(\epsilon) = \left[I(\epsilon)-I^{\rm ct}(\epsilon)\right]+I^{\rm ct}(\epsilon),
    \label{eq:subtractionscheme}
\end{align} where the ``counterterm'' $I^{\rm ct}(\epsilon)$ is a simpler (i.e. already known) integral with the same pole part as the desired one $I(\epsilon)$, so that the ``renormalized'' expression $I^{\rm ren}(\epsilon)$ is finite as $\epsilon \to 0$.
We will be interested in promoting this understanding of UV and IR divergences to the integrand level.
The motivation is two-fold. From a computational point of view, one may be interested in the direct evaluation of the finite quantity $I^{\rm ren}(\epsilon)$ by integrating, numerically or analytically, an appropriate renormalized integrand; rather than computing separately the quantities $I(\epsilon)$ and $I^{\rm ct}(\epsilon)$ and explicitly exposing the cancellation of the poles in $\epsilon$. For this purpose, a result such as \eqref{eq:subtractionscheme} is not enough, as we will illustrate with simple examples below.
On a more conceptual level, one may hope that understanding subtraction schemes at the level of the integrand may unearth geometrical structures underpinning the physics of renormalization, in analogy to how positive geometries have given a new perspective on the origin of locality and unitarity of scattering amplitudes - and simplified their computation \cite{Arkani-Hamed:2013jha,Arkani-Hamed:2017tmz,Arkani_Hamed_2018,arkanihamed2023loop}.

It will be helpful to frame the problem in a larger mathematical context. We will consider integrals of the form
\begin{align}
    I({\bf s}; \nu, c) = \int_{\mathbb{R}^n_{\ge0}} \frac{d\alpha}{\alpha}\mathcal{I}({\bf s};\nu,c),
    \label{eq:euler}
\end{align}
where we have factored out the canonical logarithmic differential form,
\begin{align}
    \frac{d\alpha}{\alpha} \coloneqq \frac{d\alpha_1}{\alpha_1} \wedge \dots \wedge \frac{d\alpha_n}{\alpha_n},
\end{align}
from the rest of the integrand, which is of the form
\begin{align}
   \mathcal{I}({\bf s};\nu, c) = \prod_{i=1}^n \alpha_i^{\nu_i} \prod_{j=1}^{m} P_j({\bf s},\alpha)^{-c_j}.
    \label{eq:eulerintegrand}
\end{align}
 The quantities $P_j({\bf s},\alpha)$ are polynomials in the variables $\alpha = (\alpha_1, \dots, \alpha_n)$ with coefficients collectively denoted by ${\bf s}$.
Integrals of the type \eqref{eq:euler} are known as \emph{Euler integrals} \cite{berkesch2014euler,gelfand1990generalized,nilsson2013mellin}, and are ubiquitous in physics. String scattering amplitudes (and generalizations thereof \cite{Arkani-Hamed:2019mrd}), Feynman integrals in parametric form, integrals which appear in the Method of Regions \cite{smirnov1999problems} and certain phase-space integrals \cite{smirnov2024expansion}, all fall into this class. See \cite{Matsubara-Heo:2023ylc} for a physically-oriented introduction to Euler integrals.

An Euler integral defines a function $I({\bf s}; \nu, c)$ that depends on two sets of variables: the coefficients ${\bf s}$ and the exponents $(\nu,c)$ of the polynomials appearing in \eqref{eq:eulerintegrand}.
It turns out that an Euler integral is a meromorphic function of the exponents, while the functional dependence on the coefficients is considerably more complicated, usually involving multi-valued functions which generalize the logarithm.

In this paper, we shall be concerned only with the dependence on the exponents. More precisely, we will consider the case where all exponents depend affinely on a common parameter $\epsilon$, and we will be interested in expressing $I({\bf s};\epsilon)$ as a Laurent series around $\epsilon=0$, 
\begin{align}
    I({\bf s};\epsilon) = \tau_\epsilon I({\bf s};\epsilon) = \sum_{i=-N}^\infty \epsilon^i I^{(i)}({\bf s}).
\end{align}
From now on, we will use the symbol $\tau$ to denote an operator that returns the appropriate (i.e. Taylor, Laurent or Puiseux) series of the function to which it is applied.
Naively, one may think that the coefficients $I^{(i)}({\bf s})$ may be obtained by Taylor expanding the Euler integrand $\cal{I}({\bf s},\epsilon)$ with respect to the exponents, and then integrate each term of the resulting series.
\begin{align}
    \tau_\epsilon \int_{\mathbb{R}^n_{\ge0}} \frac{d\alpha}{\alpha}  \mathcal{I}({\bf s},\epsilon) \stackrel{?}{=} \int_{\mathbb{R}^n_{\ge0}} \frac{d\alpha}{\alpha} \tau_\epsilon \mathcal{I}({\bf s};\epsilon).
\end{align}
However, this is wrong since in general we expect $I(\epsilon)$ to have poles, which would not be generated by Taylor expanding the integrand in $\epsilon$.
Therefore, it is at the very least necessary to expose the pole structure of $I(\epsilon)$, before performing a series expansion at the integrand level.

One systematic method to do so was introduced by Nilsson and Passare \cite{nilsson2013mellin,berkesch2014euler}, and consists in applying a sequence of integration by parts identities that reduce $I({\bf s};\epsilon)$ to a combination of finite Euler integrals multiplied by poles in $\epsilon$. The expansion of the finite integrals can then be obtained by expanding the corresponding integrands directly under sign of integration.
This method was advocated in \cite{Panzer_2014} as a strategy to deal with divergent Feynman integrals, in view of its crucial advantage of preserving \emph{linear reducibility}, and was made more efficient in \cite{von_Manteuffel_2015}.
While this is a completely algorithmic approach to the problem, it has an unsatisfactory feature, which we now illustrate.
Let us first explain how poles in the $\epsilon$ arise considering a toy example,
\begin{align}
    \int_{0}^1 \int_{0}^1 \frac{dx}{x}\frac{dy}{y} x^\epsilon \mathcal{J}(x,y) ~ \sim \frac{1}{\epsilon} \int_0^1 \frac{dy}{y} \mathcal{J}(0,y) + \mathcal{O}(\epsilon).
\end{align}
As the regulator $\epsilon \to 0$, the integral diverges due to the naked logarithmic singularity at $x=0$. Therefore, it is safe to assume that the integral is dominated by the region of integration around $ x \sim 0$, where the integrand is well approximated by $\mathcal{J}(x,y) \sim \mathcal{J}(0,y)$.
This shows that the pole is proportional to the integral of $\cal{J}$ over $y$-space.
The example shows that \emph{each pole in $\epsilon$ allows to perform one integral exactly}. Therefore, it should be possible to express the coefficients $I^{(i)}({\bf s})$ as simpler integrals over lower dimensional spaces. This is a crucial property of Euler integrals that is unfortunately lost in the Nilsson-Passare approach.

Here we propose a different strategy, based on the development of a suitable \emph{local subtraction scheme}. We will explain how to build a collection of counterterms that reproduce the pole structure of an Euler integral, so that subtracting them from the original integrand yields a renormalized integrand that can be safely expanded under the sign of integration. Furthermore, the counterterm integrands can be integrated exactly, explicitly manifesting the pole structure of the Euler integral under consideration.
In other words we look for a counterterm integrand $\mathcal{I}^{\rm ct}$ such that,
\begin{align}
    \int_{\mathbb{R}^n_{\ge0}}  \frac{d\alpha}{\alpha} \left(\mathcal{I} - \mathcal{I}^{\rm ct}\right) = \mathcal{O}(\epsilon^0) {\quad \rm and\quad } \int_{\mathbb{R}^{n}_{\ge0}} \frac{d\alpha}{\alpha} \mathcal{I}^{\rm ct} = \frac{1}{\epsilon^m}\int_{\mathbb{R}^{n-m}_{\ge0}}\mathcal{J}
\end{align}
and then iterate the procedure on $\mathcal{J}$.

The heart of the matter is that just because the renormalized integrand yields a finite quantity, i.e. of order $\mathcal{O}(\epsilon^0)$, it is not necessarily possible to expand it in $\epsilon$ directly under sign of integration. A simple counter-example is given by
\begin{align}
    0 = \frac{1}{\epsilon}-\frac{1}{\epsilon} = \int_0^1 \frac{dx}{x} \left(x^\epsilon- 2 x^{2\epsilon}\right),
\end{align}
expanding the integrand on the RHS in powers of $\epsilon$ produces divergent integrals, and is therefore not allowed.
This motivates the introduction of the stronger notion of \emph{local finiteness}: we say that a combination of Euler integrands is locally finite if integration and Taylor expansion commute,
\begin{align}
    \tau_\epsilon \int \frac{d\alpha}{\alpha }\mathcal{I}^{\rm ren} = \int \frac{d\alpha}{\alpha}\tau_\epsilon \mathcal{I}^{\rm ren} .
\end{align}

The simple counterexample just discussed already points at the criterion that guarantees local finiteness. The source of divergences of an Euler integral are boundary regions in the domain of the integration\footnote{More precisely, this is the case if the coefficients ${\bf s}$ are non-negative.}. If the local behavior of the counterterm integrand around these regions cancels the pole coming from the canonical logarithmic measure, then local finiteness holds.
The first subtlety is that these potentially problematic regions may be revealed only after a suitable change of variables. 

Assuming that the problematic regions have been correctly identified, we can try to locally eliminate the associated divergences by subtracting the asymptotic behavior of the integrand in these regions, as captured by an appropriate power expansion.
As a simple example, consider
\begin{align}
    I(\epsilon)^{\rm ren} = \int_0^1 \frac{dx}{x} \left[x^\epsilon (1+x)^\epsilon - x^\epsilon\right] = \int_0^1 \frac{dx}{x} (\log(1+x) \epsilon + \dots) = \frac{1}{\epsilon}\frac{\zeta(2)}{2} + \dots,
\end{align}
which is locally finite. In higher dimensions a subtlety arises: the \emph{overcounting of divergences}. 
Imagine that we tried the same strategy in the following example,
\begin{align}
    I(\epsilon)^{\rm ren} = \int_{[0,1]^2} \frac{dx}{x}\frac{dy}{y} \left[x^\epsilon y^\epsilon (x + y + x y)^\epsilon - x^\epsilon y^\epsilon (y)^\epsilon - x^\epsilon y^\epsilon (x^\epsilon) \right].
    \label{eq:counterexample2}
\end{align}
Where we have subtracted from an integrand its local expressions around $x=0$ and around $y=0$.
However, \eqref{eq:counterexample2} is \emph{not} locally finite. The issue is that we are subtracting twice the singularity at the origin.
We could try to fix the problem by adding back this singular behaviour, but this brings us to a puzzle: what is the series expansion of $(x+y+x y)^\epsilon$ around the origin?
We can formalize this by defining the operator $\tau_z$  that acts on an integrand by stripping the factor $z^\epsilon$, then performs the Taylor expansion in $z$ (up to some appropriate order), and finally multiplies back the factor $z^\epsilon$. The combination appearing in \eqref{eq:counterexample2} can then be written as
\begin{align}
    I(\epsilon)^{\rm ren} = \int_{[0,1]^2} \frac{dx}{x}\frac{dy}{y} (1-\tau_x-\tau_y) \mathcal{I}.
\end{align}
The issue is that the operators $\tau_x$ and $\tau_y$ \emph{do not commute} when acting on the above integrand,
\begin{align}
    \tau_x \tau_y \mathcal{I} \ne \tau_y \tau_x \mathcal{I} ,
\end{align}
- which is ultimately due to the fact that $(x+y+x y)$ vanishes at the origin - and therefore we cannot add  neither $\tau_x  \tau_y \mathcal{I}$ nor $\tau_y  \tau_x \mathcal{I}$ to fix the overcounting.
The algorithm of \emph{Sector Decomposition} \cite{Binoth:2000ps, Binoth:2003ak,Bogner:2007cr, Kaneko:2010kj,kaneko2010geometric} solves this problem by breaking the integral into ``sectors'', parametrized in such a way that the integrand is no longer vanishing at the origin. In the above example, this is achieved by rescaling $y \to x y$ and separating the domain of integration into $x \in [0,1]$ and $x \in [1,\infty]$.
The disadvantage of this procedure is that spurious structures are inevitably introduced. This is reflected, for instance, in the appearance of intermediate transcendental numbers that cancel in the sum over the sectors. Clearly, this is also accompanied by an unnecessary computational complexity.

Here we follow a different approach. Having identified the regions responsible for divergences, we will engineer counterterms that on the one hand correct the local behavior around these regions, and on the other are globally defined on the whole domain of integration; thus avoiding the artificial decomposition in sectors.
Obviously, the counterterm introduced to fix a specific region should be easier than the original integrand, so that the integration in the problematic region can be performed exactly, exposing the corresponding pole.
The simplest way to produce such a counterterm would be to use the local behavior of the original integrand itself. However, this would not be globally defined, and would introduce new divergent regions far away from the one under consideration. The challenge is then to modify this local expression so that no new divergences are introduced, while maintaining enough control over it so that is still possible to analytically perform the integration that expose the poles. 

We will address all of the above issues by using rudiments of \emph{Tropical Geometry} \cite{maclagan2015introduction} to study the local behavior of Euler integrands and the commutativity of series expansions.
In particular, the notion of Newton polytope will play a predominant role. 
Its importance in the study of Euler integrals is well known \cite{nilsson2013mellin,gelfand1990generalized, Arkani-Hamed:2022cqe} and it has played an important role in the development of algorithms to compute Feynman integrals, including sector decomposition.

Our first major result is a necessary and sufficient condition for a combination of Euler integrands to be locally finite, Th. \ref{th:localfinite3}.
Building on this, we develop a simple-minded subtraction scheme, inspired by ideas discussed in \cite{Brown:2019wna} in the context of string amplitudes, and already considered in the context of UV divergent Feynman integrals \cite{Hillman:2023ezp}.
These earlier works focus on specific instances of divergent Euler integrals: tree-level string amplitudes and UV divergent Feynman integrals, respectively.
On the other hand, our subtraction scheme is applicable to any Euler integral that satisfies a certain geometrical property.

This paper is structured as follows.
We begin by offering a short review of necessary notions from polyhedral geometry, in Section \ref{sec:polygeom}.
In Section \ref{sec:newt} we introduce the Newton polytope and explain its importance in studying the behavior of a polynomial under a rescaling of its variables.
In Section \ref{sec:euler} we move to Euler integrals and relate the properties of the corresponding integrands to the geometric data of the Newton polytopes.
We also review some important algorithms that require knowledge of the polytope, the \emph{Nilsson-Passare} analytical continuation and \emph{Sector Decomposition}.
Building on these preliminaries, in Section \ref{sec:trop} we present the first nontrivial result of this paper, a theorem for local finiteness of Euler integrals.
In Section \ref{sec:scheme} we apply this to construct a simple-minded subtraction scheme. We show its usefulness in a series of examples in \ref{sec:examples}.
Finally, we discuss directions for further exploration.

\section{Polyhedral Geometry}
\label{sec:polygeom}

In this section, we offer a summary of the basic facts of polyhedral geometry that are relevant for this paper. Proofs can be found in standard textbooks such as \cite{ziegler1995lectures}.

We denote the standard coordinates on $\mathbb{R}^n$ by ${\bf z} = (z_1, \dots, z_n)$.
Given a collection of vectors $\{\rho_i\}_{i \in I}$ in $\mathbb{R}^n$ we define their span and positive span as
\begin{align}
    \mathrm{Span }_K \{\rho_i\}_{i \in I} \coloneqq \{{\bf z} | {\bf z} = \sum_{i \in I} \mu_i \rho_i, \mu_i \in K \},
\end{align}
for $K$ in $\mathbb{R}$ and $\mathbb{R}_{\ge0}$, respectively. We will also write $\mathrm{Span}$ and $\mathrm{Span}_+$ to ease the notation.
Given two sets $X,Y \subset \mathbb{R}^n$, we define their \emph{Minkowski sum} to be the set
\begin{align}
    X + Y = \{ v \in \mathbb{R}^n | v = x + y, x \in X, y \in Y\}.
\end{align}
Given two matrices $A,B$ with the same number of rows, we denote by $A|B$ the matrix formed by listing the columns of $A$ followed by those of $B$.

A \emph{cone} $\sigma$ in $\mathbb{R}^n$ is any set closed under taking positive spans. That is, if vectors $\{\rho_i \}_{i \in I}$ are in $\sigma$, then so is $\rm{Span }_+(\{\rho_i\}_{i \in I})$.
A cone can be written as the positive span of a minimal collection of vectors called \emph{rays}, which themselves cannot be written as the positive span of other elements of the cone. In other words, the rays $\mathrm{Rays\ } \sigma$ of a cone $\sigma$ form the minimal set such that
\begin{align}
    \sigma = \mathrm{Span}_{+}\ \mathrm{Rays\ } \sigma.
\end{align}
The interior of $\sigma$ is the strictly positive span of its rays, that is
\begin{align}
    \sigma^+ \coloneqq \{{\bf z} | {\bf z} = \sum_{\rho \in \mathrm{Rays\ }\sigma} \mu_\rho \rho, \mu_\rho > 0 \}.
\end{align}
The dimension of a cone is the largest dimension of a ball properly contained in the cone.
A cone is said to be \emph{pointed} if it does not contain any proper affine space, and \emph{simplicial}  if its rays are linearly independent.
If a cone is not simplicial then the number of rays is greater than the dimension of the cone.

A \emph{fan} $\Sigma$ is a collection of cones closed under intersections. That is, if $\sigma$ and $\sigma'$ are in $\Sigma$ so is $\sigma \cap \sigma'$. We denote by $\Sigma(m)$ the subset of $\Sigma$ formed by the cones of dimension $m$.
A fan $\Sigma'$ is said to \emph{refine} $\Sigma$ if for all $\sigma' \in \Sigma'$ there is a $\sigma \in \Sigma$ such that $\sigma' \subset \sigma$.
Given two fans $\Sigma$ and $\Sigma'$ one can produce a new fan that refines both,
\begin{align}
    \Sigma | \Sigma' \coloneqq \{\sigma \cap \sigma', \sigma \in \Sigma, \sigma' \in \Sigma'\},
\end{align}
and that is therefore called the \emph{common refinement} of $\Sigma$ and $\Sigma'$.
We say that a fan is pointed (or simplicial) if all of its cones are.
We can endow a fan with a lattice structure by declaring two cones to be compatible if there is a cone of the fan containing both.
More generally, we say that two sets $X,Y \subset \mathbb{R}^n$ are compatible with respect to $\Sigma$ if there is a cone $\sigma \in \Sigma$ containing both.
We denote by $\Sigma|_X$ the set of cones of $\Sigma$ compatible with $X$. 

A \emph{polytope} in the V-presentation is a subset $\mathcal{P} \subset \mathbb{R}^n$ given by the convex hull of a collection of points which we list as columns of the matrix $\mathcal{Z} = \left({\bf z}_1 | \dots | {\bf z}_v\right) \subset \mathbb{R}^n$, that is,
\begin{align}
    \mathcal{P} = \mathrm{Conv\ } \mathcal{Z} \coloneqq
    \left\{{\bf z} | {\bf z}  = \sum_{i=1}^v \mu_i {\bf z_i}, \quad \mu_i \ge 0, \quad \sum_{i=1}^v \mu_i = 1 \right\}
\end{align}
We say that the presentation is non-redundant if for any subset $\mathcal{Z}' \subset \mathcal{Z}$ the polytope $\mathrm{Conv\ } \mathcal{Z}'$ is strictly contained into $\mathrm{Conv\ } \mathcal{Z}$, in which case we say that the columns of $\mathcal{Z}$ are the \emph{vertices} of $\mathcal{P}$. The fundamental theorem of polyhedral geometry is that a polytope admits an alternative dual description in terms of half-spaces, or H-presentation.
A polytope in H-presentation is defined as the locus where a set of finitely many affine equations and affine inequalities are satisfied.
\begin{align}
    \mathcal{P} = \bigcap_{i=1,\dots,f} \{d_i - \rho_i \cdot z_i \ge 0\} \bigcap_{j=1,\dots,e} \{d_j - \rho_j \cdot z_j = 0\}.
\end{align}
We can write this more compactly by listing the constants $d_i$ (resp. $d_j$) and vectors $\rho_i$ (resp. $d_j$) as rows of matrices $d_f$ (resp. $d_e$) and $\rho_f$ (resp. $\rho_e$) and writing
\begin{align}
    \mathcal{P} = \rm{Poly}(d_f|\rho_f, d_e|\rho_e),
\end{align}
Note that the presentation is not unique, since we can always add linear combinations of the equations to the inequalities.
We say that the presentation is non-redundant if $\mathcal{P}$ is strictly contained in the polytope obtained by deleting any proper subset of the rows of $(d_f|\rho_f)$ and/or of the rows $(d_e|\rho_e)$.
In this case we refer to the inequalities encoded by the rows of $(d_f|\rho_f)$ as \emph{facets}.
From now on we will always assume of working with a non-redundant presentation.
Note that the notion of vertices and facets are dual to each other.

The dimension of a polytope is the largest dimension of a ball contained in the polytope; this can be strictly smaller than the dimension of the ambient projective space where the polytope is defined. In H-presentation, the dimension of a polytope is given by the corank of the linear system ofequations.

Let $\phi : \mathbb{R}^n \to \mathbb{R}^n$ be an invertible linear map, represented by the matrix $M$, and $\mathcal{P} \subset \mathbb{R}^n$ be a polytope with presentations
\begin{align}
    \mathcal{P} = \mathrm{Conv}(\mathcal{Z}) = \mathrm{Poly}(d_f|\rho_f, d_e|\rho_e).
\end{align}
Then $\phi(\mathcal{P}) \cdot \mathbb{R}^m$ is also a polytope, presented as
\begin{align}
    \phi(\mathcal{P}) = \mathrm{Conv}(M \cdot \mathcal{Z}) = \mathrm{Poly}(d_f|\rho_f \cdot M^{-1}, d_e|\rho_e \cdot M^{-1}).
\end{align}

A \emph{face} of a polytope is the locus where a finite number of inequalities of its H-presentation are saturated, i.e. they are replaced by equations.
With a slight abuse of terminology, we can say that a facet $\rho$ is face of co-dimension one, corresponding to saturating exactly one inequality $d_\rho -\rho \cdot {\bf z} \ge 0$. 
By definition, a face is also a polytope automatically given in a (possibly redundant) H-presentation. Saturating a collection of inequalities $\{d_i - \rho_i \cdot {\bf z}\}_{i \in I}$ may also imply that additional inequalities may be automatically saturated, in which case the co-dimension of the face may be higher than expected, see example \ref{ex:pyramid},
We say that two facets meet transversely if the face obtained by saturating both is of co-dimension two.
We say that a collection of $m$ facets meet transversely if all pairs of facets do. In this case the face $F$ obtained saturating all inequalities is of co-dimension $m$.
\begin{example}
    The pyramid shown in Fig. \ref{fig:segmentgeometry}(c) has H-presentation \begin{align*}
        \mathrm{Poly}\left(\left(\begin{array}{c}
          0 \\
          0 \\
          4 \\
          4 \\
          0 \\
    \end{array}\right), \left(\begin{array}{ccc}
         -2 & 0 & 1 \\
         0 & -2 & 1 \\
         2 & 0 & 1 \\
         0 & 2 & 1 \\
         0 & 0 & 1 \\
    \end{array}\right); (0), (0) \right),
    \end{align*}
    we have that $(d_1 - \rho_1 \cdot {\bf z})+(d_3 - \rho_3 \cdot {\bf z})=(d_2 - \rho_2 \cdot {\bf z})+(d_4 - \rho_4 \cdot {\bf z})$, which implies that on the locus where two opposite sides of the pyramid meet, also the other two sides do. Therefore neither of the two pair of facets meet transversely.
    The four facets all meet at the common face corresponding to the apex of the pyramid.
    \label{ex:pyramid}
\end{example}

Let $\rm{Faces\ } \mathcal{P}$ be the collection of faces of a polytope. We can endow it with a lattice structure by defining two faces to be compatible if and only if their intersection is a nonempty face of the polytope.
The face lattice of a polytope can be easily constructed from the knowledge of both of its presentations.
A consequence of the fundamental theorem on polytopes is that any linear functional attains its extremal values over a polytope on one of its faces. We use this to define a fan in the dual vector space $(\mathbb{R}^n)^\star$ of linear functionals acting on $\mathbb{R}^n$. It is called the \emph{normal fan} of  $\mathcal{P}$, and denoted by $\mathrm{Fan\ } \mathcal{P}$. 
First we assign a cone to each face via
\begin{align}
    \sigma_F = \{\rho \in \mathbb{R}^n\ |\ \rho \cdot {\bf z}\ \mathrm{is\  maximized\ at\ } F \},
    \label{eq:coneface}
\end{align}
For instance, if $F$ is the apex of pyramid of example \ref{ex:pyramid}, the corresponding cone is $\sigma_F = \mathrm{Span}_+ \{\rho_i\}_{i=1}^4$.
The normal fan of $\mathcal{P}$ is the collection of all these cones,
\begin{align}
    \mathrm{Fan\ }\mathcal{P} = \{\sigma_F, F \in \mathrm{Faces\ } \mathcal{P}\}.
\end{align}
Any fan $\Sigma \subset \mathbb{R}^n$ that arises as the normal fan of a polytope must be \emph{complete}, that is $\bigcup_{\sigma \in \Sigma} \sigma = \mathbb{R}^n$.
The assignment $F \to \sigma_F$ is a dimension-reversing isomorphism of lattices.
This means that two cones $\sigma_F$ and $\sigma_{F'}$ are compatible if and only if $F$ and $F'$ are.
If a collection of facets $\{\rho_i\}_{i \in I}$ of a non-redundant H-presentation of a polytope meet transversely, then their positive span $\mathrm{Span\ }_+ \{\rho_i\}_{i=1}^m$ of vectors in $\mathbb{R}^n$ is a cone of $\mathrm{Fan\ }\mathcal{P}$.
Under the linear map $\phi : \mathbb{R}^n \to \mathbb{R}^n$ the rays $\rho \in \Sigma(1)$, thought of as row vectors, transform to $\rho \cdot M^{-1}$. The lattice structure of the fan does not change.

The normal fan of a polytope is readily obtained from the data of its H-presentation and the face lattice. 
Suppose that there are no equations in the presentation of the polytope. Then the fan is pointed and its one-dimensional cones $\Sigma(1)$ are the positive spans of the vectors $\rho_i$ appearing in the H-presentation. The higher dimensional cones are given by positive spans of collections of rays pairwise compatible according to face lattice.
Now consider the case where the H-presentation contains equations $\{d_j - \rho_j \cdot {\bf}=0\}_{j\in J}$ defining an affine plane ${\bf V} \subset{\mathbb{R}^n}$.
Let $\pi_V$ be a projection on ${\bf V}$, that is, any surjective affine map $\pi_{\bf V} : \mathbb{R}^n \to {\bf V}$. 
The image $\pi_{\bf V}(\mathcal{P})$ is a polytope, with no equations in its H-presentation (with respect to some choice of coordinates on ${\bf V}$). Let $\Sigma'$ be its normal fan.
Consider the dual map $\pi_V^\star : {\bf V}^\star \to (\mathbb{R}^n)^\star$  defined by $\pi^\star_{\bf V}(\eta) = \eta \circ \pi_{\bf V}$. Then the cones of $\mathrm{Fan\ } \mathcal{P}$ can be described as $\pi^\star_{\bf V}(\sigma') + \mathrm{Span} \{\rho_j \}_{j \in I}$, with $\sigma' \in \Sigma'$.

Lower dimensional polytopes play an important role in the construction of the counterterms of the subtraction scheme presented in Section \ref{sec:scheme}.
Therefore, it is worth to illustrate the above definitions with a simple example.
\begin{example}
    Consider the polytope $\mathcal{P} = \mathrm{Conv\ }\{(1,0),(0,1)\}$.
    Its H-presentation is
    \begin{align}
        \mathcal{P} = \{1-\rho_+ \cdot {\bf z} \ge 0 \} \cap \{1- \rho_{-} \cdot {\bf z} \ge 0\} \cap \{1-\rho_e \cdot {\bf z} =0\},
    \end{align}
    with $\rho_{\pm}=(\pm 1, \mp 1)$, and $\rho_e = (1,1)$. Its normal fan $\Sigma = \mathrm{Fan\ } \mathcal{P}$ contains two two-dimensional cones, $\sigma_{\pm} = \{t_1\pm t_2 \ge 0\} = \mathrm{Span}_{+} \{\rho_{\pm}\}+ \mathrm{Span} \rho_e$.
    Since an equation is present in the H-presentation of $\mathcal{P}$, $\Sigma$ is not pointed.
    The polytope $\mathcal{P}$ arises as the Newton polytope of the polynomial $P=x+y$, a notion that will be introduced in the next section.  
\end{example}
\begin{figure}[h!]
    \centering
    \begin{subfigure}[b]{.3\textwidth}
        \centering        
        \includegraphics[width=\textwidth]{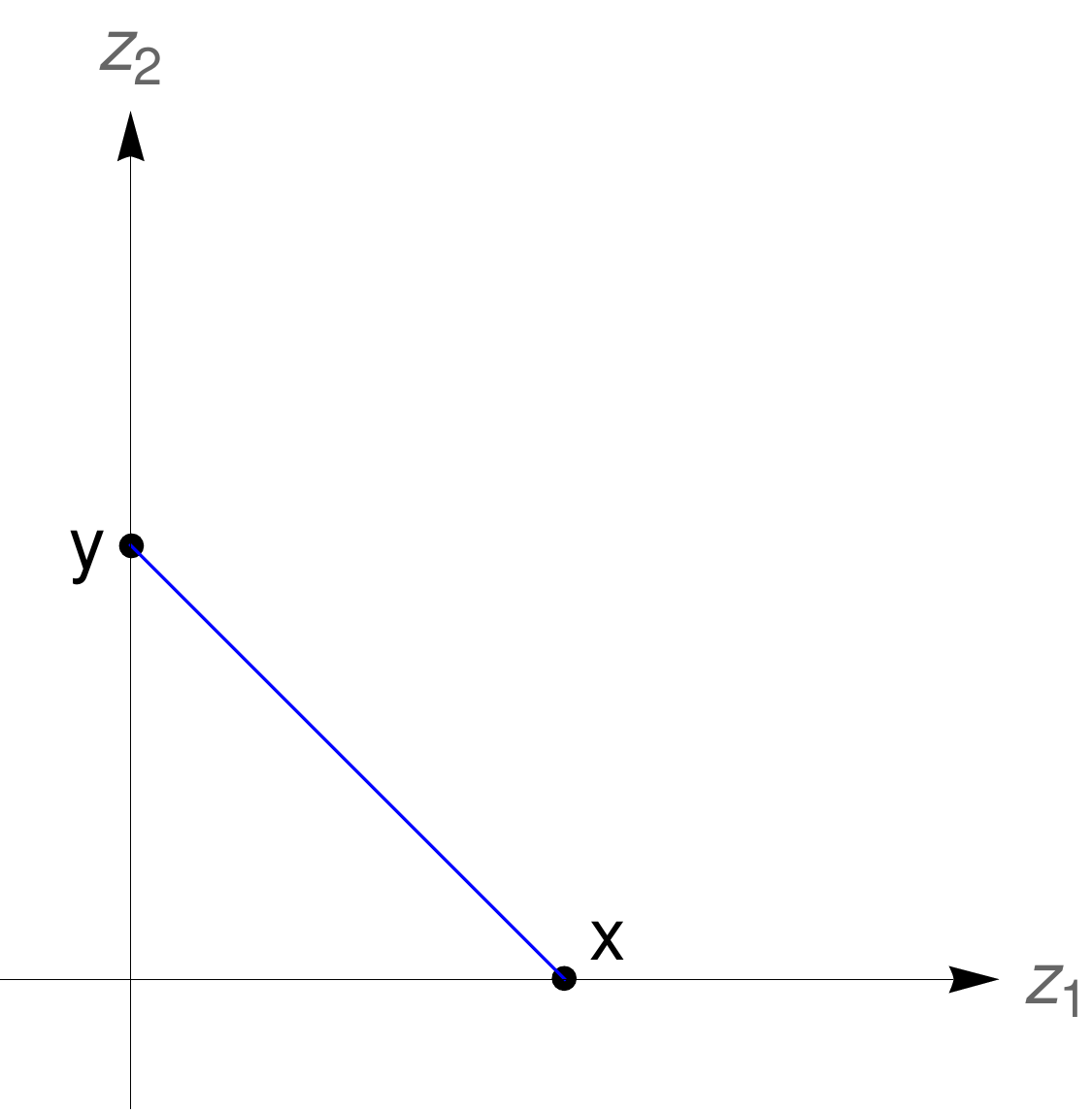}
        \caption{}
    \end{subfigure}%
    \hfill
    \begin{subfigure}{0.3\textwidth}
        \centering
        \includegraphics[width=\textwidth]{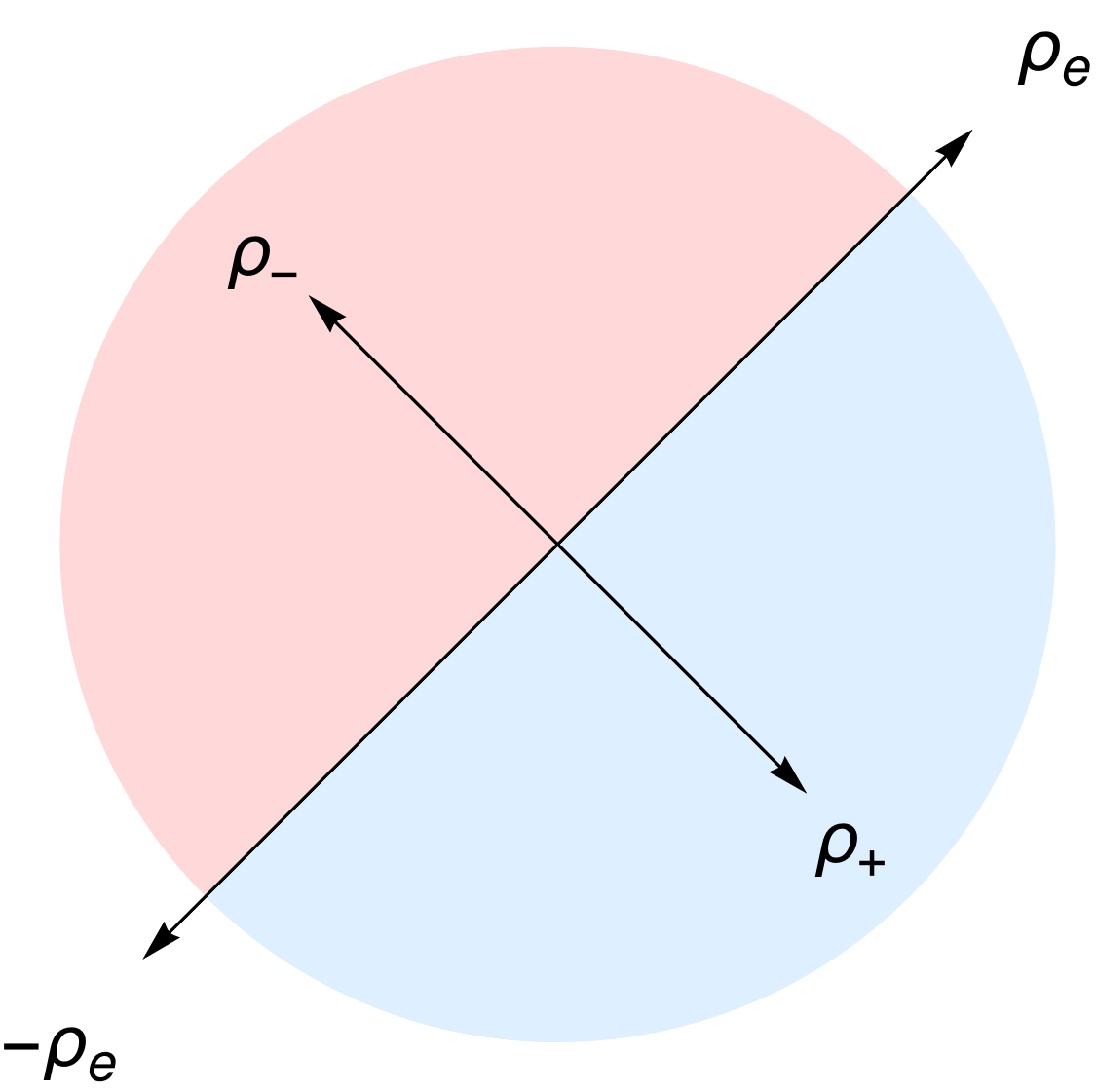}
        \caption{}
    \end{subfigure}
    \hfill
    \begin{subfigure}{0.3\textwidth}
        \centering
        \includegraphics[width=\textwidth]{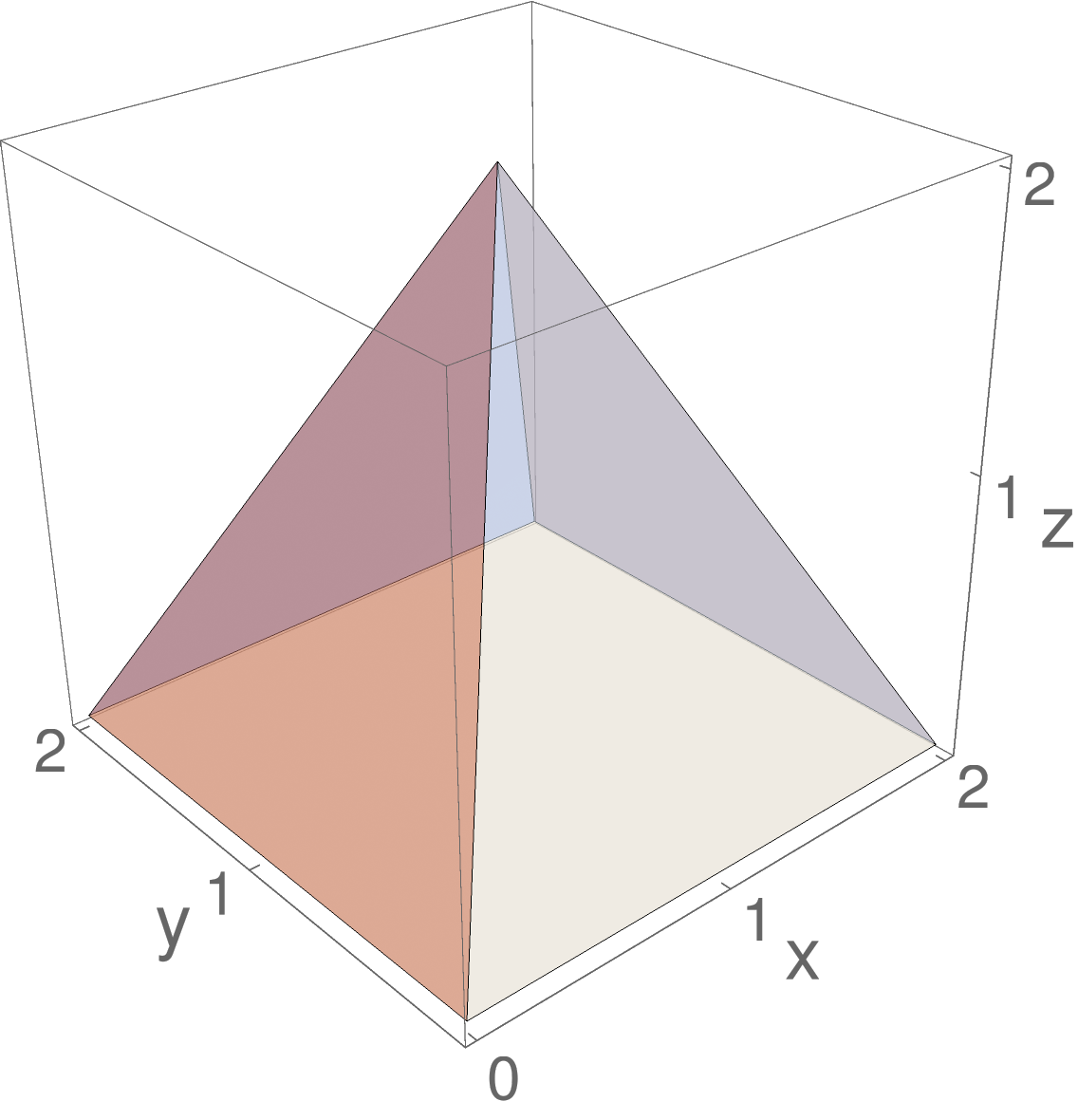}
        \caption{}
    \end{subfigure}
    \caption{The Newton polytope of $P=x+y$ (a) and its normal fan (b). The opposite facets of the pyramid (c) do not meet transversely}
    \label{fig:segmentgeometry}
\end{figure}

\section{Newton Polytopes}
\label{sec:newt}

Consider a Laurent polynomial in the variables $\alpha = (\alpha_1, \dots, \alpha_n)$,
\begin{align}
    P(\alpha) = \sum_{{\bf m} \in \mathbb{Z}^n} s_{\bf m} \alpha^{\bf m}, \quad
    \alpha^{\bf m} \coloneqq \alpha_1^{m_1} \dots \alpha_n^{m_n}.
\end{align}
The \emph{Newton polytope} of $P$ is defined as
\begin{align}
    \mathrm{Newt\ } P \coloneqq \mathrm{Conv\ } \{{\bf m} | {\bf m} \in \mathbb{Z}^n, s_{\bf m} \ne 0  \}.
\end{align}
In other words, the polytope $\mathrm{Newt\ } P$ is the convex hull of the points that appear as exponent vectors of the monomials of $P$.

The Newton polytope plays a central role in the study of Euler integrals, ultimately because its H-presentation contains the data necessary to understand the behavior of the polynomial $P(\alpha)$ under rescaling of the variables $\alpha$.
For any vector $\rho \in \mathbb{R}^n$, consider the rescaling
\begin{align}
    \alpha_i \to \alpha_i \lambda^{-\rho_i}, \quad i = 1, \dots, n,
    \label{eq:rescaling}
\end{align}
which we will also write more compactly as $\alpha \to \alpha \lambda^{-\rho}$.
Plugging \eqref{eq:rescaling} into $P$, and clearing all the denominators in $\lambda$, we get
\begin{align}
    P(\alpha \lambda^{-\rho}) = \lambda^{-T} \tilde{P}(\alpha,\lambda),
\end{align}
for some exponent $T \in \mathbb{R}$, where $\tilde{P}(\alpha,\lambda)$ is a \emph{polynomial} (not a Laurent polynomial) in positive real powers of $\lambda$. 
Clearly, $T$ is equal to the maximum of the scalar products between $\rho$ and the exponent vectors of the monomials of $P$. 
Anticipating the notation of Section \ref{sec:trop}, let us define the piecewise linear function 
\begin{align}
    \mathrm{Trop}\ P : \rho \mapsto \mathrm{Trop} P(\rho) \coloneqq \max_{{\bf m} \in \mathrm{Newt} P}  \rho \cdot {\bf m},
    \label{eq:tropPoly}
\end{align} so that $T = \mathrm{Trop\ } P(\rho)$. 
The polynomial $\tilde{P}$ is given by
\begin{align}
    \tilde{P}(\alpha,\lambda) = \sum_{{\bf m}\in \mathrm{Newt\ }P} s_{\bf m} \alpha^{\bf m} \lambda^{ \mathrm{Trop\ }P (\rho) - \rho \cdot {\bf m}} = P|_\rho(\alpha) + \mathcal{O}(\lambda),
\end{align}
in the second equation we have introduced the \emph{initial form} of $P$ along $\rho$, given by
\begin{align}
    P|_\rho(\alpha) = \sum_{{\bf m} | \rho \cdot {\bf m}  = \rm{Trop\ } P(\rho)} s_{\bf m} \alpha^{\bf m}.
\end{align}
In other words, the initial form is obtained from $P$ by keeping only the monomials whose exponent vector lies on the face $F \in \mathrm{Face\ Newt\ } P$ where the functional $\rho \cdot {\bf z}$ is maximized. Recall that the face $F$ is dual to the cone $\sigma_F \in \mathrm{Fan\ Newt\ }P$ in whose interior lies $\rho$, via \eqref{eq:coneface}.

Together, the value of $\mathrm{Trop\ }P(\rho)$ and the initial form $P|_\rho$ control the leading behavior of $P$ under the rescaling $\alpha \to \alpha \lambda^{-\rho}$, in the limit $\lambda \to 0$, which motivates us to understand them better.
It turns out that they are respectively piecewise linear and piecewise constant on the fan $\Sigma = \rm{Fan\ Newt\ } P$.
To see this, consider a cone $\sigma \in \Sigma$. For any $\rho$ in the interior $\sigma^+$ of the cone, the functional $\rho \cdot {\bf z}$ is always maximized over $\mathrm{Newt\ } P$ at the same face $F_\sigma$ of the Newton polytope. This shows that $P|_\rho$ is constant as $\rho$ varies in $\sigma^+$. Furthermore, over $\sigma$ we can compute $\mathrm{Trop\ }P(\rho)$ as $\rho \cdot {\bf m}$ for \emph{any} ${\bf m}$ on the face $F_\sigma$,
\begin{align}
    \rho \in \sigma \Rightarrow \mathrm{Trop\ } P(\rho) = \rho \cdot {\bf m}, \quad \forall {\bf m} \in F_\sigma,
    \label{eq:troponcone}
\end{align} 
which shows the linearity of $\mathrm{Trop\ } P$ as a function of $\rho \in \sigma$.
Having identified the domains of linearity of the function $\mathrm{Trop\ } P$ with the cones of $\Sigma$, it remains to understand its values there. By piecewise linearity it is sufficient to know the values on the generators of the one-dimensional cones $\Sigma(1)$. Recall that these correspond to the rows of the matrices $\rho_f$ and $\rho_e$ of the H-presentation of $\mathrm{Newt\ } P$. Say that $d_\rho - \rho \cdot {\bf z}$ is one of the affine functions appearing in the presentation, then $\mathrm{Trop\ }P(\rho) = d_\rho$.

Lower dimensional polytopes play a special role in the construction of our subtraction scheme, for the following reason.
Suppose that $d - \rho \cdot {\bf z}$ is an equation of the H-presentation of $\mathrm{Newt\ } P$. Then 
\begin{align}
    P(\alpha \lambda^{-\rho}) = \lambda^{-\mathrm{Trop\ }P(\rho)} P(\alpha).
    \label{eq:factorization}
\end{align}
In other words, after the rescaling \eqref{eq:rescaling} the dependence on $\lambda$ factors perfectly.
Furthermore, 
\begin{align}
    \mathrm{Trop\ }P(\rho + \rho') = \mathrm{Trop\ }P(\rho) + \mathrm{Trop\ }P(\rho'),
    \label{eq:linearity}
\end{align} that is $\mathrm{Trop\ }P$ is linear in the direction of $\rho$.

Let us think of the act of taking initial forms as an operator $P \to P|_\rho$ acting on polynomials.
Suppose that $\rho$ and $\rho'$ are vectors compatible with respect to the fan $\Sigma = \mathrm{Fan\ }\mathrm{Newt\ } P$. Then we have $P|_\rho|_{\rho'} = P|_{\rho'}|_{\rho}$. This is obvious since the initial form $P|_{\rho}$ is given by keeping only the monomials $s_{\bf m} \alpha^{\bf m}$ of $P$ whose exponent vectors ${\bf m}$ are on the face where $\rho$ is maximized. If $\rho$ and $\rho'$ are compatible, they are maximized at a common face, and it does not matter in which order we restrict there. This justifies introducing the notation $P|_{\mathrm{Span}_+ \{\rho,\rho'\}}$.
\begin{example}
    Consider the polynomial $P = 1 + x + y + x y$. Its Newton polytope and normal fan are shown in Fig.\ref{fig:squaregeometry} (a) and (b).
    We have $P|_{(-1,0)}|_{(0,-1)} = (1+y)|_{(0,-1)} = 1 = P|_{(0,-1)}|_{(-1,0)}$.
    On the other hand, $P|_{(-1,0)}|_{(1,0)} = 1+y \ne x(1+y) = P|_{(1,0)}|_{(-1,0)}$.
\end{example}
\begin{figure}
    \centering
    \begin{subfigure}[b]{.3\textwidth}
        \centering        \includegraphics[width=\textwidth]{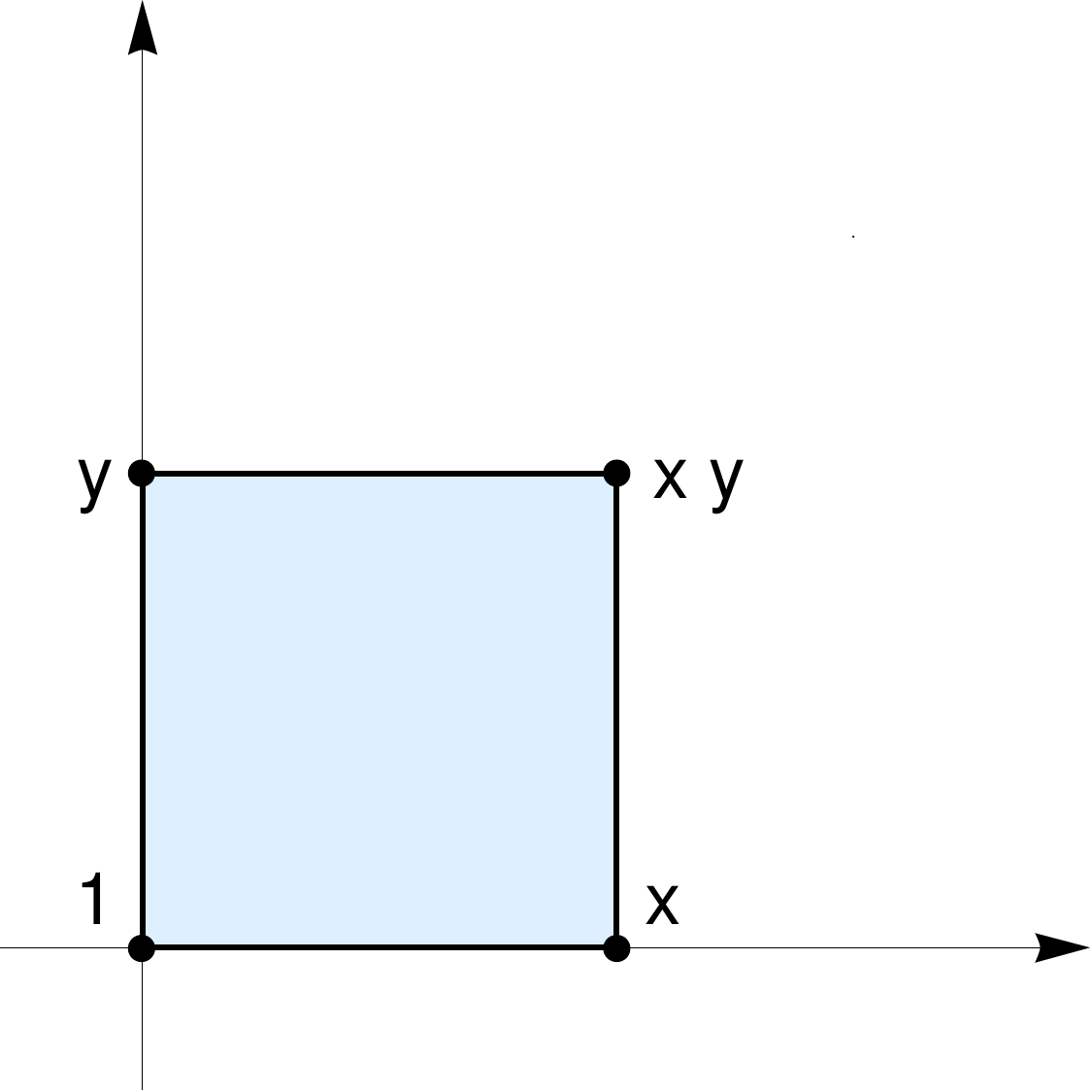}
        \caption{}
    \end{subfigure}%
    \begin{subfigure}{0.3\textwidth}
        \centering
        \includegraphics[width=\textwidth]{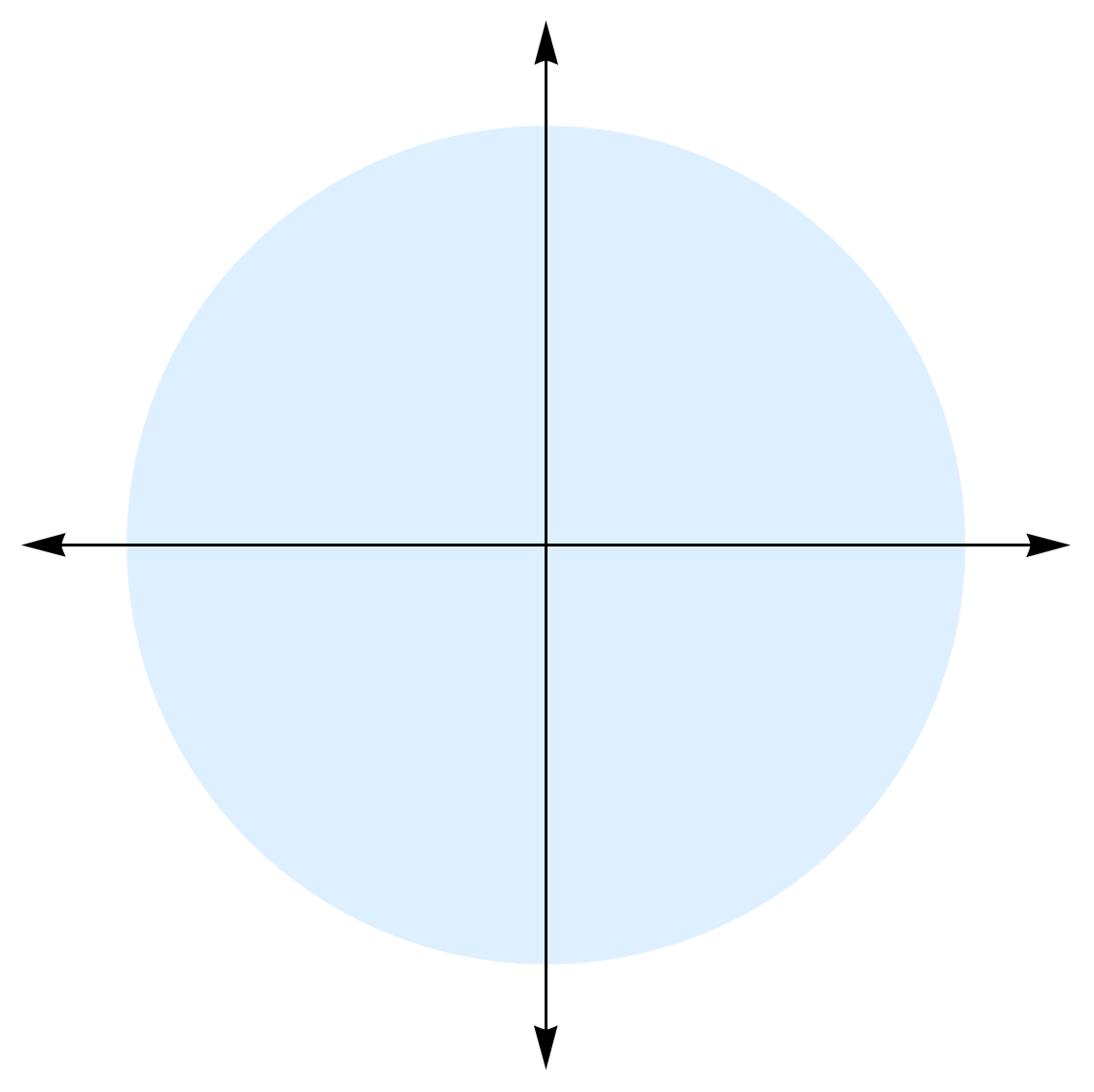}
        \caption{}
    \end{subfigure}
    \caption{The Newton polytope of $1+x+y+x y$ (a) and its normal fan (b). }
    \label{fig:squaregeometry}
\end{figure}

Consider a cone $\sigma$ of the fan $\Sigma = \mathrm{Fan\ Newt\ } P$. The initial form of $P$ with respect to $\sigma$, $P|_\sigma$, is the polynomial obtained from $P$ by keeping only the monomials with exponent vectors on that face $F_\sigma$ of $\mathrm{Newt\ } P$ which is dual to $\sigma$.
In particular, the polytope $\mathrm{Newt\ } P|_\sigma$ is precisely the face $F_\sigma$, and therefore its H-presentation is obtained from that of $\mathrm{Newt\ } P$ by turning finitely many inequalities into equations (and removing any redundant inequalities).
It follows that the tropicalization of $P$ and of its initial forms are tightly related, as expressed by the following lemma.
\begin{lemma}
\label{th:restrictionpoly}
Let $\sigma' \in \Sigma|_\sigma$ be a cone compatible with $\sigma$. Then $\mathrm{Trop\ } P|_\sigma$ coincide with 
$\mathrm{Trop\ } P$ over $\sigma'$ and extends it linearly over $\sigma' + \mathrm{Span\ Rays\ }\sigma$.
\end{lemma}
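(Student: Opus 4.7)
The plan is to use the identification $\mathrm{Newt\ }P|_\sigma = F_\sigma$ (the face of $\mathrm{Newt\ }P$ dual to $\sigma$), so that $\mathrm{Trop}(P|_\sigma)(\rho)=\max_{\mathbf{m}\in F_\sigma}\rho\cdot \mathbf{m}$, and then compare this maximum to $\mathrm{Trop\ }P(\rho)=\max_{\mathbf{m}\in\mathrm{Newt\ }P}\rho\cdot\mathbf{m}$. Throughout, let $\tau\in\Sigma$ denote a cone containing both $\sigma$ and $\sigma'$ (such a $\tau$ exists by compatibility), and let $F_\tau$ be its dual face. The dimension-reversing lattice isomorphism $F\mapsto\sigma_F$ converts the inclusions $\sigma\subseteq\tau$ and $\sigma'\subseteq\tau$ into $F_\tau\subseteq F_\sigma\cap F_{\sigma'}$, a fact that drives both assertions.

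For the first assertion, I would invoke equation \eqref{eq:troponcone}: for $\rho\in\sigma'\subseteq\tau$, the maximum of $\rho\cdot\mathbf{z}$ over $\mathrm{Newt\ }P$ is equal to $\rho\cdot\mathbf{m}$ for any $\mathbf{m}\in F_\tau$. Since $F_\tau\subseteq F_\sigma$, any such $\mathbf{m}$ already lies in $F_\sigma$, so the restricted maximum $\max_{\mathbf{m}\in F_\sigma}\rho\cdot\mathbf{m}$ coincides with the global one. This yields $\mathrm{Trop}(P|_\sigma)(\rho)=\mathrm{Trop\ }P(\rho)$ on all of $\sigma'$.

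For the second assertion, I would exploit the structure of the H-presentation of $F_\sigma$. The rays of $\sigma$ correspond to facets of $\mathrm{Newt\ }P$ containing $F_\sigma$, and those facet inequalities $d_{\rho_r}-\rho_r\cdot\mathbf{z}\ge 0$ degenerate into equations on $F_\sigma$. Hence $\rho_r\cdot\mathbf{m}=d_{\rho_r}$ is constant as $\mathbf{m}$ ranges over $F_\sigma$, and by linearity $\eta\cdot\mathbf{m}=:c_\eta$ is constant on $F_\sigma$ for any $\eta\in\mathrm{Span\ Rays\ }\sigma$, with $c_\eta$ linear in $\eta$. Then for $\rho\in\sigma'$,
\begin{align*}
\mathrm{Trop}(P|_\sigma)(\rho+\eta)=\max_{\mathbf{m}\in F_\sigma}(\rho\cdot\mathbf{m}+c_\eta)=\mathrm{Trop}(P|_\sigma)(\rho)+c_\eta,
\end{align*}
and combining with the first part shows that on $\sigma'+\mathrm{Span\ Rays\ }\sigma$ the function $\mathrm{Trop}(P|_\sigma)$ extends $\mathrm{Trop\ }P|_{\sigma'}$ linearly, with explicit formula $(\rho+\eta)\mapsto(\rho+\eta)\cdot\mathbf{m}$ for any fixed $\mathbf{m}\in F_\tau$.

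The main obstacle I anticipate is nothing technical but rather bookkeeping: keeping the dimension-reversing correspondence straight to guarantee $F_\tau\subseteq F_\sigma$ from $\sigma\subseteq\tau$, and correctly identifying the rays of $\sigma$ with the facets of $\mathrm{Newt\ }P$ that collapse to equations in the H-presentation of the face $F_\sigma$. Once these two identifications are in place, the rest of the lemma is just a careful unfolding of the definitions of $\mathrm{Trop}$ and of the initial form.
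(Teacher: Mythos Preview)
Your proposal is correct and follows essentially the same approach as the paper: both arguments use compatibility to find a common face (your $F_\tau$, the paper's $F_\sigma\cap F_{\sigma'}$) where the maximum is attained, then invoke \eqref{eq:troponcone} for the first assertion, and for the second both use that the rays of $\sigma$ become equations in the H-presentation of $F_\sigma$ to obtain linearity (you spell this out explicitly, the paper cites \eqref{eq:linearity}). The only difference is the level of detail; your introduction of the ambient cone $\tau$ is a harmless repackaging of the same idea.
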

\begin{proof}
    Since $\sigma'$ and $\sigma$ are compatible, the corresponding faces meet at $F_{\sigma} \cap F_{\sigma'}$, which is a face both of $\mathrm{Newt\ }P$ and $\mathrm{Newt\ }P|_\sigma$. 
    For any $\rho \in \sigma$, we can evaluate both tropical functions as in \eqref{eq:troponcone}, choosing any ${\bf m}$ in $F_{\sigma} \cap F_{\sigma'}$, which shows that they are equal.
    The rays of $\sigma$ appear as rows of the matrix $\rho_e$ of the H-presentation of $\mathrm{Newt\ } P|_\sigma$. Therefore $\mathrm{Trop\ }P|_\sigma$ extends linearly over $\sigma' + \mathrm{Span\ Rays\ }\sigma$ by \eqref{eq:linearity}.
\end{proof}

\section{Euler Integrals}
\label{sec:euler}

An \emph{Euler integral} is one of the form
\begin{align}
    I({\bf{s}}; \nu,c) = \int_{\mathbb{R}^{n}_{\ge 0}} \frac{d\alpha}{\alpha} \alpha^{\bf \nu} \prod_{i=1}^m P_j(\alpha, s)^{c_j}, \quad \frac{d\alpha}{\alpha} \coloneqq \frac{d\alpha_1}{\alpha_1} \dots \frac{d\alpha_n}{\alpha_n},
    \label{eq:euler2}
\end{align}
where $P_j(\alpha, s)$ are polynomials in the integration variables with coefficients collectively denoted by $\bf s$,
\begin{align}
    P_j(\alpha, s) = \sum_{{\bf m} \in \mathbb{Z}^n} s_{\bf m} \alpha^{\bf m}.
\end{align}
The exponents $\nu = (\nu_1, \dots, \nu_n)$ and $c = (c_1, \dots, c_m)$ are generic complex parameters. We will be interested in studying the expansion of $I({\bf s}; \nu, c)$ as a function of a parameter $\epsilon$ on which all exponents depend affinely.
In the case of Feynman integrals, $\epsilon$ is the \emph{dimensional regularization} parameter.

When studying a Feynman integral, it is useful to assume the coefficients $\bf{s}$ to be complex.
This requires further care in defining the integral, since the integrand is multi-valued and a prescription to choose branches is necessary.
Throughout this paper, we will assume instead the coefficients to be non-negative. Beyond this case, most of the proofs we offer here are not valid anymore. This is a severe restriction on which we will comment at the end. In the case of Feynman integrals, it is expected that the analysis of the divergences in $\epsilon$, as well as the structure of the subtraction scheme will tipically not be affected by this. 

In \eqref{eq:euler2} we have factored out the canonical form $\frac{d\alpha}{\alpha}$. The reason is that it transforms nicely under certain change of variables to be introduced shortly. By Euler \emph{integrand} we mean what multiplies this measure, i.e. 
\begin{align}
    \mathcal{I}({\bf s},\epsilon) = \alpha^{\bf \nu(\epsilon)} \prod_{i=1}^m P_j(\alpha, s)^{c_j(\epsilon)}.
\end{align}
Let us denote by ${\bf P} = \prod_{j=1}^m P_j$ the product of all polynomials appearing in the Euler integrand $\mathcal{I}$ and define the Newton polytope of the integrand $\mathcal{I}$ to be
\begin{align}
    \mathrm{Newt\ } \mathcal{I} \coloneqq \mathrm{Newt\ } {\bf P}.
\end{align}
The Newton polytope is the central geometrical object in the study of an Euler integral\footnote{For the sake of precision, the most important object is really the collection of the polytopes $\mathcal{P}_j = \mathrm{Newt\ } P_j$, however the difference plays a little role in this paper so we ignore it for simplicity.}, because it controls the behavior of the Euler integrand under rescaling
\begin{align}
    \mathcal{I}(\alpha \lambda^{-\rho}) = \lambda^{-\rm{Trop\ }\mathcal{I}(\rho)} \mathcal{I}(\alpha)|_\rho  (1 + \mathcal{J}(\alpha,\lambda)) \quad \mathrm{with } \quad \mathcal{J(\alpha,\lambda)} = \mathcal{O}(\lambda^a),\ a > 0.
    \label{eq:scalingbehavior}
\end{align}
All the ingredients of \eqref{eq:scalingbehavior} are obvious generalizations of their polynomial counterparts discussed earlier,
\begin{align}
    \mathrm{Trop\ } \mathcal{I} &= \nu \cdot t + \sum_{j=1}^m c_j \mathrm{Trop\ } P_j, \label{eq:tropEuler} \\
    \mathcal{I}|_\rho(\alpha) &= \prod_{j=1}^m (P_j|_\rho)^{c_j}, \\
    (1+\mathcal{J}(\alpha,\lambda)) &= \prod_{j=1}^m(1+\mathcal{J}_j(\alpha,\lambda))^{c_j}.
\end{align}
The fan $\Sigma$ is a common refinement of all fans $\mathrm{Fan\ } \mathrm{Newt\ } P_j$. It follows that $\mathrm{Trop\ } \mathcal{I}$ is piecewise linear over $\Sigma$ and that $\mathcal{I}|_\rho|_{\rho'} = \mathcal{I}|_{\rho'}|_{\rho}$ if the rays $\rho$ and $\rho'$ are compatible with respect to $\Sigma$. This justifies the notation $\mathcal{I}|_{\mathrm{Span}_+ \{\rho,\rho'\}}$. 
As an immediate consequence of the above definitions, and of lemma \ref{th:restrictionpoly} we have
\begin{lemma}
\label{th:restrictionint}
Let $\sigma' \in \Sigma|_\sigma$ be a cone compatible with $\sigma$. Then $\mathrm{Trop\ } \mathcal{I}|_\sigma$ coincide with 
$\mathrm{Trop\ } \mathcal{I}$ over $\sigma'$ and extends it linearly over $\sigma' + \mathrm{Span\ Rays\ }\sigma$.
\label{th:restrictionint}.
\end{lemma}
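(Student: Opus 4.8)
The plan is to reduce Lemma \ref{th:restrictionint} directly to its polynomial analogue, Lemma \ref{th:restrictionpoly}, using the additive structure of $\mathrm{Trop\ }\mathcal{I}$ exhibited in \eqref{eq:tropEuler}. First I would recall that $\mathrm{Newt\ }\mathcal{I} = \mathrm{Newt\ }{\bf P}$ with ${\bf P} = \prod_j P_j$, and that the fan $\Sigma = \mathrm{Fan\ Newt\ }\mathcal{I}$ is the common refinement of the fans $\Sigma_j = \mathrm{Fan\ Newt\ }P_j$. Since $\sigma \in \Sigma$ and $\sigma' \in \Sigma|_\sigma$, both are contained in cones of each $\Sigma_j$; concretely, for each $j$ there is a cone $\sigma_j \in \Sigma_j$ with $\sigma \subseteq \sigma_j$, and $\sigma'$ is compatible with $\sigma_j$ in $\Sigma_j$. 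This lets us apply Lemma \ref{th:restrictionpoly} to each polynomial $P_j$: over $\sigma'$, the function $\mathrm{Trop\ }(P_j|_{\sigma})$ agrees with $\mathrm{Trop\ }P_j$ (note $P_j|_\sigma = P_j|_{\sigma_j}$, the initial form along the face of $\mathrm{Newt\ }P_j$ dual to $\sigma_j$), and it extends this linearly over $\sigma' + \mathrm{Span\ Rays\ }\sigma$.

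Next I would assemble these statements. By the definition of initial forms of the integrand, $\mathcal{I}|_\sigma = \prod_j (P_j|_\sigma)^{c_j}$, and its Newton polytope is the face $F_\sigma$ of $\mathrm{Newt\ }\mathcal{I}$ dual to $\sigma$. Applying \eqref{eq:tropEuler} to $\mathcal{I}|_\sigma$ gives
\begin{align}
    \mathrm{Trop\ }(\mathcal{I}|_\sigma) = \nu \cdot t + \sum_{j=1}^m c_j\, \mathrm{Trop\ }(P_j|_\sigma).
    \label{eq:tropIsigma}
\end{align}
Comparing with $\mathrm{Trop\ }\mathcal{I} = \nu \cdot t + \sum_j c_j \mathrm{Trop\ }P_j$ term by term: the linear piece $\nu \cdot t$ is globally linear, hence trivially agrees and extends linearly everywhere; and each $c_j\,\mathrm{Trop\ }(P_j|_\sigma)$ agrees with $c_j\,\mathrm{Trop\ }P_j$ over $\sigma'$ and extends it linearly over $\sigma' + \mathrm{Span\ Rays\ }\sigma$ by Lemma \ref{th:restrictionpoly}. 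A finite non-negative (indeed arbitrary) linear combination of functions that each agree on $\sigma'$ and each extend linearly over the larger cone again agrees on $\sigma'$ and extends linearly over that cone — linearity is preserved under sums, and the common domain of linearity $\sigma' + \mathrm{Span\ Rays\ }\sigma$ is the same for every $j$ because $\mathrm{Rays\ }\sigma$ is the same set. This gives the claim.

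The only genuinely delicate point — and the one I would spell out carefully — is the bookkeeping between the refined fan $\Sigma$ and the individual fans $\Sigma_j$: one must check that $P_j|_\sigma$ (the initial form with respect to the cone $\sigma$ of the refined fan) really equals an initial form with respect to a single cone of $\Sigma_j$, so that Lemma \ref{th:restrictionpoly} is applicable verbatim, and that the compatibility of $\sigma'$ with $\sigma$ in $\Sigma$ descends to compatibility in each $\Sigma_j$. Both follow because a cone of a common refinement sits inside a unique minimal cone of each factor fan, and the face of $\mathrm{Newt\ }P_j$ selected by $\rho \in \sigma^+$ depends only on which cone of $\Sigma_j$ contains $\sigma$. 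Once this is observed, the lemma is an immediate consequence of \eqref{eq:tropEuler} and Lemma \ref{th:restrictionpoly} applied factor by factor, together with the remark that $\mathrm{Rays\ }\sigma$, and hence the extension domain $\sigma' + \mathrm{Span\ Rays\ }\sigma$, is identical for the integrand and for each constituent polynomial. I expect no real obstacle beyond this notational reconciliation; the statement is, as the text says, an immediate consequence of the polynomial case.
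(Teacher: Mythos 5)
Your proposal is correct and matches what the paper has in mind: the paper does not actually write out a proof, asserting the lemma to be ``an immediate consequence of the above definitions, and of lemma \ref{th:restrictionpoly}'', and your argument is precisely the natural expansion of that assertion — additively decompose $\mathrm{Trop\ }\mathcal{I}$ via \eqref{eq:tropEuler}, apply the polynomial lemma factor by factor, and observe that the linear piece $\nu\cdot t$ is globally linear. You also rightly flag the one genuine subtlety, that $\sigma$ and $\sigma'$ live in the common refinement $\Sigma$ rather than the individual $\Sigma_j$; the clean way to close this is to pass to the minimal cones $\sigma_j,\sigma'_j \in \Sigma_j$ containing $\sigma,\sigma'$ (which are compatible in $\Sigma_j$ because any $\tau\in\Sigma$ witnessing compatibility of $\sigma,\sigma'$ sits inside some $\tau_j\in\Sigma_j$), apply Lemma \ref{th:restrictionpoly} there, and then restrict, using $\sigma'+\mathrm{Span\ Rays\ }\sigma \subseteq \sigma'_j + \mathrm{Span\ Rays\ }\sigma_j$. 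With that small tightening your write-up is complete.
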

In the study of Euler integrals a special role is played by \emph{monomial changes of variables}, of the form
\begin{align}
    \alpha_i = \prod_{j=1}^n t_j^{- M_{i,j}},
    \label{eq:monomialChange}
\end{align}
or $\alpha = t^{-M}$ in compact notation. In order for the above to define a proper change of variables, we assume that $\det M$ is nonzero.
Therefore, the matrix $-M$ defines a bijective linear map $\phi : \mathbb{R}^n \to \mathbb{R}^n$. The exponent vectors of a monomial changes according to
\begin{align}
    \alpha^{\bf m} \coloneqq \prod_{i=1}^n \alpha_i^{m_i} \to \prod_{i=1}^n \prod_{j=1}^n t_j^{-m_i M_{i,j}} = \prod_{j=1}^n t_j^{-{\bf m} \cdot M},
\end{align}
so that the polytope $\mathcal{P} = \mathrm{Newt\ } \mathcal{I}(\alpha)$ transforms to $\phi(\mathcal{P}) = \mathrm{Newt\ } \mathcal{I}(\alpha(t))$.
If we write $M$ in terms of its columns as $M=(\rho_1| \dots|\rho_n)$, we can think of the change $\alpha = t^{-M}$ as a simultaneous rescaling $\alpha \to \alpha \prod_{j=1}^n t_j^{-\rho_j}$, followed by fixing $\alpha = 1$.
Therefore we can understand the behavior of $\mathcal{I}$ under the change in the same way as we do under rescalings, with all the necessary information encoded in the H-presentation and face lattice of $\mathrm{Newt\ }\mathcal{I}$.

Let us illustrate some essential properties of $I({\bf s}; \epsilon)$ that are immediately read off from the geometrical data of the Newton polytope $\mathcal{P} = \mathrm{Newt\ } \mathcal{I}$.
Suppose that the H-presentation of $\mathcal{P}$ contains the equation $\{d_\rho - \rho \cdot t = 0\}$.
Then the integral that defines the function $I({\bf s}, \epsilon)$ is divergent for any choice of $\epsilon$.
To see this, consider a change of variable $\alpha \to t^{-M}$, using any matrix $M$ having $\rho$ as the first column.
The integral transforms to
\begin{align}
    I({\bf s};\epsilon) = \int_0^\infty \frac{dt_1}{t_1} t_1^{-\mathrm{Trop\ }\mathcal{I}(\rho)}\times \int_{\mathbb{R}^{n-1}} \frac{d (t_2, \dots, t_n)}{(t_2,\dots,t_n)}\mathcal{I}(t_2, \dots, t_n),
\end{align}
note how the integral factorizes due to ${d_\rho - \rho \cdot \alpha = 0}$ being an equation of the H-presentation of $\mathrm{Newt\ } \mathcal{I}$, cfr. with \eqref{eq:factorization}.
The integral factor 
\begin{equation}
    \int_0^\infty \frac{dt_1}{t_1} t_1^{-\mathrm{Trop\ }\mathcal{I}(\rho)},
\end{equation}
is always divergent, regardless of the value of $\mathrm{Trop\ }\mathcal{I}$ on $\rho$.
However, we can give a meaning to it by breaking up the domain of integration as $\mathbb{R} = [0,1] \cup [1,\infty]$. In each domain the integral is convergent for one of the two choices of signs for $z=\mathrm{Trop\ }\mathcal{I}(\rho)$, evaluating to $\pm z$, so that the two contributions sum up to zero\footnote{This is gives rise to interesting phenomena in physical computations, with poles in $\epsilon$ due to infrared and ultraviolet divergences cancelling each other, $\frac{1}{\epsilon_{\rm UV}}-\frac{1}{\epsilon_{\rm IR}} = 0$.}!
In summary, integrands with lower dimensional Newton polytope - which are called \emph{scaleless} in the physics literature - give divergent integral representation of $0$.
Let us now consider an integrand whose Newton polytope is $n$-dimensional. The integral $I({\bf s},\epsilon)$ may of course still be divergent. The positivity assumptions for the coefficients ${\bf s}$ implies that divergence may only originate from regions around the boundaries of the domain of integration, such regions can be exposed by monomial changes of variables. 
If $\mathrm{Trop\ }\mathcal{I}(\rho) \ge 0$ on some ray then the integral is divergent, which can be see again by performing a monomial change of variables associated to a matrix $M$ having $\rho$ among its columns.
The converse is true, if $\mathrm{Trop\ } \mathcal{I} < 0$ for some choice of $\epsilon$, then the integral is convergent \cite{Arkani-Hamed:2022cqe,arkanihamed2020binary}. The proof of this statement is slightly harder, and requires one of the two algorithms which will be reviewed in the rest of this section.

\begin{remark*}
    At this point it should be clear that understanding the facet presentation of the Newton polytope of the Symanzik polynomials $\mathcal{U}$ and $\mathcal{F}$, that appear in the parametric form of Feynman integrals, is a mathematical problem of great physical interest.
    Currently, this presentation is not available if not in special cases \cite{Schultka:2018nrs}. In particular, this means that there is no known \emph{purely graph theoretical} algorithm - i.e. one that avoids solving a linear programming problem - to establish the convergence of an arbitrary Feynman integral.
    \label{rmk:UFpolytopes}
\end{remark*}

The geometrical data of the Newton polytope of $\mathcal{I}$ contains much more sophisticated information about the Euler integral, that allows to formulate two algorithms that we review next: \emph{Nilsson-Passare} analytical continuation and \emph{Sector Decomposition}.

\subsection{Nilsson-Passare analytical continuation}

The \emph{Nilsson-Passare} analytical continuation is an alternative approach to compute the Laurent expansion of an Euler integral with respect to the exponent variables.

Let $\mathcal{I}(\alpha;\epsilon)$ be an Euler integrand with a regulated divergence along some direction $\rho$, that is $\mathrm{Trop\ }\mathcal{I} (\rho) = a + \mathcal{O}(\epsilon)$ with $a\ge 0$.
Consider the shifted Euler integral
\begin{align}
    I(\lambda;\epsilon) = \int_{\mathbb{R}^n_{\ge0}} \frac{d\alpha}{d\alpha} \mathcal{I}(\alpha \lambda^{-\rho};\epsilon),
\end{align}
which reduces to the original integral $I(\epsilon)$ for $\lambda=1$.
In fact, $\frac{d}{d\lambda}I(\lambda;\epsilon)=0$, because the dependence on $\lambda$ can be eliminated at the integrand level by the rescaling $\alpha \to \alpha\lambda^{\rho}$ which leaves invariate both the measure and the integration domain\footnote{In this context, $\lambda$ should be thought of as a parameter, not an integration variable.}.
We can therefore write
\begin{align}
    0 = \left.\left(\frac{d}{d\lambda}I(\lambda;\epsilon)\right)\right|_{\lambda=1} = \int_{\mathbb{R}^n} \frac{d\alpha}{\alpha}\left.\left(\frac{d}{d\lambda} \mathcal{I}(\alpha\lambda^{-\rho};\epsilon)\right)\right|_{\lambda=1} = -\mathrm{Trop\ \mathcal{I}}(\rho) I(\epsilon) + \sum_{i} I_{i}(\epsilon),
\end{align}
or equivalently,
\begin{align}
   I(\epsilon) =  \frac{1}{\mathrm{Trop\ \mathcal{I}}(\rho)} \left(\sum_{i} I_{i}(\epsilon) \right),
\end{align}
where $I_i(\epsilon)$ are again Euler integrals.
It turns out that the integrands $\mathcal{I}_i$ for the integrals $I_i$ satisfy $\mathrm{Trop\ }\mathcal{I}_i \le \mathrm{Trop\ } \mathcal{I}$, with the inequality being strict when both sides are evaluated on the ray $\rho$.
This can be seen by considering how the facet $\rho$ of the Newton polytope of $\mathcal{I}(\alpha \lambda^{-\rho};\epsilon)$ is ``shaved'' by the action of $\frac{d}{d\lambda}$.
By iterating this procedure sufficiently many times, we can express $I(\epsilon)$ as a linear combination of \emph{convergent} Euler integrands multiplied by explicit poles in $\epsilon$\footnote{We note \emph{en-passant} that this procedure proves that Euler integrals have a meromorphic dependence on the exponent variables}.

As anticipated in the introduction, the integrals $I_i$ have the same dimension as $I(\epsilon)$, therefore the Nilsson-Passare approach fails to capture the simplification intrinsic to the computation of the poles in $\epsilon$.
\begin{example}
    Consider the following Euler integrand,
    \begin{align}
        \mathcal{I}(\epsilon) = \left(\frac{\alpha}{1+\alpha}\right)^{\epsilon X_{13}} \left(\frac{1}{1+\alpha}\right)^{1+\epsilon X_{24}},
    \end{align}
    it is logarithmically divergent along the rays $\rho = (- 1)$. The Nilsson-Passare analytical continuation along $\rho$ gives \begin{align}
       I(\alpha) &= \frac{1+(X_{13}+X_{24})\epsilon}{\epsilon X_{13}}\int_{0}^\infty \frac{d\alpha}{\alpha} \alpha^{1+X_{13}\epsilon}(1+\alpha)^{-2-(X_{13}+X_{24})\epsilon} \\
       &= \frac{1}{\epsilon X_{13}} - \frac{(X_{13}+X_{24})^2\epsilon}{X_{13}} + \mathcal{O}(\epsilon^2).
    \end{align}
    \label{ex:np}
\end{example}

Note that the \emph{Nilsson-Passare} analytical continuation can be also used to express a power divergent integral, i.e. one with $\mathrm{Trop\ } \mathcal{I} = a + \mathcal{O}(\epsilon)$ with $a>0$ on some ray $\rho$, in terms of logarithmically divergent integrals, i.e. with $\mathrm{Trop\ } \mathcal{I} = \mathcal{O}(\epsilon)$.

\subsection{Sector Decomposition}

We now describe a well-known algorithm to decompose a Euler integral into integrals with a simpler singularity structure.

Let $\mathcal{I}$ be an Euler integrand and $\Sigma = \mathrm{Fan\ } \mathrm{Newt\ } \mathcal{I}$ the associated fan. If $\Sigma$ is not simplicial, replace it by a simplicial refinement thereof.
The cones of $\Sigma$ provide a triangulation of $\mathbb{R}^n$, which can be identified with the domain of integration of a Euler integrand through the coordinate-wise logarithm map $\alpha \to \rm{Log}(\alpha)$.
Therefore, we can write the Euler integral as follows 
\begin{align}
    I(\epsilon) = \sum_{\sigma \in \Sigma} \int_{\sigma} \frac{d\alpha}{\alpha}\ \mathcal{I}(\epsilon)
\end{align}
We can map each cone to a cube by applying a linear transformation to the positive hyperoctant, followed by the coordinate-wise exponential map.
This can be described directly as a monomial map in the original variables $\alpha$.
Let $\sigma = \mathrm{Span}_+ \{\rho_1, \dots, \rho_n\}$ be a top-dimensional cone of $\Sigma$, list its rays as column vectors of a matrix $M_\sigma = (\rho_1, \dots, \rho_n)$ and consider the monomial change of variables $\alpha = t^{-M_\sigma}$.
The vectors $\{\rho_1, \dots, \rho_n\}$ are compatible with respect to each of the fans $\mathrm{Fan\ } \mathrm{Newt\ } P_j$, therefore the initial forms $P_j|_\sigma$ are all equal to some constant $s_{{\bf m}_j,\sigma}$ (which we take to be $1$ for simplicity).
Therefore, we find
\begin{align}
    I(\epsilon) = \sum_{\sigma \in \Sigma} \det(M_\sigma)\int_{[0,1]^n} \frac{dt}{t} t^{-\mathrm{Trop}\mathcal{I}(\sigma)}\prod_{j=1}^m (1+R_j(t))^{c_j}.
    \label{eq:secdec}
\end{align}
We will refer to the types of integrals appearing on the RHS of \eqref{eq:secdec} as \emph{sector integrals}.
\begin{example}
    Consider again the integral of \ref{ex:np}. The fan of the integrand has two cones, resulting in the decomposition
    \begin{align}
        I = \int_0^1 \frac{dt}{t} t^{\epsilon X_{13}}(1+t)^{-2-\epsilon(X_{13}+ X_{24})} + \int_0^1 \frac{dt}{t} t^{1+X_{24}\epsilon}(1+t)^{-1-\epsilon(X_{13}+X_{24})}.
    \end{align}
\end{example}
Sector integrals have a very special form.
Compared to a Euler integral, the first difference is that the domain of integration is a hypercube, so there is no reason to worry about the behavior of the integrand at infinity. 
But most importantly, \emph{the local behavior at the origin is completely manifested} by the monomial $t^{\bf \nu}$, because the polynomials $1+R_j$ do not vanish at the origin.
Note that both properties are preserved under refinement of $\Sigma$, that is if we were to use any refinement of $\Sigma$ to perform the decomposition.

Because of these properties, understanding local finiteness for sector integrals is much simpler than for Euler integrals.
For every variable $t_i$, let us define the operator $\tau_{t_i}$ which acts on a sector integrand by stripping the monomial $t_i^{\nu_i}$, performs a Taylor expansion of the rest, and then multiplies back in $t_i^{\nu_i}$:
\begin{align}
    \tau_{t_i} t^\nu \prod_{j=1}^m (1+R_j)^{c_j} \coloneqq t^\nu \tau_{t_i} \prod_{j=1}^m (1+R_j)^{c_j},
\end{align}
The result of this operation is a Puiseux power series, i.e. one that involves real exponents.
Due to the non-vanishing of the polynomials $(1+R_j)$ at the origin, the operators $\tau_{t_i}$ commute with each other. This follows by standard results on Taylor series of multi-variate smooth functions.
It follows that a sector integrand admits a well-defined Puiseux power series expansion, which can be obtained by applying sequentially the operators $\tau_{t_i}$ in any order,
\begin{align}
    \mathcal{I} = \tau_{t_1} \dots \tau_{t_n} \mathcal{I} = t^\nu \left(\tau_{t_1} \dots \tau_{t_n} \prod_{j=1}^m (1+R_j)^{c_j}\right) = t^\nu \left(\sum_{{\bf m} \in \mathbb{N}^n} t^{\bf m} {\phi}_{\bf m}(\epsilon)\right).
    \label{eq:expansion}
\end{align}
Let us consider the interplay between expansions in $\epsilon$ and in $t_i$ next.
Suppose that exponents $c_j(\epsilon) = a_j + \mathcal{O}(\epsilon)$. Regardless of the value of $a_j$, we have that $(1+R_j)^{c_j(0)}$ is non-vanishing and therefore all mixed derivatives in $\epsilon$ and in any $t_j$ are well defined. It follows that the operators $\tau_\epsilon$ and $\tau_{t_i}$ commute when acting on $(1+R_j)^{c_j}$.
Therefore the expansion in $\epsilon$ of a sector integrand is given by
\begin{align}
    \tau_\epsilon \mathcal{I} = (\tau_\epsilon t^\nu)\times\left(\tau_\epsilon \prod_{j=1}^m (1+R_j)^{c_j}\right) = (\tau_\epsilon t^\nu) \times \left( \sum_{{\bf m} \in \mathbb{N}^n}t^{{\bf m}} \tau_\epsilon \phi_{{\bf m}}(\epsilon)\right).
    \label{eq:epsexp}
\end{align}

The operators $\tau_{t_i}$, which are defined on individual sector integrands, can be extended by linearity on combination of sector integrands, $\tau_{t} \sum_i \mathcal{I}_i \coloneqq \sum_i \tau_t \mathcal{I}_i$, this is consistent because $\tau_t t^{n} \mathcal{I} = t^n \tau_t \mathcal{I}$, for any integer $n \in \mathbb{Z}$.
All the properties just discussed trivially generalize.

We will say that a combination $\mathcal{I} = \sum_i \mathcal{I}$ of Euler integrands is \emph{renormalized} if  $\tau_{t_i} \mathcal{I} = \mathcal{O}(t)$ for every variable $t_i$. This leads us to the first result on local finiteness
\begin{theorem}
\label{th:localfinite}
If a sector integrand $\mathcal{I}$ is renormalized, then it is locally finite, that is
\begin{align}
    \tau_\epsilon \int_{[0,1]^n} \frac{dt}{t}\mathcal{I} = \int_{[0,1]^n} \frac{dt}{t} \tau_\epsilon \mathcal{I}.
\end{align}
\end{theorem}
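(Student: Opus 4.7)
The plan is to leverage the structure of a sector integrand near the boundary of $[0,1]^n$, use the renormalization hypothesis to rule out non-integrable monomials in the boundary expansion, and then apply a dominated-convergence argument to interchange $\tau_\epsilon$ with the integral.

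First, I would exploit the positivity assumption on the coefficients ${\bf s}$: since each polynomial $1 + R_j(t)$ is bounded below by $1$ on the compact cube $[0,1]^n$, the smooth factor $g(t,\epsilon) := \prod_{j=1}^m (1+R_j(t))^{c_j(\epsilon)}$ is $C^\infty$ in $t$ on $[0,1]^n$ and jointly holomorphic in $\epsilon$ on some disc $D$ around the origin. The only possible obstruction to $\int_{[0,1]^n} \frac{dt}{t} \mathcal{I}$ being well-defined therefore comes from the behaviour of the monomial factor $t^{\nu}$ on the coordinate hyperplanes $\{t_i = 0\}$.

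Second, for every subset $I \subseteq \{1,\dots,n\}$ I would Taylor-expand $g$ jointly in the variables $\{t_i\}_{i \in I}$ to some finite order $K$, writing $g$ as a finite polynomial in $t_I$ plus a smooth remainder of order $|t_I|^K$. Because the single-variable operators $\tau_{t_i}$ commute on sector integrands (as noted below \eqref{eq:expansion}), the joint Puiseux expansion produced by $\prod_{i \in I} \tau_{t_i}$ is well-defined. Its potentially non-integrable monomials against $dt/t$ are precisely those $t^{\nu + {\bf m}}$ with $\Re(\nu_i + m_i) \le 0$ for at least one $i \in I$. The hypothesis $\tau_{t_i} \mathcal{I} = \mathcal{O}(t)$ for every $i$, combined with this commutativity, forces each such coefficient to vanish identically: renormalization in every individual direction collectively eliminates every monomial that would obstruct convergence along any boundary stratum.

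Third, once all bad monomials have been removed, the integral $\int_{[0,1]^n} \frac{dt}{t}\mathcal{I}(t,\epsilon)$ is absolutely convergent uniformly for $\epsilon$ in a small disc $D' \subseteq D$, and the same estimate applies to every partial derivative $\partial_\epsilon^k \mathcal{I}$, which differs from $\mathcal{I}$ only by bounded factors of $\log(1+R_j)$ together with polynomials in $\log t_i$, all of which remain integrable against the improved power behaviour at the boundary. Hence $I(\epsilon) = \int_{[0,1]^n} \frac{dt}{t}\mathcal{I}$ is holomorphic on $D'$, and dominated convergence yields
\begin{equation*}
    \partial_\epsilon^k I(0) = \int_{[0,1]^n} \frac{dt}{t}\, \partial_\epsilon^k \mathcal{I}(t, 0),
\end{equation*}
which is the term-by-term equality of Taylor series asserted by the theorem. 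The main obstacle is the second step: faithfully translating the one-variable hypothesis on each $\tau_{t_i}$ into the vanishing of all simultaneously bad monomials in the joint multi-variable boundary expansion. This is precisely where the commutativity of the $\tau_{t_i}$ on sector integrands (a consequence of the non-vanishing of $1 + R_j$ at the origin) is decisive, and where one must keep careful track of monomials $t^{\nu + {\bf m}}$ having several small exponents that can arise from joint Taylor data along intersecting boundary faces.
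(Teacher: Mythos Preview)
Your proposal is correct and follows essentially the same route as the paper: reduce Theorem~\ref{th:localfinite} to showing that each $\int_{[0,1]^n}\frac{dt}{t}\,\partial_\epsilon^k\mathcal{I}|_{\epsilon=0}$ converges, use the positivity assumption to localize the problem at the coordinate boundary, separate the Puiseux expansion of $\mathcal{I}$ into a singular polynomial part and a remainder of sufficiently high order, invoke the renormalization hypothesis to kill the singular part, and observe that the remainder multiplied by the logarithmic factors coming from $\tau_\epsilon t^{\nu}$ is integrable. The only cosmetic difference is that you organize the boundary analysis by strata indexed by subsets $I\subseteq\{1,\dots,n\}$, whereas the paper works directly with the full Taylor expansion at the origin; the commutativity of the $\tau_{t_i}$ makes these two bookkeeping schemes equivalent, and you correctly identify this as the crux of the argument.
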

To prove this, we take derivatives of the sector integral with respect to $\epsilon$ and bring them under integral sign. We take for granted that if the resulting integrals are convergent, than this is allowed.
The heart of the matter is then to prove the following theorem.
\begin{theorem}
\label{th:localfinite2}
Let $\mathcal{I} = \sum_i \mathcal{I}_i$ be a renormalized sector integrand, then the integrals
\begin{align}
    \int_{[0,1]^n} \frac{dt}{t} \left.\frac{\partial^k}{(\partial \epsilon)^k} \mathcal{I}\right|_{\epsilon=0},
\end{align}
are convergent.
\end{theorem}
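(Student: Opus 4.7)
The plan is to localize the integrability question to a neighbourhood of each boundary component $\{t_i=0\}$ of the cube, and to combine the renormalization hypothesis with the commutativity of $\tau_\epsilon$ and $\tau_{t_i}$ (established in the paragraph preceding the theorem) to show that near each such face the integrand vanishes with a strictly positive leading $t_i$-power.

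First, for each single sector integrand $\mathcal{I}_i=t^{\nu^{(i)}(\epsilon)}\prod_j(1+R_j^{(i)})^{c_j^{(i)}(\epsilon)}$, the generalized product rule expresses $\partial_\epsilon^k\mathcal{I}_i|_{\epsilon=0}$ as a finite sum of terms
\begin{equation*}
t^{\nu^{(i)}(0)}\,\prod_j(1+R_j^{(i)})^{c_j^{(i)}(0)}\cdot P^{(i)}_{k,a}\!\bigl(\log t_1,\dots,\log t_n,\log(1+R_1^{(i)}),\dots\bigr),
\end{equation*}
where $P^{(i)}_{k,a}$ is a polynomial of total degree at most $k$ whose coefficients depend on the derivatives of the exponents at $\epsilon=0$. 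Because every $R_j^{(i)}$ vanishes at the origin, each factor $(1+R_j^{(i)})^{c_j^{(i)}(0)}$ and $\log(1+R_j^{(i)})$ is real-analytic and bounded on the closed cube. Hence the sole source of a potential divergence of $\int\frac{dt}{t}\,\partial_\epsilon^k\mathcal{I}|_{\epsilon=0}$ is the behaviour of the monomial $t^{\nu(0)}$ and of the logarithmic factors $(\log t_i)^p$ near the coordinate hyperplanes.

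Second, fix a variable $t_i$. Extending the identity $\tau_\epsilon\,\tau_{t_i}=\tau_{t_i}\,\tau_\epsilon$ by linearity to $\mathcal{I}=\sum_i\mathcal{I}_i$ and invoking the renormalization hypothesis gives
\begin{equation*}
\tau_{t_i}\,\tau_\epsilon\,\mathcal{I}\;=\;\tau_\epsilon\,\tau_{t_i}\,\mathcal{I}\;=\;\tau_\epsilon\bigl(\mathcal{O}(t)\bigr)\;=\;\mathcal{O}(t).
\end{equation*}
Reading this order-by-order in $\epsilon$, we learn that near $t_i=0$ the function $\partial_\epsilon^k\mathcal{I}|_{\epsilon=0}$ admits a Puiseux--logarithmic expansion each of whose terms has the form $t_i^{\alpha}(\log t_i)^p\,h(t_{j\neq i})$ with $\mathrm{Re}(\alpha)\geq 1$ and $h$ smooth in the remaining variables. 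Since $\int_0^1\frac{dt_i}{t_i}\,t_i^\alpha(\log t_i)^p$ is finite whenever $\mathrm{Re}(\alpha)>0$, this gives integrability near $\{t_i=0\}$. Applying the same argument to every other coordinate, using Fubini and the continuity of the integrand on the compact interior of the cube, yields the claimed global convergence.

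The main obstacle is the careful bookkeeping of the Puiseux--logarithmic expansion once $\epsilon$-differentiation has converted each monomial factor $t_i^{\nu_i(\epsilon)}$ into $t_i^{\nu_i(0)}$ multiplied by a polynomial in $\epsilon\log t_i$: one must check that the $\mathcal{O}(t)$-property enjoyed by $\tau_{t_i}\mathcal{I}$ genuinely propagates order-by-order in $\epsilon$, controlling the real parts of the exponents that appear in the coefficients of $\epsilon^k$ rather than just as a formal identity. Handling the simultaneous approach to higher-codimension faces, where several $t_i$ tend to $0$ at once, then requires a mild iteration argument, justified by the non-vanishing of each $1+R_j$ on the whole closed cube.
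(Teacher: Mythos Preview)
Your approach is essentially correct but organised variable-by-variable, whereas the paper makes a single multivariate move. The paper writes each summand as $\mathcal{I}_i = t^{\nu_i}\bigl(S_i+\mathcal{R}_i\bigr)$, where $S_i$ collects the finitely many Taylor monomials of $\prod_j(1+R_j)^{c_j}$ for which $t^{\nu_i+\mathbf m}$ is non-integrable, and argues that the renormalization hypothesis forces $\sum_i t^{\nu_i}S_i=0$ identically. Then $\tau_\epsilon\mathcal{I}=\sum_i(\tau_\epsilon t^{\nu_i})(\tau_\epsilon\mathcal{R}_i)$ consists only of terms $t^{v'}(\log t)^v$ with every component of $v'$ strictly positive, so integrability against $dt/t$ near the origin --- corners included --- is immediate.

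One correction to your write-up: the sentence ``using Fubini and the continuity of the integrand on the compact interior of the cube, yields the claimed global convergence'' is not right as stated; integrability near each codimension-one face separately does not by itself control the higher-codimension corners, as you yourself acknowledge in the final paragraph. The ``mild iteration'' you allude to is exactly what is required, and once carried out it reproduces the paper's statement: since the $\tau_{t_i}$ commute on sector integrands, the full multi-Puiseux expansion $\tau_{t_1}\cdots\tau_{t_n}\mathcal{I}$ is simultaneously $\mathcal{O}(t_i)$ in every variable, so every surviving monomial has all exponents strictly positive --- which is precisely the cancellation $\sum_i t^{\nu_i}S_i=0$. With that step made explicit, your argument and the paper's coincide; the paper's multivariate split simply packages the iteration into one line.
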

\begin{proof}
Due to the positivity assumption on the coefficients of the polynomials, the only problem may arise by integrating around a neighborhood of the origin, where the integrand can be replaced by its series expansion.
For each integrand $\mathcal{I}_i$, separate the Taylor series in \eqref{eq:expansion} into the regular part $R_i$, with ${\bf m} > \lfloor{\bf \nu}\rfloor$ componentwise\footnote{$\lfloor{.}\rfloor$ denotes the integer part}, and the remaining singular part $S_i$.
In order for $\mathcal{I} = \sum_i \mathcal{I}_i$ to be renormalized, it has to be that $\sum_i t^{\nu_i} S_i = 0$.
The derivative $\left.\frac{\partial^k}{(\partial \epsilon)^k} \mathcal{I}\right|_{\epsilon=0}$ is the coefficient of $\epsilon^k$ in the series
\begin{align}
    \tau_\epsilon \mathcal{I} = \sum_i \tau_\epsilon \mathcal{I}_i = \sum_i (\tau_\epsilon t^{\nu_i}) \times \left(\tau_\epsilon S_i + \tau_\epsilon \mathcal{R}_i\right) = \sum_i  (\tau_\epsilon t^{\nu_i}) \times \left(\tau_\epsilon \mathcal{R}_i\right),
\end{align}
which involves terms of the form $\log(t)^{v} t^{v'}$, with $v>0$ and $v'>1$. Therefore, against the measure $\frac{dt}{t}$, it yields convergent integrals.
\end{proof}

\begin{remark*}
    The assumption of the positivity of the coefficients is crucial, otherwise the integrands may develop singularities from region away from the origin.
\end{remark*}

The algorithms based on sector decomposition further provide a canonical way to renormalize any sector integrand $\mathcal{I}$.
They involve using the singular parts $S$ of the expansion $\tau_t \mathcal{I}$, defined in the proof above, as counterterms for a subtraction formula. However, due to its local nature as a series expansion, $S$ cannot define a counterterm on the global space on which the original Euler integrand was defined.
In the rest of the paper we will present an alternative strategy.

\section{Tropicalization}
\label{sec:trop}

Consider a function $f : \mathbb{R}_{+}^n \to \mathbb{C}$. We define its \emph{tropicalization}, $\rm{Trop\ } f : \mathbb{R}^n \to \mathbb{C}$ by
\begin{align}
    \rm{Trop\ } f(t) = \lim_{\alpha' \to 0^+} \alpha' \log |f(e^{-t/\alpha'})|.
\end{align}
If we identify $t$ with the logarithm of $\alpha$, we can think of the tropicalization of a function $f(\alpha)$ as capturing the asymptotic behavior of $f$ at the boundaries of the hyperoctant $\mathbb{R}_{+}^n$.

It is easy to see that if $f$ is a polynomial or a Euler integrand, then $\mathrm{Trop\ } f$ coincides with our anticipated formulae \eqref{eq:tropPoly} and \eqref{eq:tropEuler}.
A different situation is when one considers the tropicalization of a \emph{combination} of Euler integrands.
Let us illustrate this with a simple example, consider
\begin{align}
    \mathcal{I} = (1+\alpha_1+\alpha_2)^\epsilon - (1+\alpha_1)^\epsilon,
\end{align}
then $\rm{Trop\ }\mathcal{I}(0,-1) = -1$. This is different from the result we would get by applying the naive tropical rule of replacing ``sum'' with ``max'', as in $\rm{Trop\ } 1 - x = \rm{max}(0,x)$, that would yield $\rm{Trop\ }\mathcal{I}(0,-1) = 0$.
Indeed, the rules of tropical calculus are
\begin{align}
    {\rm Trop\ } f^\epsilon &= \epsilon\ \mathrm{Trop\ } f \\
    {\rm Trop\ } f+g &= \max(\mathrm{Trop\ } f, \mathrm{Trop\ } g) \\
    {\rm Trop\ } f-g &\le \max(\mathrm{Trop\ } f, \mathrm{Trop\ } g)
    \label{eq:tropcalc}
\end{align}
Clearly, the third rule plays a crucial role in the study of subtraction schemes.

In order to understand the tropicalization of a combination of Euler integrands, one needs to take a step forward and study series expansions of the individual integrands.
For any vector $\rho \in \mathbb{R}^n$, we define an operator $\tau_\rho$ that acts on an Euler integrand returning the \emph{Puiseux series}
\begin{align}
    \tau_\rho \mathcal{I}(\alpha) = \lambda^{-\mathrm{Trop\ } \mathcal{I}(\rho)}\tau_\lambda \lambda^{\mathrm{Trop\ } \mathcal{I}(\rho)}\mathcal{I}(\alpha \lambda^{-\rho}),
\end{align}
where $\tau_\lambda$ is the operator returning the standard Taylor series expansion in $\lambda$.
In practice, to compute $\tau_\rho$ one strips from the rescaled integrand $\mathcal{I}(\alpha \lambda^{-\rho})$ the overall power $\lambda^{- \mathrm{Trop\ } \mathcal{I}(\rho)}$, then performs a Taylor expansion on the rest and finally multiplies back in the factor $\lambda^{- \mathrm{Trop\ } \mathcal{I}(\rho)}$ (in complete analogy with the operators $\tau_z$ described in the Introduction).
The structure of the Puiseaux series is illuminated by the notion of initial form of the integrand:
\begin{align}
    \tau_\rho\ \mathcal{I(\alpha)} = \lambda^{-\mathrm{Trop} \mathcal{I}(\rho)}\tau_\lambda\left[\mathcal{I}|_\rho (1 + \mathcal{J}_\rho(\lambda))\right] = \lambda^{-\mathrm{Trop} \mathcal{I}(\rho)}  \mathcal{I}|_\rho \tau_\lambda (1+\mathcal{J}_\rho(\lambda)),
\end{align}
recall that $\mathcal{J}_\rho(\lambda) = \mathcal{O}(\lambda)$, so that the action of $\tau_\lambda$ on $(1+\mathcal{J}_\rho(\lambda))$ produces a well defined power series.

The operator $\tau_\rho$ is well defined on a single Euler integrand and is then extended by linearity on a combination of integrands.
This is consistent, since $\tau_\rho$ respects the identities originating from any rational functions that may appear as an overall prefactor of an Euler integrand if some of the exponents are rational. This follows from the simple fact
\begin{align}
    \tau_\lambda \lambda^m f(\lambda) = \lambda^m \tau_\lambda f(\lambda),
\end{align}
which is true for any smooth function $f$ and integer $m$.
The operator $\tau_\rho$ allows to compute $\rm{Trop\ }\mathcal{I}(\rho)$, as minus the exponent of the leading term in the Puiseux series $\tau_\rho \mathcal{I}$.

Let us now consider again an Euler integrand $\mathcal{I}$ and two vectors $\rho_1$ and $\rho_2$ that are compatible according to the structure of the fan $\Sigma$ of the integrand, i.e. they belong to a common cone.
Then we have 
\begin{align}
    \tau_{\rho_1} \circ \tau_{\rho_2} \mathcal{I} = \tau_{\rho_2} \circ \tau_{\rho_1 }\mathcal{I}.
\end{align}
The above equation should be interpreted in the following sense. Each operator returns a Puiseux series in its own variable, let them be $\lambda_1$ and $\lambda_2$. Applying both returns a double series and no matter in which order they are applied, one obtains the same series.
To prove it, note that the operations of taking derivatives in $\lambda_1$ and rescaling the variables as $\alpha \to \alpha \lambda_2^{\rho_2}$ clearly commute, so we can manipulate directly the simultaneously rescaled integrand,
\begin{align}
    \mathcal{I}(\alpha \lambda_1^{\rho_1} \lambda_2^{\rho_2}) = \lambda_1^{-\mathrm{Trop\ } \mathcal{I}(\rho_1)}\lambda_2^{-\mathrm{Trop\ } \mathcal{I}(\rho_2)} \mathcal{I}|_{\rho_1,\rho_2}(1 + \mathcal{J}_{\rho_1,\rho_2}).
\end{align}
It follows that $\tau_{\rho_1}$ and $\tau_{\rho_2}$ commute due to the commutativity properties of the initial form and of Taylor expansions of multivariate smooth functions.

\begin{remark*}
    The fact that the sector integrand resulting from the change $\alpha \to t^{-M_\sigma}$ has a well defined series expansion around the origin can be re-interpreted as a consequence of the commutativity of the operators $\tau_\rho$ with $\rho \in \mathrm{Rays\ } \sigma$ for the sector $\sigma \in \Sigma$. The same holds true under refinement of $\Sigma$.
\end{remark*}

\subsection{Local Finiteness for Euler Integrals}

We are now in the position to state and prove a necessary and sufficient criterion for a combination of Euler integrands to be locally finite.

We have the following.
\begin{theorem}
\label{th:localfinite3}
Let $\mathcal{I} = \sum_i \mathcal{I}_i$ be a combination of Euler integrands.
Then $\mathcal{I}$ is locally finite if and only if $\rm{Trop\ } \mathcal{I}(\rho) = a + \mathcal{O}(\epsilon)$, with $a < 0$, for all $\rho \in \mathbb{R}^n$.
\end{theorem}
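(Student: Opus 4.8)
The strategy is to reduce the statement about Euler integrals to Theorem~\ref{th:localfinite2} about sector integrands, using Sector Decomposition as a bridge. The key observation is that the condition $\mathrm{Trop\ }\mathcal{I}(\rho) = a + \mathcal{O}(\epsilon)$ with $a<0$ for all $\rho$ is exactly the condition that makes each sector integrand in the decomposition \eqref{eq:secdec} renormalized. First I would fix a simplicial refinement $\Sigma'$ of the common refinement $\Sigma$ of the fans of all the $\mathcal{I}_i$, so that the combination $\mathcal{I} = \sum_i \mathcal{I}_i$ can be simultaneously decomposed into sectors indexed by the top-dimensional cones $\sigma \in \Sigma'(n)$, yielding in each sector a combination of sector integrands $\mathcal{I}^\sigma = \sum_i \mathcal{I}_i^\sigma$ obtained via the monomial change $\alpha = t^{-M_\sigma}$.

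\textbf{Direction ($\Leftarrow$).} Suppose $\mathrm{Trop\ }\mathcal{I}(\rho) < 0 + \mathcal{O}(\epsilon)$ for all $\rho \ne 0$. I claim each $\mathcal{I}^\sigma$ is renormalized in the sense of Theorem~\ref{th:localfinite}, i.e. $\tau_{t_i}\mathcal{I}^\sigma = \mathcal{O}(t_i)$ for every $i$. By the remark relating the monomial change $\alpha \to t^{-M_\sigma}$ to a simultaneous rescaling along the rays $\rho_1,\dots,\rho_n$ of $\sigma$, the operator $\tau_{t_i}$ acting on the sector integrand corresponds to $\tau_{\rho_i}$ acting on the Euler integrand, and the exponent of the leading power of $t_i$ in $\tau_{t_i}\mathcal{I}^\sigma$ is $-\mathrm{Trop\ }\mathcal{I}(\rho_i)$. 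Because $\mathrm{Trop\ }\mathcal{I}(\rho_i) = a_i + \mathcal{O}(\epsilon)$ with $a_i < 0$, this leading exponent is strictly positive at $\epsilon = 0$, so $\tau_{t_i}\mathcal{I}^\sigma = \mathcal{O}(t_i)$ as required. Applying Theorem~\ref{th:localfinite}/\ref{th:localfinite2} sector by sector gives that each sector integral is locally finite, and summing over the finitely many sectors gives local finiteness of $I(\epsilon)$.

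\textbf{Direction ($\Rightarrow$).} Conversely, suppose there is some $\rho$ with $\mathrm{Trop\ }\mathcal{I}(\rho) = a + \mathcal{O}(\epsilon)$, $a \ge 0$. I would choose a simplicial fan containing $\rho$ among its rays and pick a sector $\sigma$ with $\rho \in \mathrm{Rays\ }\sigma$. In that sector, $\tau_{t_1}\mathcal{I}^\sigma$ (taking $\rho$ to be the first ray) has a leading term $t_1^{-a}(\cdots)$ with $-a \le 0$; the coefficient is the initial form integrand $\mathcal{I}|_\rho$, which by the rules of tropical calculus \eqref{eq:tropcalc} is nonzero as a combination of Euler integrands (the inequality in the third rule being saturated precisely means the leading terms do not cancel). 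Then expanding $\mathcal{I}^\sigma$ in $\epsilon$ under the integral sign produces, from this leading term, integrals $\int_0^1 \frac{dt_1}{t_1} t_1^{-a}\log(t_1)^k(\cdots)$ that diverge at $t_1 = 0$, and one must check these divergences are not cancelled by the subleading terms of $\mathcal{I}^\sigma$ nor by contributions from other sectors — which follows because the potentially divergent piece is exactly the singular part $S^\sigma$ that a renormalized integrand would have to kill, and by hypothesis it does not vanish. Hence integration and $\tau_\epsilon$ cannot commute.

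\textbf{Main obstacle.} The delicate point is the ($\Rightarrow$) direction: one must argue that a genuine obstruction in a single sector (or along a single $\rho$) cannot be an artifact that cancels against other sectors or against the boundary-at-infinity behavior. The clean way around this is to phrase everything in terms of the operators $\tau_\rho$ directly on Euler integrands — using their linearity and the commutativity for compatible rays established in Section~\ref{sec:trop} — so that ``renormalized'' becomes the intrinsic condition $\tau_\rho\mathcal{I} = \mathcal{O}(\lambda)$ for all $\rho$, and then showing this intrinsic condition is equivalent to $\mathrm{Trop\ }\mathcal{I}(\rho) < 0$ for all $\rho$, with the positivity assumption on the coefficients ${\bf s}$ ensuring no spurious divergences arise away from the toric boundary. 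I would also need to confirm that the finitely many top-dimensional cones suffice, i.e. that a divergence along a lower-dimensional $\rho$ is detected by some top-dimensional cone containing it, which is immediate since $\mathrm{Trop\ }\mathcal{I}$ is continuous (piecewise linear) and a cone's closure contains its faces.
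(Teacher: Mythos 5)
Your ($\Leftarrow$) direction follows essentially the same route as the paper's own proof: sector-decompose along a common (simplicial) refinement of the fans of the $\mathcal{I}_i$, observe that the trop condition on the rays of each cone translates, under the monomial change $\alpha\to t^{-M_\sigma}$, into the ``renormalized'' condition $\tau_{t_i}\mathcal{I}^\sigma=\mathcal{O}(t)$ of Th.~\ref{th:localfinite}, and conclude by applying Th.~\ref{th:localfinite2} sector by sector. Your identification of $\tau_{t_i}$ with $\tau_{\rho_i}$ and of the leading $t_i$-exponent with $-\mathrm{Trop\,}\mathcal{I}(\rho_i)$ is exactly the remark the paper makes before Section~5.1, and your observation that one must use the \emph{common} refinement (not the individual fans) is the same point the paper stresses after Th.~\ref{th:localfinitesuf}. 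So for sufficiency you and the paper agree.

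Where you diverge from the paper is that you attempt the ($\Rightarrow$) direction, and here you should note that the paper's own three-line proof in fact only establishes sufficiency (it applies Th.~\ref{th:localfinite}, which is itself only an ``if''), so the necessity half is not spelled out anywhere in the text. Your attempt is the right instinct, but the step you flag as the ``main obstacle'' is a genuine gap: you assert that the singular part $S^\sigma$ does not vanish because $\mathrm{Trop\,}\mathcal{I}(\rho)\ge 0$, but the third tropical-calculus rule in \eqref{eq:tropcalc} is an \emph{inequality}, so $\mathrm{Trop\,}(\sum_i\mathcal{I}_i)(\rho)$ being $\ge 0$ is exactly the statement that the leading terms do not fully cancel --- you are assuming the conclusion you need. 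More concretely, to promote a non-vanishing $S^\sigma$ to an actual failure of local finiteness you must rule out (i) cancellation against the $\log^k(t_1)$ corrections generated by $\tau_\epsilon$, and (ii) the possibility that the divergent sector contribution is cancelled by another sector of the same refinement before one ever sees a divergence. Point (ii) is where the positivity hypothesis on ${\bf s}$ must enter: it guarantees the leading Puiseux coefficient of $\tau_\rho\mathcal{I}$ restricted to $\sigma$ has a definite sign, so the associated divergent integral cannot be cancelled --- but this needs to be said, not merely gestured at. As written, the necessity argument is an outline rather than a proof, and you should either close this loop using positivity, or simply prove sufficiency (as the paper does) and observe that necessity requires a separate argument that the paper omits.
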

\begin{proof}
    We apply sector decomposition using a common refinement $\Sigma$ of all the fans $\mathrm{Fan\ Newt\ } \mathcal{I}_i$.
    Thinking of the changes of variables $\alpha \to t^{-M_\sigma}$ as a rescaling followed by $\alpha \to 1$, we see that condition $\rm{Trop\ } \mathcal{I}(\rho) = a + \mathcal{O}(\epsilon)$ implies in each cone the local finiteness condition of Th. \ref{th:localfinite} for sector integrands.
    Noting that the expansion in $\epsilon$ commutes with the monomial changes of variables performed to reduce to sectors concludes the proof.
\end{proof}

Virtually by the same argument, one can also prove the following useful sufficient condition.
\begin{theorem}
\label{th:localfinitesuf}
Let $\mathcal{I} = \sum_i \mathcal{I}_i$ be a combination of Euler integrands and let $\Sigma$ be the common refinement of the fans $\Sigma_i = \rm{Fan\ } \rm{Newt\ } \mathcal{I}_i$.
Suppose that for all $\rho \in \Sigma(1)$,  $\rm{Trop\ } \mathcal{I}(\rho) = a + \mathcal{O}(\epsilon)$ with $a < 0$.
Then $\mathcal{I}$ is locally finite.
\end{theorem}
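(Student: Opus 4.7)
The plan is to mimic the proof of Theorem \ref{th:localfinite3}, with the sole extra observation that when sector decomposition is used only the tropical values of $\mathcal{I}$ on the rays of $\Sigma$ are needed to invoke Theorem \ref{th:localfinite}. First I would refine $\Sigma$ to a simplicial fan $\Sigma'$ without introducing new rays (this is always possible for a pointed rational fan by a placing triangulation), so that $\Sigma'(1)=\Sigma(1)$ and the hypothesis continues to hold on every ray of $\Sigma'$. Sector decomposition with $\Sigma'$ then yields
\begin{align*}
    \int_{\mathbb{R}^n_{\ge 0}} \frac{d\alpha}{\alpha}\,\mathcal{I} \;=\; \sum_{\sigma \in \Sigma'(n)} \det(M_\sigma) \int_{[0,1]^n} \frac{dt}{t}\,\mathcal{I}_\sigma(t),
\end{align*}
with $\mathcal{I}_\sigma$ the combination of sector integrands obtained from $\mathcal{I}$ via the monomial change $\alpha = t^{-M_\sigma}$, where $M_\sigma = (\rho_1|\dots|\rho_n)$ lists the rays of the simplicial cone $\sigma$.

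In each sector I would then check that $\mathcal{I}_\sigma$ is renormalized in the sense of Theorem \ref{th:localfinite}, i.e.\ $\tau_{t_i}\mathcal{I}_\sigma = \mathcal{O}(t_i)$ for every sector variable $t_i$. Using that $\alpha = t^{-M_\sigma}$ factorizes through the simultaneous rescaling $\alpha \to \alpha\prod_j t_j^{-\rho_j}$, the operator $\tau_{t_i}$ acting on $\mathcal{I}_\sigma$ coincides with $\tau_{\rho_i}$ acting on $\mathcal{I}$ (extended by linearity to the combination) followed by the same substitution. The leading $t_i$-exponent of $\tau_{\rho_i}\mathcal{I}$ equals $-\mathrm{Trop\ }\mathcal{I}(\rho_i)$, which by hypothesis is of the form $-a + \mathcal{O}(\epsilon)$ with $a<0$, hence strictly positive. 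This gives $\tau_{t_i}\mathcal{I}_\sigma = \mathcal{O}(t_i)$; applying Theorem \ref{th:localfinite} sector by sector, together with the commutativity of $\tau_\epsilon$ with the monomial change of variables, closes the argument.

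The subtle step is the identification of the leading $t_i$-exponent of $\tau_{\rho_i}\mathcal{I}$ with $-\mathrm{Trop\ }\mathcal{I}(\rho_i)$ for the aggregate. Individually, each $\tau_{\rho_i}\mathcal{I}_j$ has leading exponent $-\mathrm{Trop\ }\mathcal{I}_j(\rho_i)$, and these may be more singular than $-\mathrm{Trop\ }\mathcal{I}(\rho_i)$ if cancellations occur in the combination. One must verify that the sum of the individual leading terms vanishes exactly to the order dictated by the aggregate tropical value, which is precisely the content of the inequality in the tropical calculus \eqref{eq:tropcalc} combined with the linearity of $\tau_{\rho_i}$ on combinations established in Section \ref{sec:trop}. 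This is the only place where genuine care is required, but it rests entirely on ingredients already developed, so no new tool is needed; the rest of the argument is mechanical bookkeeping parallel to the proof of Theorem \ref{th:localfinite3}.
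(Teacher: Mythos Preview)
Your proposal is correct and follows essentially the same approach as the paper, which simply states that Theorem~\ref{th:localfinitesuf} follows ``virtually by the same argument'' as Theorem~\ref{th:localfinite3}. You have filled in the details appropriately, and your care in choosing a simplicial refinement of $\Sigma$ \emph{without introducing new rays} is a point worth making explicit, since the hypothesis is only assumed on $\Sigma(1)$; the paper leaves this implicit. Your ``subtle step'' is not really subtle once one recalls that, by the definition given in Section~\ref{sec:trop}, $\mathrm{Trop}\,\mathcal{I}(\rho)$ for a combination is \emph{defined} as minus the leading exponent of the Puiseux series $\tau_\rho\mathcal{I}$, so the identification you worry about is tautological rather than something requiring the tropical calculus inequality.
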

The advantage over the previous criterion is that the latter only requires to check the value of $\mathrm{Trop\ }\mathcal{I}$ over the finitely many rays of the common refinement, which can be done algorithmically by linear programming software such as \verb|polymake| \cite{assarf2015computing,Gawrilow:2000qhs}.
We stress the importance of working with the common refinement of the fans associated with the individual integrands $\mathcal{I}_i$.
One may wonder whether it is not sufficient to guarantee that $\mathrm{Trop\ } \mathcal{I}$ has the appropriate value on all the rays of each individual fan. However, in general taking common refinements produces rays that are not in the original fans.

Note that while the proofs of the above theorems are based on the idea of decomposing the integrals into sectors, of which there is essentially one per \emph{vertex} of $\mathrm{Newt\ }\mathcal{I}$, the result is that one can check a condition on each \emph{facet} of $\mathrm{Newt\ }\mathcal{I}$.
In general, the number of vertices and of facets of a polytope can be vastly different. For instance, in the case of Feynman integrals\footnote{With some caveat regarding special kinematics already mentioned in \ref{rmk:UFpolytopes}} the relevant polytopes are \emph{generalized permutohedra} \cite{Schultka:2018nrs}, which have roughly $E!$ many vertices against $2^E$ many facets, $E$ being the number of edges of the graph.
Therefore, it is practically useful to know about the above theorems, rather than performing sector decomposition and then apply Th. \ref{th:localfinite} to the resulting sector integrals.

We conclude with a simple useful lemma,\begin{lemma}
\label{lemma:monomials}
Let $\mathcal{I} = \sum_i \mathcal{I}_i$ be a locally finite combination of Euler integrands, multiplying it by any monomial $\alpha^{\bf{m} \epsilon}$ yields a locally finite integrand.
\end{lemma}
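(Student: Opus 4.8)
The plan is to reduce the claim directly to the characterization of local finiteness given in Theorem~\ref{th:localfinite3}: a combination of Euler integrands $\mathcal{J}$ is locally finite if and only if $\mathrm{Trop}\,\mathcal{J}(\rho) = a + \mathcal{O}(\epsilon)$ with $a<0$ for all $\rho \in \mathbb{R}^n$. So it suffices to show that if $\mathcal{I} = \sum_i \mathcal{I}_i$ satisfies this condition, then so does $\alpha^{\mathbf{m}\epsilon}\,\mathcal{I} = \sum_i \alpha^{\mathbf{m}\epsilon}\,\mathcal{I}_i$.

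First I would observe that multiplying by the monomial $\alpha^{\mathbf{m}\epsilon}$ is exactly the operation of shifting the $\nu$-exponent vector by $\mathbf{m}\epsilon$ in each summand, and that this does not change the Newton polytope of any $\mathcal{I}_i$ (the monomial factor $\alpha^\nu$ carries no polynomial, so $\mathrm{Newt}\,(\alpha^{\mathbf{m}\epsilon}\mathcal{I}_i) = \mathrm{Newt}\,\mathcal{I}_i$), hence the common refinement $\Sigma$ is unchanged as well. Next, from the additivity formula \eqref{eq:tropEuler}, $\mathrm{Trop}\,\mathcal{I}_i = \nu_i \cdot t + \sum_j c_{i,j}\,\mathrm{Trop}\,P_{i,j}$, so replacing $\nu_i$ by $\nu_i + \mathbf{m}\epsilon$ simply adds $\epsilon\,(\mathbf{m}\cdot\rho)$ to $\mathrm{Trop}\,\mathcal{I}_i(\rho)$ for every $\rho$. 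Since this shift is \emph{the same} for all $i$ (it depends only on $\mathbf{m}$ and $\rho$, not on $i$), it factors out of the $\max$/$\min$ structure that defines $\mathrm{Trop}$ of the combination: one gets $\mathrm{Trop}\,(\alpha^{\mathbf{m}\epsilon}\mathcal{I})(\rho) = \mathrm{Trop}\,\mathcal{I}(\rho) + \epsilon\,(\mathbf{m}\cdot\rho)$.

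With that identity in hand the conclusion is immediate: if $\mathrm{Trop}\,\mathcal{I}(\rho) = a + \mathcal{O}(\epsilon)$ with $a<0$, then $\mathrm{Trop}\,(\alpha^{\mathbf{m}\epsilon}\mathcal{I})(\rho) = a + \mathcal{O}(\epsilon)$ with the \emph{same} $a<0$, since $\epsilon\,(\mathbf{m}\cdot\rho)$ is a pure $\mathcal{O}(\epsilon)$ contribution with no constant term. Applying Theorem~\ref{th:localfinite3} in the reverse direction then gives that $\alpha^{\mathbf{m}\epsilon}\mathcal{I}$ is locally finite.

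The only point requiring a little care — and what I expect to be the main (mild) obstacle — is justifying rigorously that $\mathrm{Trop}$ of a \emph{combination} transforms by the clean additive rule above, rather than merely obeying the inequality in \eqref{eq:tropcalc}. The cleanest way is not to argue at the level of the tropical calculus rules but to go back to the definition: in each summand the rescaled integrand $(\alpha^{\mathbf{m}\epsilon}\mathcal{I}_i)(\alpha\lambda^{-\rho})$ equals $\lambda^{-\epsilon(\mathbf{m}\cdot\rho)}\alpha^{\mathbf{m}\epsilon}\mathcal{I}_i(\alpha\lambda^{-\rho})$, so the overall prefactor $\lambda^{-\epsilon(\mathbf{m}\cdot\rho)}\alpha^{\mathbf{m}\epsilon}$ is common to all $i$ and pulls out of the sum. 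Then $\tau_\rho$ applied to $\alpha^{\mathbf{m}\epsilon}\mathcal{I} = \alpha^{\mathbf{m}\epsilon}\lambda^{-\epsilon(\mathbf{m}\cdot\rho)}\sum_i \lambda^{\epsilon(\mathbf{m}\cdot\rho)}\mathcal{I}_i(\alpha\lambda^{-\rho})$ has leading $\lambda$-exponent equal to $-\epsilon(\mathbf{m}\cdot\rho)$ plus the leading exponent of $\tau_\rho\mathcal{I}$, which is precisely the statement $\mathrm{Trop}\,(\alpha^{\mathbf{m}\epsilon}\mathcal{I})(\rho) = \mathrm{Trop}\,\mathcal{I}(\rho) + \epsilon(\mathbf{m}\cdot\rho)$; here one uses the identity $\tau_\lambda \lambda^k f(\lambda) = \lambda^k \tau_\lambda f(\lambda)$ already noted in the paper, together with the consistency of $\tau_\rho$ under pulling out an overall monomial prefactor.
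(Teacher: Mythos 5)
The paper states this lemma without proof, so there is no reference argument to compare against; your proof is correct and is presumably the intended one. Reducing to Theorem~\ref{th:localfinite3} is the natural move, and you correctly identify and handle the only subtle point: the clean shift $\mathrm{Trop}\,(\alpha^{\mathbf{m}\epsilon}\mathcal{I})(\rho) = \mathrm{Trop}\,\mathcal{I}(\rho) + \epsilon(\mathbf{m}\cdot\rho)$ holds for the \emph{combination} (not merely termwise) because the rescaled monomial contributes a common prefactor $\lambda^{-\epsilon(\mathbf{m}\cdot\rho)}$ under $\alpha\mapsto\alpha\lambda^{-\rho}$ that pulls out of $\tau_\rho\sum_i\mathcal{I}_i$ regardless of cancellations among the $\mathcal{I}_i$, and this shift is purely $\mathcal{O}(\epsilon)$ so the sign of the $\epsilon^0$-coefficient $a$ is preserved, which is all that Theorem~\ref{th:localfinite3} tests. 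One cosmetic slip: the displayed rewriting of $\alpha^{\mathbf{m}\epsilon}\mathcal{I}$ in your last paragraph inserts a spurious $\lambda^{\epsilon(\mathbf{m}\cdot\rho)}$ inside the sum over $i$ that should not be there; the rescaled integrand is simply $\alpha^{\mathbf{m}\epsilon}\lambda^{-\epsilon(\mathbf{m}\cdot\rho)}\sum_i\mathcal{I}_i(\alpha\lambda^{-\rho})$, which is what you actually use.
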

A somewhat surprising feature of this fact is that the integrands $\mathcal{I}_i$, individually, may not even be convergent for nonvanishing $\epsilon$, i.e. may not be well defined in dimensional-regularization.
The appearance of such integrands is a common feature in the construction of effective field theories, where it is usually treated by introducing additional regulators.

\section{Subtraction scheme}
\label{sec:scheme}

We now describe a subtraction scheme that can be applied to Euler integrands $\mathcal{I}$ that satisfy a certain property to be introduced shortly. This allows in particular to evaluate the Laurent expansion of the associated Euler integral $I(\epsilon)$, by integrating the Laurent expansions of appropriately renormalized Euler integrands.

Consider an integrand $\mathcal{I}$ and suppose that $\mathrm{Trop\ } \mathcal{I}$ is of order $\mathcal{O}(\epsilon)$ along some rays $\{\rho_i \}_{i \in I}$ of $\Sigma = \rm{Fan\ Newt\ } \mathcal{I}$, and is otherwise of order $a + \mathcal{O}(\epsilon)$ with $a<0$. In physics jargon, we say that $\mathcal{I}$ has only logarithmic singularities as opposed to power divergences, with $a > 0$. By applying the Nilsson-Passare analytical continuation we can always reduce to this case, so there is no loss of generality in this assumption\footnote{Recall that the motivation of the subtraction discussed in this section is to expose the poles in $\epsilon$ while localizing the integral. We cannot employ Nilsson-Passare to express a divergent integral in terms of finite ones, while there is no problem in reducing power-divergent integrals to log-divergent ones.}.
Let us define a new fan $\Sigma^{\rm div}$ whose cones are given by positive spans of pairwise compatible (with respect to $\Sigma$) subsets of $\{\rho_i \}_{i \in I}$. 

We now make a crucial assumption.

\begin{definition}
We say that the geometric property is satisfied by an Euler integrand $\mathcal{I}$ if, for every ray $\rho \in \Sigma^{\rm div}(1)$ and all rays $\rho' \in \Sigma^{\rm div}$ that are compatible with $\rho$, we are able to find a vector $w_{\rho}$ with $w_\rho \cdot \rho' = -\delta_{\rho,\rho'}$, where $\delta$ denotes the usual Kronecker symbol
\label{def:geometricproperty}
\end{definition}
An immediate consequence of the geometric property, and the assumption of logarithmic divergences only, is that $\mathrm{\Sigma}^{\rm div}$ must be simplicial.
If this was not the case, it would be possible to write some ray $\rho \in \mathrm{\Sigma}^{\rm div}$ as a linear combination of compatible rays $\rho'  \in \mathrm{\Sigma}^{\rm div}$, and it would therefore be impossible to find a vector $w_\rho$ satisfying the requirements of the geometric property.
In fact, a stronger statement holds
\begin{lemma}
    Let $\mathcal{I}$ be a logarithmically divergent integrand satisfying the geometric property.
    Any collection of divergent facets $\{\rho_i\}_{i=1}^m$ of $\mathrm{Newt\ }\mathcal{I}$, i.e. with $\mathrm{Trop\ }\mathcal{I}(\rho_i) = \mathcal{O}(\epsilon)$, meet transversely.
    In particular, the cone $\sigma = \mathrm{Span}_+ \{\rho_i\}_{i=1}^m$ of $\Sigma^{\rm div}$ is also a cone of $\Sigma$.
    \label{th:normalintersect}
\end{lemma}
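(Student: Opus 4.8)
The goal is to show that any collection of divergent facets $\{\rho_i\}_{i=1}^m$ of $\mathrm{Newt\ }\mathcal{I}$ meets transversely, and hence that $\sigma = \mathrm{Span}_+\{\rho_i\}$ is a cone of $\Sigma$. My plan is to argue by contradiction on the failure of transversality, and to extract from that failure a linear dependence among compatible rays of $\Sigma^{\rm div}$, which directly contradicts the geometric property of Definition \ref{def:geometricproperty}. The first observation to record is that the $\rho_i$ being divergent facets means each is a ray of $\Sigma^{\rm div}$, and since all the $\mathrm{Trop\ }\mathcal{I}(\rho_i) = \mathcal{O}(\epsilon)$, any subset of them spans a cone of $\Sigma$ whose associated face of $\mathrm{Newt\ }\mathcal{I}$ still carries zero tropical value; in particular each pair $\rho_i,\rho_j$ is compatible with respect to $\Sigma$, so that $\mathrm{Span}_+\{\rho_i\}_{i=1}^m \in \Sigma^{\rm div}$ by the very definition of $\Sigma^{\rm div}$.

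Next I would recall from Section \ref{sec:polygeom} that a collection of facets $\{\rho_i\}_{i=1}^m$ of a non-redundant H-presentation meets transversely exactly when the face $F$ obtained by saturating all $m$ inequalities has codimension $m$, equivalently when $\mathrm{Span}_+\{\rho_i\}$ is an $m$-dimensional cone of $\mathrm{Fan\ Newt\ }\mathcal{I}$ — and that this holds iff the $\rho_i$ are linearly independent, since a non-simplicial cone has more rays than its dimension (as stated just below the definition of simplicial cone). So the statement to prove is precisely: the divergent facets $\rho_1,\dots,\rho_m$ are linearly independent. Suppose not; then there is a nontrivial relation $\sum_i a_i \rho_i = 0$. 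Pick any index $k$ with $a_k \ne 0$. The geometric property applied to the ray $\rho_k \in \Sigma^{\rm div}(1)$ provides a vector $w_{\rho_k}$ with $w_{\rho_k}\cdot \rho_j = -\delta_{k,j}$ for all rays $\rho_j$ of $\Sigma^{\rm div}$ compatible with $\rho_k$. Since all of $\rho_1,\dots,\rho_m$ are pairwise compatible, each $\rho_j$ is compatible with $\rho_k$, so $w_{\rho_k}\cdot \rho_j = -\delta_{k,j}$ for every $j \in \{1,\dots,m\}$. Dotting the relation $\sum_i a_i \rho_i = 0$ with $w_{\rho_k}$ gives $-a_k = 0$, a contradiction. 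Hence the $\rho_i$ are linearly independent, the face of codimension $m$ obtained by saturating all of them exists, i.e. they meet transversely, and $\sigma = \mathrm{Span}_+\{\rho_i\}_{i=1}^m$ is a simplicial $m$-dimensional cone; by the correspondence between transverse facets and cones of the normal fan recalled in Section \ref{sec:polygeom}, $\sigma \in \mathrm{Fan\ Newt\ }\mathcal{I} = \Sigma$.

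\textbf{Main obstacle.} The only delicate point is making sure that the hypotheses of the geometric property genuinely apply — specifically, that every $\rho_j$ in the assumed dependence really is a ray of $\Sigma^{\rm div}$ \emph{compatible with} $\rho_k$ in $\Sigma$, not merely a divergent facet in the abstract. This is where the logarithmic-divergence assumption is used: because $\mathrm{Trop\ }\mathcal{I}(\rho_i)=\mathcal{O}(\epsilon)$ for all $i$ and $\mathrm{Trop\ }\mathcal{I}$ is piecewise linear and convex-like on $\Sigma$ (it is the max/combination of tropicalizations of the $P_j$, cf. \eqref{eq:tropEuler}), the zero-set of its $\epsilon^0$-part is a subfan, so any two divergent facets lie in a common cone of $\Sigma$; this is exactly the compatibility needed, and it is also what makes $\mathrm{Span}_+\{\rho_i\}$ a cone of $\Sigma^{\rm div}$ in the first place. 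Once that bookkeeping is in place the contradiction is immediate, so I expect the write-up to be short; the subtlety is entirely in justifying compatibility, which I would state carefully at the outset.
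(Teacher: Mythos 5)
There is a genuine gap. You reduce transversality to linear independence of the $\rho_i$, asserting that the two are equivalent. They are not, and the paper's own Example~\ref{ex:pyramid} (the pyramid) is a counterexample: the two opposite side-normals $\rho_1$ and $\rho_3$ are linearly independent and pairwise compatible, yet saturating $\{d_1 - \rho_1\cdot{\bf z}=0\}$ and $\{d_3 - \rho_3\cdot{\bf z}=0\}$ automatically saturates $\rho_2$ and $\rho_4$ too (via the affine identity $(d_1-\rho_1\cdot{\bf z})+(d_3-\rho_3\cdot{\bf z})=(d_2-\rho_2\cdot{\bf z})+(d_4-\rho_4\cdot{\bf z})$), so the face has codimension $3>2$, and $\mathrm{Span}_+\{\rho_1,\rho_3\}$ is \emph{not} a cone of the normal fan. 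Your inner-product argument with $w_{\rho_k}$ only rules out relations of the form $\sum_i a_i\rho_i=0$ among the $\rho_i$ alone, and is silent about the actual mechanism by which transversality fails, namely a relation with positive coefficients of the form $\sum_i\phi_i(d_i-\rho_i\cdot{\bf z})=\sum_k\phi'_k(d'_k-\rho'_k\cdot{\bf z})$ involving \emph{other} facets $\rho'_k$ that are not a priori in $\Sigma^{\rm div}$.

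The ingredient you are missing is exactly where the log-divergence hypothesis does its real work in the paper's proof: from the affine relation one obtains, by piecewise linearity of $\mathrm{Trop\ }\mathcal{I}$ on the cone $\sigma_F$ dual to the common face, the scalar relation $\sum_i\phi_i\,\mathrm{Trop\ }\mathcal{I}(\rho_i)=\sum_k\phi'_k\,\mathrm{Trop\ }\mathcal{I}(\rho'_k)$. The left side is $\mathcal{O}(\epsilon)$ and, since $\mathcal{I}$ is logarithmically divergent, each $\mathrm{Trop\ }\mathcal{I}(\rho'_k)$ has nonpositive $\epsilon^0$-part; positivity of the $\phi'_k$ then forces each of them to vanish, i.e.\ the hidden facets $\rho'_k$ are \emph{also} divergent rays, compatible with all the $\rho_i$. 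Only after this upgrade does a nontrivial linear relation among compatible rays of $\Sigma^{\rm div}$ appear, and only then does pairing with a $w_\rho$ give a contradiction with the geometric property. Relatedly, your final paragraph misidentifies the role of log-divergence: it is not used to establish compatibility among the given $\rho_i$ (that is built into the hypothesis that $\sigma=\mathrm{Span}_+\{\rho_i\}$ is a cone of $\Sigma^{\rm div}$, and in any case divergent facets need not be compatible — e.g.\ when they lie on opposite sides of the polytope), but rather to propagate divergence to the extra facets $\rho'_k$ produced by the failure of transversality.
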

\begin{proof}
    The facets meet in a face $F$ with $\mathrm{codim}(F) \ge m$. Suppose that $\mathrm{codim}(F) > m$, then some other facet inequalities $\{d_k - \rho_k \cdot {\bf z} \le 0\}_{k=1}^{\mathrm{codim}(F)-m}$ of the H-presentation of $\mathrm{Newt\ }\mathcal{I}$ must be saturated when the inequalities $\{d_i - \rho_i \cdot {\bf z} \le 0 \}_{i=1}^m$ are.
    This means that is possible to write a linear relation among compatible facet inequalities of $\mathrm{Newt\ } \mathcal{I}$ of the form,
    \begin{align}
        \sum_{i=1}^m \phi_{i} \left(d_i - \rho_i \cdot {bf z}\right) = \sum_{k=1}^{\mathrm{codim}(F)-m} \phi'_{k} \left(d_k - \rho_k \cdot {bf z}\right) \quad \forall {\bf z},
    \end{align}
    with positive coefficients $\phi_i$ and $\phi'_k$.
    Because the above must hold for all ${\bf z}$, the same linear relation must hold separately among vectors $\rho_i,\rho'_k$, the constants  $d_i,d'_k$ and consequently (by linearity of $\mathrm{Trop\ }\mathcal{I}$ on $\mathrm{Span}_+ \{\rho_i\}_{i=1}^m \cup  \{\rho'_k\}_{k}^{\mathrm{codim}(F)-m}$) among the values $\mathrm{Trop\ }\mathcal{I}(\rho_i),\mathrm{Trop\ }\mathcal{I}(\rho'_k)$.
    Since $\mathcal{I}$ is logarithmically divergent, $\mathrm{Trop\ }\mathcal{I}(\rho_i) = \mathcal{O}(\epsilon)$, this linear relation implies that the values $\mathrm{Trop\ }\mathcal{I}(\rho'_k)$ are also of order $\mathcal{O}(\epsilon)$, i.e. the vectors $\rho'_k$ are in $\Sigma^{\mathrm{div}}(1)$.
    The linear relation among the vectors then implies that $\mathcal{I}$ cannot satisfy the geometric property, which concludes the proof by contradiction.    
\end{proof}
A consequence of this lemma is that two cones $\sigma$ and $\sigma'$ in $\Sigma^{\rm div}$ are compatible with respect
to $\Sigma^{\rm div}$ if and only if they are with respect to $\Sigma$, which ease some of the notation.

We anticipate that the geometric property is in general not satisfied by generic Euler integrals, nor by the specific instances that show up in physics applications.
Therefore, this assumption currently represents the fundamental limitation of our subtraction scheme.
However, note that any integral which does not satisfy the geometric property can always be expressed as a linear combination of integrals that do. This can be done algorithmically by performing a Nilsson-Passare analytical continuation along the rays for which the property does not hold.

From a physical point of view, the geometric property should be thought of as a measure of the intricacy of the singularity structure of a Feynman integral. The richer is the pattern of singular configurations for its internal momenta, the more likely it is that the property will not hold. In particular, highly non-planar massless diagrams are the most likely ones to break this assumption.

From now on, let us assume that $\mathcal{I}$ satisfies the geometric property with vectors $w_\rho$. We use these to define the rational functions,
\begin{align}
    v_{\rho} \coloneqq \frac{1}{1 + \alpha^{w_\rho}},
\end{align}
for every $\rho \in \Sigma^{\mathrm{div}}$.
Let $\sigma_I = \rm{Span_+}\{\rho_i \}_{i \in I} \in \Sigma^{\rm div}$, set
\begin{align}
    v_{\sigma} = \prod_{i \in I} v_{\rho_i}^{1-\mathrm{Trop\ }\mathcal{I}(\rho_i)},
\end{align}
and consider the Euler integrand
\begin{align}
    \mathcal{I}_{\sigma_I} = v_{\sigma_I} \left.\mathcal{I}\right|_{\sigma_I},
    \label{eq:counterterm}
\end{align}
which include $\mathcal{I}$ for $\sigma = {\bf 0}$.
The integrands $\mathcal{I}_\sigma$ are the counterterms of our subtraction scheme.
We begin by studying their tropicalization.
\begin{lemma}
    \label{th:tropcnt}
    Let $\rho \in \mathbb{R}^{n}$ and $\mathcal{I}_{\sigma_I}(\rho)$ as in \eqref{eq:counterterm}, then
    \begin{align}
    \mathrm{Trop\ } \mathcal{I}_{\sigma_I}(\rho) = \left\{ \begin{array}{cc}
         \mathrm{Trop\ } \mathcal{I}(\rho)  &\mathrm{if\ } \rho \in \Sigma^{\rm div}|_{\sigma_I} \\
         a + \mathcal{O}(\epsilon),\quad a< 0 &\mathrm{else} 
    \end{array}  \right.
    \label{eq:tropct}
\end{align}
Furtermore, if $\sigma' = \mathrm{Span}_+ \mathcal{R}\cup r$, with $r \subseteq \mathrm{Rays\ } \sigma_I$, is a cone in $\Sigma^{\mathrm{div}}|_{\sigma_I}$ and $\rho \in \sigma_I^+$ we have that
\begin{align}
    \mathcal{I}_{\sigma'}|_{\rho} = \mathcal{I}_{\mathrm{Span}_+\mathcal{R}}|_{\rho}.
    \label{eq:counterinitial}
\end{align}
\end{lemma}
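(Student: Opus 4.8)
The plan is to compute $\mathrm{Trop\ }\mathcal{I}_{\sigma_I}$ directly from the definition $\mathcal{I}_{\sigma_I}=v_{\sigma_I}\,\mathcal{I}|_{\sigma_I}$ using the additivity of $\mathrm{Trop}$ over products, and then analyze the two cases. Since $\mathrm{Trop\ }(fg)=\mathrm{Trop\ } f+\mathrm{Trop\ } g$ for Euler integrands, I would write $\mathrm{Trop\ }\mathcal{I}_{\sigma_I}(\rho)=\mathrm{Trop\ }\mathcal{I}|_{\sigma_I}(\rho)+\sum_{i\in I}(1-\mathrm{Trop\ }\mathcal{I}(\rho_i))\,\mathrm{Trop\ } v_{\rho_i}(\rho)$. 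The key computation is $\mathrm{Trop\ } v_\rho(\rho')=\mathrm{Trop\ }(1+\alpha^{w_\rho})^{-1}(\rho')=-\max(0,\ w_\rho\cdot\rho')$; using the geometric property $w_\rho\cdot\rho'=-\delta_{\rho,\rho'}$ for compatible rays, this is $0$ when $\rho'$ is compatible with $\rho$ and equals $\rho$, or when $w_\rho\cdot\rho'\le 0$ in general, and it is $-w_\rho\cdot\rho'<0$ strictly negative otherwise.

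For the first case, $\rho\in\Sigma^{\rm div}|_{\sigma_I}$, by Lemma \ref{th:normalintersect} the cone $\sigma_I$ is also a cone of $\Sigma$, so $\rho$ is compatible with every ray $\rho_i$ of $\sigma_I$; picking $\rho$ in the relative interior of a cone compatible with $\sigma_I$ I get $w_{\rho_i}\cdot\rho=-\delta$-type values, so all terms $\mathrm{Trop\ } v_{\rho_i}(\rho)$ vanish (one checks $w_{\rho_i}\cdot\rho\le 0$ along the whole compatible cone by continuity and the Kronecker condition on its rays). Then Lemma \ref{th:restrictionint} gives $\mathrm{Trop\ }\mathcal{I}|_{\sigma_I}(\rho)=\mathrm{Trop\ }\mathcal{I}(\rho)$ for $\rho$ compatible with $\sigma_I$, which is exactly the claim. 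For the second case, $\rho\notin\Sigma^{\rm div}|_{\sigma_I}$: I need to show the total is strictly negative to order $\epsilon^0$. If $\rho\notin\Sigma^{\rm div}$ at all, then $\mathrm{Trop\ }\mathcal{I}(\rho)=a+\mathcal{O}(\epsilon)$ with $a<0$ by the standing assumption, and one argues $\mathrm{Trop\ }\mathcal{I}|_{\sigma_I}(\rho)\le\mathrm{Trop\ }\mathcal{I}(\rho)$ plus the $v_{\rho_i}$ contributions are $\le 0$; the subtler subcase is $\rho\in\Sigma^{\rm div}$ but not compatible with $\sigma_I$, where I would use that $\rho$ fails to be compatible with some $\rho_{i_0}\in I$, which by the geometric property's setup forces $w_{\rho_{i_0}}\cdot\rho>0$ after choosing $w$ appropriately, producing a strictly negative term $-(1-\mathrm{Trop\ }\mathcal{I}(\rho_{i_0}))\,w_{\rho_{i_0}}\cdot\rho$ that dominates (here $1-\mathrm{Trop\ }\mathcal{I}(\rho_{i_0})=1+\mathcal{O}(\epsilon)>0$).

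For the second assertion \eqref{eq:counterinitial}, I would take initial forms along $\rho\in\sigma_I^+$ of $\mathcal{I}_{\sigma'}=v_{\sigma'}\,\mathcal{I}|_{\sigma'}$. Writing $\sigma'=\mathrm{Span}_+(\mathcal{R}\cup r)$ with $r\subseteq\mathrm{Rays\ }\sigma_I$, the factor $v_{\sigma'}$ splits as $v_{\mathrm{Span}_+\mathcal{R}}\cdot\prod_{\rho_i\in r}v_{\rho_i}^{1-\mathrm{Trop\ }\mathcal{I}(\rho_i)}$; since $\rho$ lies in $\sigma_I^+$ and hence (by the Kronecker conditions) $w_{\rho_i}\cdot\rho<0$ for $\rho_i\in r$, each $v_{\rho_i}=(1+\alpha^{w_{\rho_i}})^{-1}$ has initial form along $\rho$ equal to $1$, so those factors drop out of $(v_{\sigma'})|_\rho$. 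For the remaining piece $\mathcal{I}|_{\sigma'}$, since the rays in $r$ are among those of $\sigma_I$ and $\rho\in\sigma_I^+$, taking the initial form along $\sigma'$ and then along $\rho$ is the same as taking it directly along $\mathrm{Span}_+\mathcal{R}$ composed with $\rho$ — this is the commutativity of initial forms on compatible cones (established around Lemma \ref{th:restrictionpoly}) together with the fact that restricting first to $\mathrm{Newt\ }\mathcal{I}|_{\sigma'}$ and then along $\rho\in\sigma_I$ lands on the same face. Assembling the two, $\mathcal{I}_{\sigma'}|_\rho=(v_{\mathrm{Span}_+\mathcal{R}})|_\rho\cdot(\mathcal{I}|_{\mathrm{Span}_+\mathcal{R}})|_\rho=\mathcal{I}_{\mathrm{Span}_+\mathcal{R}}|_\rho$.

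The main obstacle I anticipate is the strict-negativity bookkeeping in the ``else'' branch: one must carefully use the geometric property not just on the rays of $\Sigma^{\rm div}$ but on the whole structure of compatible cones to guarantee that, whenever $\rho$ is incompatible with $\sigma_I$, at least one $w_{\rho_i}$ (with $\rho_i\in I$) pairs strictly positively with $\rho$, and that this term is not overwhelmed by a potentially positive contribution elsewhere — here the log-divergence hypothesis ($\mathrm{Trop\ }\mathcal{I}\le 0$ on all rays, $=\mathcal{O}(\epsilon)$ exactly on $\Sigma^{\rm div}$) is what rescues the argument, since every other summand is manifestly $\le 0$ up to $\mathcal{O}(\epsilon)$.
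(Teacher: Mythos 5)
There is a genuine gap in your treatment of the ``else'' branch of \eqref{eq:tropct}. Your argument hinges on two claims that do not hold in general.

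First, you assert $\mathrm{Trop\ }\mathcal{I}|_{\sigma_I}(\rho)\le\mathrm{Trop\ }\mathcal{I}(\rho)$ when $\rho\notin\Sigma^{\rm div}$. For Euler integrands with negative exponents (which includes Feynman integrals) this inequality typically runs the \emph{other} way: restricting to the initial form $P_j|_{\sigma_I}$ shrinks $\mathrm{Newt\ }P_j$, hence decreases $\mathrm{Trop\ }P_j(\rho)$, and a negative $c_j$ reverses the inequality. Concretely, take $\mathcal{I}=(1+\alpha_1+\alpha_2)^{-3}$, $\sigma_I=\mathrm{Span}_+(1,1)$, and $\rho=(-1,-1)$: then $\mathrm{Trop\ }\mathcal{I}(\rho)=0$ while $\mathrm{Trop\ }\mathcal{I}|_{\sigma_I}(\rho)=3$. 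Lemma \ref{th:restrictionint} gives equality only when $\rho$ lies in a cone \emph{compatible} with $\sigma_I$; for incompatible $\rho$ there is no pointwise comparison you can invoke. Second, in the subcase ``$\rho\in\Sigma^{\rm div}$ but incompatible with $\sigma_I$'' you try to extract $w_{\rho_{i_0}}\cdot\rho>0$ from the geometric property ``after choosing $w$ appropriately'' — but Def.\ \ref{def:geometricproperty} only constrains $w_{\rho_{i_0}}\cdot\rho'$ for rays $\rho'$ \emph{compatible} with $\rho_{i_0}$, and is entirely silent on incompatible rays. Nothing in the definition prevents $w_{\rho_{i_0}}\cdot\rho\le 0$ there.

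The paper sidesteps both problems by never comparing $\mathrm{Trop\ }\mathcal{I}|_{\sigma_I}$ to $\mathrm{Trop\ }\mathcal{I}$ off the compatible locus and never pairing $w_i$ against incompatible rays. Instead, it decomposes $\mathbb{R}^n = \bigcup_{\sigma'\in\Sigma|_{\sigma_I}} \gamma_{\sigma'}$ with $\gamma_{\sigma'}=\sigma'+\mathrm{Span\ Rays\ }\sigma_I$ (the normal fan of $\mathrm{Newt\ }\mathcal{I}|_{\sigma_I}$). Inside each $\gamma_{\sigma'}$ one writes $\rho = \rho'' + \sum_{i\in I}\mu_i\rho_i$ with $\rho''\in\sigma'$. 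Because $\sigma'$ \emph{is} compatible with $\sigma_I$, the geometric property applies to every ray of $\sigma'$, giving cleanly $w_i\cdot\rho=-\mu_i$. Then $\mathrm{Trop\ }\mathcal{I}|_{\sigma_I}$ is evaluated by Lemma \ref{th:restrictionint} (equality on $\sigma'$, linear extension along $\mathrm{Span\ Rays\ }\sigma_I$), and the case analysis on whether $\rho''$ touches a non-divergent ray or some $\mu_i<0$ produces the desired strict negativity directly from \eqref{eq:bound1}. Your first case ($\rho\in\Sigma^{\rm div}|_{\sigma_I}$) and the proof of \eqref{eq:counterinitial} are essentially correct; it is the parametrization of the complementary region that needs to be done as in the paper, via cones compatible with $\sigma_I$, to keep the geometric property usable.
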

\begin{proof}
     By the tropical rules \ref{eq:tropcalc}, we have
     \begin{align}
         \mathrm{Trop\ }\mathcal{I}_{\sigma_I} = -\sum_{i \in I}(1 - \mathrm{Trop\ }\mathcal{I}(\rho_i)) \max(0, w_i \cdot \rho) + \mathrm{Trop\ }\mathcal{I}|_{\sigma_I}  \le \mathrm{Trop\ }\mathcal{I}|_{\sigma_I}.
         \label{eq:bound1}
     \end{align}
    For any cone $\sigma' \in \Sigma|_{\sigma_I}$, consider the cone $\gamma_{\sigma'} = \sigma' + \mathrm{Span\ } \mathrm{Rays\ } \sigma_I$.
    We can write the whole space as the union of these (possibly overlapping) cones, $\mathbb{R}^n = \bigcup_{\sigma' \in \Sigma|_{\sigma_I}} \gamma_{\sigma'}$.
    Note that $\sigma_I \in \Sigma^{\rm div}$  is also a cone of $\Sigma$, due to lemma \ref{th:normalintersect}, so that lemma \ref{th:restrictionint} applies to $\mathcal{I}|_{\sigma_I}$.
    This implies that $\mathrm{Trop\ }\mathcal{I}|_{\sigma_I}$ coincides with $\mathrm{Trop\ }\mathcal{I}$ over $\sigma'$ and extends it linearly to $\gamma_{\sigma'}$.
    If $\rho \in \sigma'$ can be written as the positive span of a subset of $\mathrm{Rays\ }\sigma'$ which includes at least one ray not in $\Sigma^{\rm div}(1)$, then $\mathrm{Trop\ }\mathcal{I}|_{\sigma_I}(\rho) = \mathrm{Trop\ }\mathcal{I}(\rho) = a + \mathcal{O}(\epsilon)$, with $a < 0$. By linearity,  $\mathrm{Trop\ }\mathcal{I}_{\sigma_I} = a + \mathcal{O}(\epsilon)$ 
    on $\rho + \mathrm{Span\ }\sigma_I$.
    The bound \eqref{eq:bound1} implies that the same is true for $\mathrm{Trop\ }\mathcal{I}_{\sigma_I}$. 
    Let us therefore assume that $\mathrm{Rays\ }\sigma' \subset \Sigma^{\rm div}(1)$, and if necessary replace $\sigma'$ by $\mathrm{Span\ }_+ (\mathrm{Rays\ } \sigma' \setminus  \mathrm{Rays\ } \sigma_I)$, so that we can write any $\rho$ in $\gamma_{\sigma'}$ uniquely as $\rho = \sum_{i \in I} \mu_i \rho_i + \rho'$, with $\rho' \in \sigma'$.
    The geometric property implies that $w_i \cdot \rho = -\mu_i$, and so the equation in \eqref{eq:bound1} implies $\mathrm{Trop\ } \mathcal{I}_{\sigma_I} = a + \mathcal{O}(\epsilon)$ if $\mu_i < 0$.
    The remaining case is then that $\mu_i \ge 0$, in which case $\rho \in \Sigma^{\rm div}|_{\sigma_I}$ and $\mathrm{Trop\ } \mathcal{I}_{\sigma_I} = \mathrm{Trop\ } \mathcal{I}|_{\sigma_I} = \mathrm{Trop\ } \mathcal{I}$.

    The second part of the lemma follows trivially from noting that if $\rho \in \sigma_I^+$, then $\mathcal{I}|_{\sigma'}|_\rho = \mathcal{I}|_{\mathrm{Span}_+\mathcal{R}}|_\rho$ and $v_{\rho_i}|_\rho = 1$ for all $i \in I$.
    
\end{proof}

Let us now consider the following renormalization map,
\begin{align}
    \mathcal{I} \to \mathcal{I}^{\rm ren} = \sum_{\sigma \in \Sigma^{\rm div}} (-1)^{\mathrm{dim}(\sigma)} \mathcal{I}_\sigma
    \label{eq:ren}
\end{align}
The combinatorics behind the above formula is familiar from the work of BPHZ on UV renormalization \cite{Hepp1966,Zimmermann1969}, and similar \emph{forest formulae} have been written down also for IR divergences \cite{CHETYRKIN1982340, CHETYRKIN1984419, Herzog:2017bjx}.
The novelty compared with these previous results is that the renormalization map defined here produces an integrand that is not only finite but also \emph{locally finite}. 
Furthermore, we will see shortly that our counter-terms can be integrated explicitly as many times as the order of the pole that they cancel, therefore manifesting the pole structure of the integral.

The following is the first main result of this section, 
\begin{theorem}
The integrand $\mathcal{I}^{\rm ren}$ defined in Eq. \eqref{eq:ren} is locally finite.
\label{th:localfiniteren}
\end{theorem}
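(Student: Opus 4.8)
The plan is to show that $\mathcal{I}^{\rm ren}$ satisfies the criterion of Theorem~\ref{th:localfinite3}: namely that $\mathrm{Trop\ } \mathcal{I}^{\rm ren}(\rho) = a + \mathcal{O}(\epsilon)$ with $a < 0$ for \emph{every} $\rho \in \mathbb{R}^n$. Since $\mathcal{I}^{\rm ren}$ is a finite $\mathbb{Z}$-linear combination of the counterterm integrands $\mathcal{I}_\sigma$, and each $\mathcal{I}_\sigma$ is already logarithmically tame away from $\Sigma^{\rm div}|_\sigma$ by Lemma~\ref{th:tropcnt}, the only rays $\rho$ where cancellation is needed are those lying in some cone of $\Sigma^{\rm div}$; equivalently, by Lemma~\ref{th:normalintersect}, those $\rho$ lying in the star of $\Sigma^{\rm div}$ viewed as a subfan of $\Sigma$. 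So I would begin by fixing such a $\rho$ and, using the decomposition of $\mathbb{R}^n$ into the cones $\gamma_{\sigma'} = \sigma' + \mathrm{Span\ Rays\ }\sigma_{\max}$ from the proof of Lemma~\ref{th:tropcnt}, reduce to the case where $\rho \in \sigma_I^+$ for a well-defined maximal $\sigma_I \in \Sigma^{\rm div}$ — the largest cone of $\Sigma^{\rm div}$ in whose relative interior (after the $\gamma$-decomposition) $\rho$ effectively lies.

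The key computation is then the following. For $\rho \in \sigma_I^+$, Lemma~\ref{th:tropcnt} tells us two things: first, $\mathrm{Trop\ } \mathcal{I}_\sigma(\rho) = \mathrm{Trop\ }\mathcal{I}(\rho)$ precisely when $\sigma \in \Sigma^{\rm div}|_{\sigma_I}$ (i.e. $\sigma$ is a face of, or more precisely compatible with, $\sigma_I$), and is strictly negative at order $\epsilon^0$ otherwise; second, the initial forms satisfy the factorization $\mathcal{I}_{\sigma'}|_\rho = \mathcal{I}_{\mathrm{Span}_+\mathcal{R}}|_\rho$ when $\sigma' = \mathrm{Span}_+(\mathcal{R} \cup r)$ with $r \subseteq \mathrm{Rays\ }\sigma_I$. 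Combining these, among all $2^{|I|}$-ish counterterms $\mathcal{I}_\sigma$ with $\sigma \in \Sigma^{\rm div}|_{\sigma_I}$, the leading $\lambda$-coefficient of $\tau_\rho \mathcal{I}_\sigma$ — i.e. the initial form $\mathcal{I}_\sigma|_\rho$ weighted by $\lambda^{-\mathrm{Trop\ }\mathcal{I}(\rho)}$ — depends only on which rays of $\sigma$ lie in $\mathrm{Rays\ }\sigma_I$. Writing $J = I$ for the index set of $\sigma_I$, the leading terms of $\sum_{\sigma} (-1)^{\dim\sigma}\mathcal{I}_\sigma$ organize into $\sum_{S \subseteq J} (-1)^{|S|} \mathcal{I}_{\sigma_S}|_\rho$ (summing over subcones $\sigma_S$ built from the rays indexed by $S$, times an overall sign/measure bookkeeping factor), and since $\mathcal{I}_{\sigma_S}|_\rho = \mathcal{I}_{\sigma_\emptyset}|_\rho = \mathcal{I}|_\rho$ is \emph{independent of $S$} by the factorization, this alternating sum over subsets of a nonempty $J$ vanishes. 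Hence the $\lambda^{-\mathrm{Trop\ }\mathcal{I}(\rho)}$ term of $\tau_\rho\mathcal{I}^{\rm ren}$ cancels, which forces $\mathrm{Trop\ }\mathcal{I}^{\rm ren}(\rho) < \mathrm{Trop\ }\mathcal{I}(\rho) = \mathcal{O}(\epsilon)$, i.e. $\le -1 + \mathcal{O}(\epsilon)$, strictly negative as required. If $J = \emptyset$ (i.e. $\rho$ is already in the strictly-negative regime) there is nothing to cancel and the bound is immediate.

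The steps, in order, are: (1) reduce via Theorem~\ref{th:localfinite3} to checking $\mathrm{Trop\ }\mathcal{I}^{\rm ren}(\rho) < 0 + \mathcal{O}(\epsilon)$ pointwise; (2) use Lemma~\ref{th:normalintersect} and the $\gamma_{\sigma'}$-cover to restrict attention to $\rho \in \sigma_I^+$ for a maximal relevant $\sigma_I \in \Sigma^{\rm div}$; (3) apply Lemma~\ref{th:tropcnt} to identify which counterterms contribute to the leading Puiseux term and to establish the $S$-independence of their initial forms; (4) invoke the binomial identity $\sum_{S\subseteq J}(-1)^{|S|} = 0$ for $J \neq \emptyset$ to kill the leading term; (5) conclude the strict negativity and hence local finiteness. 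The main obstacle I anticipate is step (2)–(3) bookkeeping: one must be careful that when $\rho$ sits in $\gamma_{\sigma'}$ but not literally in a cone of $\Sigma^{\rm div}$, the "effective" maximal cone $\sigma_I$ controlling the cancellation is correctly identified, and that the signs $(-1)^{\dim\sigma}$ together with the $\gamma$-decomposition genuinely reproduce an alternating sum over all subsets of $\mathrm{Rays\ }\sigma_I$ with no missing or extra terms — this is exactly where the geometric property (each $w_\rho \cdot \rho' = -\delta_{\rho,\rho'}$) and the simpliciality of $\Sigma^{\rm div}$ it entails are doing the real work, by guaranteeing that $v_{\rho_i}|_\rho$ is either $1$ or exponentially suppressed and never something intermediate that would spoil the clean subset structure.
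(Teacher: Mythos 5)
Your overall structure matches the paper's proof exactly: reduce via Theorem~\ref{th:localfinite3} to a pointwise tropical bound, dispose of rays outside the support of $\Sigma^{\rm div}$ and of incompatible cones using the first part of Lemma~\ref{th:tropcnt}, and kill the leading Puiseux term at $\rho\in\sigma_I^+$ by an inclusion-exclusion cancellation powered by Eq.~\eqref{eq:counterinitial} and $\sum_{r\subseteq X}(-1)^{|r|}=0$. Two slips in the execution are worth flagging, though neither changes the verdict that the route is the paper's.

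First, you write that the initial form $\mathcal{I}_{\sigma}|_\rho$ ``depends only on which rays of $\sigma$ lie in $\mathrm{Rays\ }\sigma_I$.'' This is backwards: Eq.~\eqref{eq:counterinitial} says $\mathcal{I}_{\sigma'}|_\rho = \mathcal{I}_{\mathrm{Span}_+\mathcal{R}}|_\rho$ where $\mathcal{R}$ is the set of rays of $\sigma'$ \emph{not} in $\sigma_I$; the dependence on the part $r\subseteq\mathrm{Rays\ }\sigma_I$ drops out (because $v_{\rho_i}|_\rho=1$ for $\rho_i\in\mathrm{Rays\ }\sigma_I$, $\rho\in\sigma_I^+$), and that $r$-independence is exactly what makes the alternating sum over $r$ vanish. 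Your next sentence (summing over $S\subseteq J$ and invoking $S$-independence) actually implements the correct statement, so this is a one-line contradiction rather than a broken argument, but the stated dependence direction should be reversed.

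Second, the claim $\mathcal{I}_{\sigma_S}|_\rho = \mathcal{I}|_\rho$ is only true for the group $\mathcal{R}=\emptyset$. In general, $\Sigma^{\rm div}|_{\sigma_I}$ contains cones with additional rays not in $\sigma_I$; the paper handles this by grouping the sum by $\mathcal{R}$ (Eq.~\eqref{eq:pairs}) and observing that, within each group, the common initial form is $\mathcal{I}_{\mathrm{Span}_+\mathcal{R}}|_\rho$, which is not $\mathcal{I}|_\rho$ but is still independent of $r$, so the inner alternating sum over $r\subseteq\mathrm{Rays\ }\sigma_I$ still vanishes. Your ``overall sign/measure bookkeeping factor'' gestures at this but does not actually perform the grouping, which is the one nontrivial combinatorial step. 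With those two points corrected, the proof is the paper's.
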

\begin{proof}
    Due to Th. \ref{th:localfinite3} it suffices to prove that $\mathrm{Trop\ } \mathcal{I}^{\mathrm{ren}}(\rho) = a + \mathcal{O}(\epsilon)$, with $a<0$, for all $\rho \in \mathbb{R}^n$.

    Let $\rho \in \mathbb{R}^n$ be an arbitrary vector.
    The fan $\Sigma$ is complete, therefore the vector $\rho$ must be in some cone of $\Sigma$.
    From lemma \ref{th:tropcnt}, it follows that all counterterm integrands  (including $\mathcal{I} = \mathcal{I}_{\bf 0}$) have tropical value $a + \mathcal{O}(\epsilon)$, with $a < 0$, on any ray which is not in $\Sigma^{\rm div}$, which implies the same for $\mathcal{I}^{\rm ren}$. 
    Therefore, consider the case when $\rho$ belongs to the interior $\sigma_0^+$ of some cone $\sigma_0 \in \Sigma^{\rm div}$.
    The tropical calculus rule \eqref{eq:tropcalc} gives
    \begin{align}
        \mathrm{Trop\ } \mathcal{I}(\rho) &\le \max(\mathrm{Trop\ }\sum_{\sigma \in \Sigma^{\rm div}|_{\sigma_0}}  (-1)^{\mathrm{dim}(\sigma)}\mathcal{I}_{\sigma}(\rho), \max_{\sigma \in \Sigma^{\rm div} \setminus \Sigma^{\rm div}|_{\sigma_0}} \mathrm{Trop\ }\mathcal{I}_\sigma(\rho)).
    \end{align}
    For the cones $\sigma \in \Sigma^{\rm div} \setminus \Sigma^{\rm div}|_{\sigma_0}$, i.e. those not compatible with $\sigma_0$, $\mathrm{Trop\ }\mathcal{I}_{\sigma}(\rho) = a + \mathcal{O}(\epsilon)$, with $a < 0$, by \eqref{eq:tropct}. Therefore, we need to show that also
    \begin{align}
        \mathrm{Trop\ }\sum_{\sigma \in \Sigma^{\rm div}|_{\sigma_0}}  (-1)^{\mathrm{dim}(\sigma)}\mathcal{I}_{\sigma}(\rho) = a + \mathcal{O}(\epsilon).
    \end{align}
     
    We can group the cones $\sigma \in \Sigma^{\rm div}|_{\sigma_0}$ having the same set $\mathcal{R} =\mathrm{Rays\ }\sigma \setminus \mathrm{Rays\ } \sigma_0$.
    This allows to rewrite the sum over the cones in $\Sigma^{\rm div}|_{\sigma_0}$ as
    \begin{align}
        \sum_{\sigma \in \Sigma^{\rm div}|_{\sigma_0}} (-1)^{\mathrm{dim}(\sigma)}\mathcal{I}_{\sigma}(\rho)
        &=
        \sum_{\mathcal{R}} \pm \left( \sum_{r \in 2^{\mathrm{Rays\ }\sigma_0}} (-1)^{|r|}\mathcal{I}_{\mathrm{Span}_+\mathcal{R}\cup r }\right),
        \label{eq:pairs}
    \end{align}
    where $2^X$ denotes the set of subsets of $X$.
    For each admissible value of $\mathcal{R}$, \eqref{eq:counterinitial} implies that the terms in the innermost sum of \eqref{eq:pairs} have initial form on $\rho \in \sigma_0^+$ equal to $(-1)^{|r|}\mathcal{I}_{\mathrm{Span}_+\mathcal{R}}|_{\rho}$.
    Recalling the elementary identity $0=\sum_{r \in 2^X} (-1)^{r}$, it follows that $\tau_\rho$ acts on each term returning a series of order $\lambda^{a+\mathcal{O}(\epsilon)}$ with $a < 0$, which completes the proof.
\end{proof}

To complete the construction of the subtraction scheme, we have to show that we can integrate the counterterms $\mathcal{I}_\sigma$ as many times as the order of the pole in $\epsilon$ that they cancel, while simultaneously expressing the leftover integrations in terms of locally finite integrands.
To streamline the proof, let us introduce some helpful notation.
First, let us extend the definition of the renormalization map to the counterterms,
\begin{align}
    \mathcal{I}_\sigma \to \mathcal{I}_\sigma^{\rm ren} = \sum_{\sigma' \in \Sigma^{\rm div}|_\sigma} (-1)^{\mathrm{dim}(\sigma') - \mathrm{dim}(\sigma)} \mathcal{I}_{\sigma'}
    \label{eq:ren2},
\end{align}
which includes \eqref{eq:ren} as the particular case with $\sigma = \{{\bf 0}\}$.
A simple application of the inclusion-exclusion principle gives the following
\begin{align}
    \mathcal{I} &=
        \sum_{\sigma \in \Sigma^{\rm div}} \mathcal{I}^{\rm ren}_\sigma,
        \label{eq:forest}
\end{align}
The sum on the RHS of \eqref{eq:forest} includes the contribution \eqref{eq:ren}, for $\sigma = \{ \bf 0 \}$.
We now show that each of the remaining contributions, $\mathcal{I}^{\rm ren}_\sigma$, can be integrated exactly $\rm{dim}(\sigma)$ times, expressing the result as a pole $\epsilon^{-\rm{dim}(\sigma)}$ times a locally finite integrand.

Pick a cone, $\sigma = \rm{Span}_+ \{ \rho_1, \dots, \rho_m\}$, contributing to \eqref{eq:forest} and let $\mathcal{J}_\sigma$ be $\mathcal{I}^{\rm ren}_\sigma$ with the prefactor $v_\sigma$ stripped, that is
\begin{align}
    \mathcal{J}_\sigma \coloneqq  \frac{1}{\prod_{i=1}^m v_{\rho_i}^{1 - \mathrm{Trop\ }\mathcal{I}(\rho_i )}}\mathcal{I}^{\rm ren}_\sigma. 
\end{align}
Complete the rays $\rho^{\sigma} = \{ \rho_1, \dots, \rho_m \}$ of $\sigma$ to a basis, by choosing vectors $\eta = \{\eta_{m+1}, \dots ,\eta_{n}\}$, and consider the matrix $M = (\rho^{\sigma}|\eta)$ having the basis vectors as columns.
We use $M$ to build the monomial change of variables,
\begin{align}
    \alpha_i \to \prod_{j=1}^n t_i^{-M_{i,j}},
    \label{eq:intchange}
\end{align}
under which $\mathcal{J}_\sigma$ transforms to
\begin{align}
    \mathcal{J}_\sigma(t) = \prod_{i=1}^m t_i^{-\mathrm{Trop\ } \mathcal{I}(\rho_i)} \times \tilde{\mathcal{J}}_\sigma(t_{m+1},\dots,t_n),
\end{align}
note that the dependence on the variables $(t_1, \dots, t_m)$ factors out. This is ultimately due to $\mathcal{I}^{\rm ren}_\sigma$ being built from initial forms of $\mathcal{I}$ and to \eqref{eq:factorization}.
The dependence on the remaining variables is through the integrand $\tilde{\mathcal{J}}_\sigma(t_{m+1},\dots,t_n)$, which is of the same form as \eqref{eq:ren} with the vectors $w_\rho$ replaced by $\tilde{w}_\rho = w_\rho \cdot (-M)$.
Recall from our discussion in Section \ref{sec:euler} that under the change \eqref{eq:intchange} the $\mathrm{H}$-presentation of the Newton polytope of the initial form $\mathcal{I}|_{\sigma}$ transforms as
\begin{align}
    \mathrm{Poly}(d_f | \rho_f, d_e, \rho_e) \to \mathrm{Poly}(d_f | \tilde{\rho}_f = \rho_f \cdot (-M)^{-1}, d_e | \tilde{\rho}_e = \rho_e \cdot (-M)^{-1}).
\end{align}
so that the condition $\tilde{w}_\rho \cdot \tilde{\rho}'  = -\delta_{\rho,\rho'}$ is preserved. It follows that $\tilde{\mathcal{J}}_\sigma$ satisfies the geometric property with vectors $\tilde{w}_\rho$ and it is locally finite by Theorem \ref{th:localfiniteren}.
On the other hand, the integration in the $(t_1,\dots,t_m)$ variables can be performed explicitly,
\begin{align*}
    &\det(M) \int \frac{dt_1}{t_1}\dots\frac{dt_m}{t_m} \prod_{i=1}^m \frac{(t_i)^{-\mathrm{Trop\ } \mathcal{I}(\rho_i)}}{(1 + t^{w_{\rho_i} \cdot M})^{1 - \mathrm{Trop\ } \mathcal{I}(\rho_i) }} = \left[ w_{\rho_i} . \rho_j = - \delta_{i,j}, u_i \coloneqq w_{\rho_i} . \eta  \right] =
    \\ &\det(M)\int \frac{dt_1}{t_1}\dots\frac{dt_m}{t_m}  \prod_{i=1}^m \frac{(t_i)^{-\mathrm{Trop\ } \mathcal{I}(\rho_i)}}{(1+ t_i (t_{m+1},\dots,t_n)^{u_i})^{1 - \mathrm{Trop\ } \mathcal{I}(\rho_i) }} =\\
    &\frac{\det(M)}{\prod_{i=1}^m -\mathrm{Trop\ } \mathcal{I}(\rho_i)}\prod_{i=1}^m (t_{m+1},\dots,t_n)^{- \sum_{i=1}^m u_i \mathrm{Trop\ } \mathcal{I}(\rho_i)}.
\end{align*}
Therefore we have,
\begin{align}
    \int \frac{d\alpha}{\alpha}\mathcal{I}^{\rm ren}_\sigma = \frac{\det (M)}{\prod_{i=1}^m -\mathrm{Trop\ } \mathcal{I}(\rho_i)} \int \frac{dt_{m+1}}{t_{m+1}} \dots \frac{dt_n}{t_n} \prod_{i=1}^m (t_{m+1},\dots,t_n)^{- \sum_{i=1}^m u_i \mathrm{Trop\ } \mathcal{I}(\rho_i)} \tilde{\mathcal{J}},
\end{align}
the remaining integrands is locally finite due to $\tilde{\mathcal{J}}$ being locally finite, lemma \ref{lemma:monomials} and the fact that $\mathrm{Trop\ } \mathcal{I}(\rho_i) = \mathcal{O}(\epsilon)$. 

In the above, any choice $\eta$ for the completion of $\rho$ to a basis does the job, but in practice it is useful to choose a subset of the standard unit vectors.
To do so, choose any non-vanishing $m$-minor of the matrix $(\rho_1 | \dots | \rho_m)$, say that is formed by the rows $J^{c}=(1,\dots,n)\setminus J$, then choose $\eta = \{e_j\}_{j \in J}$.
With this choice, $\det(M) = (\rho_f)_{I,J}$, where $\rho_f$ is the matrix appearing in the facet presentation of $\mathrm{Newt\ }\mathcal{I}$. But more importantly,  $\tilde{\mathcal{J}}_\sigma = \mathcal{I}^{\rm ren}|_\sigma$ with $\alpha_i = 1$ for $i \in J^{\rm c}$ and $\alpha_j$ playing the role of $t_j$ if $j \in J$.
In particular, note that if $\mathcal{I}$ is linear in every integration variable, the same is true for $\tilde{\mathcal{J}}_\sigma$, this is helpful in connection with the use of \verb|HyperInt| to integrate the locally finite integrands.

It is convenient to summarize the results of this section in a master theorem,
\begin{theorem}[Subtraction Formula]
    \label{th:subformula}
    Consider a logarithmically divergent Euler integrand $\mathcal{I}$ satisfying the geometric property \ref{def:geometricproperty} with vectors $w_\rho$.
    Let $\Sigma = \mathrm{Fan\ Newt\ }\mathcal{I}$ be its normal fan, and $\Sigma^{\rm div}$ be the fan formed by taking positive spans of those compatible rays of $\Sigma$ on which $\mathrm{Trop\ } \mathcal{I} = \mathcal{O}(\epsilon)$. 
    For each cone $\sigma \in \Sigma^{\rm div}$, including ${\bf 0}$, define     
    \begin{align*}
        v_{\sigma,w} = \prod_{\rho \in \mathrm{Rays\ }\sigma}\left(\frac{1}{1+\alpha^{w_\rho}}\right)^{1 - \mathrm{Trop\ } \mathcal{I}(\rho)},
    \end{align*} 
    and
    \begin{align*}
        \mathcal{I}^{\rm ren}_\sigma = v_\sigma^{-1} \sum_{\sigma' \in \Sigma^{\rm div}|_\sigma} (-1)^{\mathrm{dim}(\sigma')} v_{\sigma'} \mathcal{I}|_{\sigma'}.
    \end{align*}
    List the vectors $\mathrm{Rays\ }\sigma$ as columns of a matrix $\rho^\sigma$, and choose an index $J_\sigma \subseteq (1,\dots,n)$ of cardinality $n-m$, such that its complement, $J_\sigma^{c} = (1,\dots,n)\setminus J_\sigma$, labels a non-zero minor $\det(\rho^\sigma_{J_{\sigma}^c})$ of $\rho^\sigma$.
    Let $\pi_\sigma: \mathbb{R}^n \to \mathbb{R}^{n-m}$ be the linear map that keeps only the entries labelled by $J_\sigma$. 
    Let $u_\sigma = \sum_{\rho \in \mathrm{Rays\ } \sigma} \mathrm{Trop\ } \mathcal{I}(\rho)\  \pi_\sigma(w_\rho)$. 
    Set $\mathrm{Vol}(\sigma) \coloneqq \det(\rho^{\sigma}_{J^{c}_\sigma}) \prod_{\rho \in \mathrm{Rays\ }\sigma}(-\mathrm{Trop\ } \mathcal{I}(\rho))^{-1}$.
    Then
    \begin{align}
        \int_{\mathbb {R}^n_{+}} \frac{d\alpha}{\alpha}\mathcal{I} =\sum_{\sigma \in \Sigma^{\rm div}} \mathrm{Vol(\sigma)} 
        \int_{V_\sigma} \frac{d\alpha}{\alpha} \alpha^{u_\sigma}\mathcal{I}_\sigma^{\rm ren},
    \end{align}
    where $V_\sigma = \{\alpha_{j} \ge 0\}_{j \in J_\sigma} \cap \{\alpha_i = 1\}_{i \in J_\sigma^c}$ and all integrands are locally finite.
\end{theorem}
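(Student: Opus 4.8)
The plan is to assemble Theorem \ref{th:subformula} by combining the pieces developed earlier in the section, essentially as a bookkeeping exercise once the right objects are in place. First I would record the decomposition \eqref{eq:forest}, $\mathcal{I} = \sum_{\sigma \in \Sigma^{\rm div}} \mathcal{I}^{\rm ren}_\sigma$, which follows from the inclusion-exclusion identity applied to the definitions \eqref{eq:ren} and \eqref{eq:ren2}. Integrating both sides over $\mathbb{R}^n_+$ against $\tfrac{d\alpha}{\alpha}$ immediately reduces the claim to a per-cone statement: for each $\sigma \in \Sigma^{\rm div}$ of dimension $m$,
\begin{align*}
    \int_{\mathbb{R}^n_+} \frac{d\alpha}{\alpha}\,\mathcal{I}^{\rm ren}_\sigma = \mathrm{Vol}(\sigma) \int_{V_\sigma} \frac{d\alpha}{\alpha}\,\alpha^{u_\sigma}\,\mathcal{I}^{\rm ren}_\sigma,
\end{align*}
with the integrand on the right locally finite. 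This is precisely what was established in the computation preceding the theorem, so the bulk of the work is just to verify that the notation of the theorem statement matches the construction.

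The key steps, in order, would be: (i) strip the prefactor, writing $\mathcal{I}^{\rm ren}_\sigma = v_\sigma\, \mathcal{J}_\sigma$ with $\mathcal{J}_\sigma$ as in the text; (ii) perform the monomial change of variables $\alpha = t^{-M}$ with $M = (\rho^\sigma \mid \eta)$ where $\eta = \{e_j\}_{j \in J_\sigma}$ — here Lemma \ref{th:normalintersect} guarantees $\sigma$ is a cone of $\Sigma$, so \eqref{eq:factorization} applies and the dependence on $t_1, \dots, t_m$ factors as $\prod_{i=1}^m t_i^{-\mathrm{Trop\ }\mathcal{I}(\rho_i)}$ in $\mathcal{J}_\sigma$; (iii) carry out the $m$ elementary one-dimensional integrals $\int_0^\infty \tfrac{dt_i}{t_i}\, t_i^{-\mathrm{Trop\ }\mathcal{I}(\rho_i)}(1 + t_i\,(t_{m+1},\dots,t_n)^{u_i})^{-(1-\mathrm{Trop\ }\mathcal{I}(\rho_i))}$, using the geometric property $w_{\rho_i}\cdot\rho_j = -\delta_{ij}$ to see that the argument of the $(1+\cdot)$ factor depends on $t_i$ only through an overall power $t_i^{w_{\rho_i}\cdot\rho_i} = t_i^{-1}$; each integral contributes a factor $\bigl(-\mathrm{Trop\ }\mathcal{I}(\rho_i)\bigr)^{-1}$ together with the monomial $(t_{m+1},\dots,t_n)^{-u_i\,\mathrm{Trop\ }\mathcal{I}(\rho_i)}$, which assembles into $\mathrm{Vol}(\sigma)$ and $\alpha^{u_\sigma}$; (iv) identify the leftover integrand $\tilde{\mathcal{J}}_\sigma$ as having the same shape as \eqref{eq:ren} but with $w_\rho$ replaced by $\tilde{w}_\rho = w_\rho\cdot(-M)$, check the geometric property survives the linear change (the H-presentation transforms by $\rho \mapsto \rho\cdot(-M)^{-1}$ so $\tilde w_\rho\cdot\tilde\rho' = w_\rho\cdot\rho' = -\delta_{\rho,\rho'}$), and invoke Theorem \ref{th:localfiniteren} to conclude $\tilde{\mathcal{J}}_\sigma$ is locally finite; (v) apply Lemma \ref{lemma:monomials} together with $\mathrm{Trop\ }\mathcal{I}(\rho_i) = \mathcal{O}(\epsilon)$ to conclude that $\alpha^{u_\sigma}\tilde{\mathcal{J}}_\sigma$ is still locally finite, so the monomial prefactor introduced by the integration does not spoil finiteness.

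The main obstacle I anticipate is step (iii) combined with (iv): making sure the iterated one-dimensional integrations really decouple cleanly. Naively the integrals over $t_1, \dots, t_m$ are coupled through the cross-terms $t_i^{w_{\rho_i}\cdot\rho_j}$ for $i \ne j$, and it is the geometric property — specifically $w_{\rho_i}\cdot\rho_j = -\delta_{ij}$ for \emph{all} compatible rays, not just $\rho_i$ itself — that kills these, leaving each $t_i$-integral of the clean form $\int_0^\infty \tfrac{dt}{t}\,t^{-a_i}(1+c_i t^{-1})^{-(1-a_i)}$ with $c_i$ independent of $t_i$. Verifying that the $c_i$ collect into exactly the monomial $\alpha^{u_\sigma}$ claimed, with $u_\sigma = \sum_\rho \mathrm{Trop\ }\mathcal{I}(\rho)\,\pi_\sigma(w_\rho)$, is the one genuinely computational point; the convergence of these one-dimensional integrals is immediate since $\mathrm{Trop\ }\mathcal{I}(\rho_i) = \mathcal{O}(\epsilon)$ keeps $-a_i$ slightly negative (this is where ``logarithmically divergent'' is used). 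Everything downstream is then a direct appeal to the already-proven Theorems \ref{th:localfinite3} and \ref{th:localfiniteren} and Lemma \ref{lemma:monomials}, so I would keep the writeup of those parts brief.
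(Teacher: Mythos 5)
Your proposal is correct and follows essentially the same route as the paper: it reproduces the inclusion--exclusion identity \eqref{eq:forest}, the monomial change of variables $\alpha = t^{-M}$ with $M = (\rho^\sigma\mid\eta)$, the explicit evaluation of the $t_1,\dots,t_m$ integrals using $w_{\rho_i}\cdot\rho_j = -\delta_{ij}$ to decouple them, and the appeal to Theorem~\ref{th:localfiniteren} and Lemma~\ref{lemma:monomials} for the leftover integrand, which is exactly the computation the paper performs immediately before stating Theorem~\ref{th:subformula} as a summary. The only slips are cosmetic (a sign convention on the power of $t_i$ inside the $(1+\cdot)$ factor, and reusing the symbol $\mathcal{I}^{\rm ren}_\sigma$ for both the stripped and unstripped integrand, a notational inconsistency already present between \eqref{eq:ren2} and the theorem statement).
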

Let us remark that any Euler integrand that does not satisfy the assumptions of Th. \ref{th:subformula}, either because it is power-divergent or because it does not satisfy the geometric property, can always be expressed as a collection of Euler integrands that do satisfy the assumptions. This can be algorithmically done by applying the Nilsson-Passare analytical continuation along the rays that break the assumptions above. Therefore, the subtraction formula can be applied to any Euler integral.

In the next section, we will illustrate explicitly this subtraction formula in a variety of examples.
We conclude by noting that our subtraction formula does more than merely providing the Laurent expansion for an Euler integral: it keeps separated the contributions to a given order in $\epsilon$ that come from different regions in the integration space.
In the case of Feynman integrals, this has the appealing feature of keeping separated poles that have an UV, IR or mixed origin. 

\section{Examples}
\label{sec:examples}

We now present explicit examples of our subtraction scheme, drawing from a range of physics applications.
All explicit results are obtained by integrating locally finite integrands with \verb|HyperInt|, which is valid thanks to built-int the shuffle regularization procedure of this algorithm.

\subsection{One loop triangle}

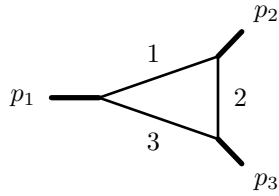
\begin{figure}[h!]
\centering
\begin{fmffile}{first-diagram}
\begin{fmfgraph*}(80,50)
    \fmfkeep{first-diagram}
    \fmfpen{thin}
    \fmfleft{e1}
    \fmfright{e3,e2}
    \fmflabel{$p_1$}{e1}
    \fmflabel{$p_2$}{e2}
    \fmflabel{$p_3$}{e3}
    \fmf{plain,tension=.2, label=$1$}{v1,v2}
    \fmf{plain,tension=.2, label=$2$,label.side=left}{v2,v3}
    \fmf{plain,tension=.2, label=$3$,label.side=left}{v3,v1}
    \fmf{plain, width=thick}{e1,v1}
    \fmf{plain, width=thick}{e2,v2}
    \fmf{plain, width=thick}{e3,v3}
\end{fmfgraph*}
\end{fmffile}
\vspace{.1cm}
\caption{A triangle diagram}
\label{fig:trianglediagram}
\end{figure}
Consider the diagram of Fig. \ref{fig:trianglediagram}.
In the frame where $\alpha_3 = 1$, the corresponding integral is given by
\begin{align*}
    I_G(p_1^2,p_2^2,p_3^2;\epsilon) &=
    \Gamma(d_G) \int_0^\infty \frac{d\alpha}{\alpha} \mathcal{I}_G(p_1^2,p_2^2,p_3^2;\epsilon) \\
    &=
    \Gamma(d_G) \int_0^\infty \frac{d\alpha}{\alpha} \alpha \left(1+\alpha_1+\alpha_2\right)^{d_G-\frac{D}{2}}\left(\alpha_1 p_1^2 + \alpha_1 \alpha_2 p_2^2+ \alpha_2 p_3^2\right)^{-d_G},
\end{align*}
where $d_G = 3 - D/2$. 
Let us choose kinematics $p_2^2=p_3^2=0$ and $D=4-2\epsilon$ space-time dimensions.
In Fig. \ref{fig:trianglegeometry} is shown the polytope $\mathrm{Newt\ } \mathcal{I}(p_1^2, 0,0;\epsilon)$ and its normal fan.
\begin{figure}[h!]
    \centering
    \begin{subfigure}[b]{.3\textwidth}
        \centering        \includegraphics[width=\textwidth]{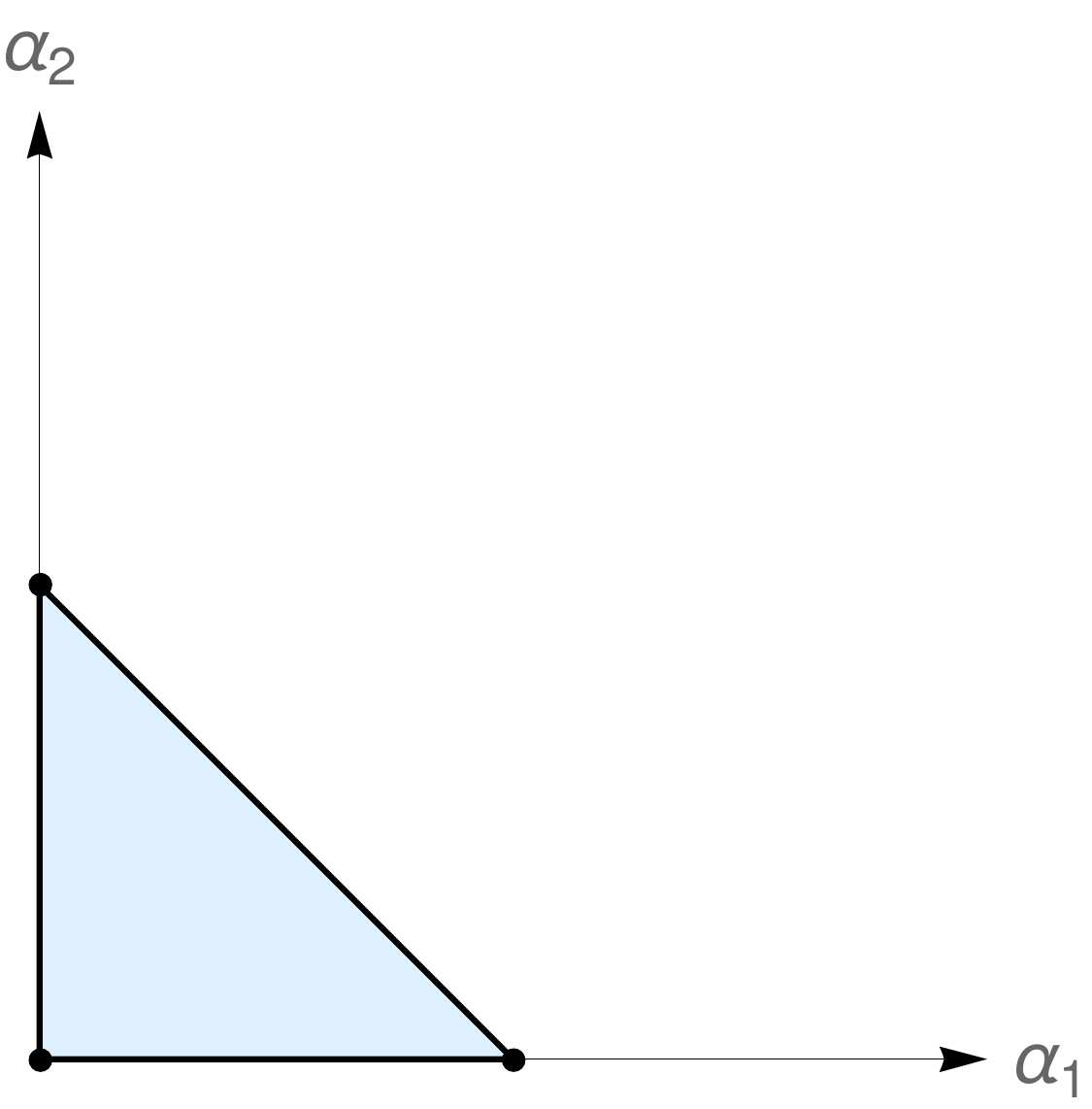}
        \caption{}
    \end{subfigure}%
    \begin{subfigure}{0.3\textwidth}
        \centering
        \includegraphics[width=\textwidth]{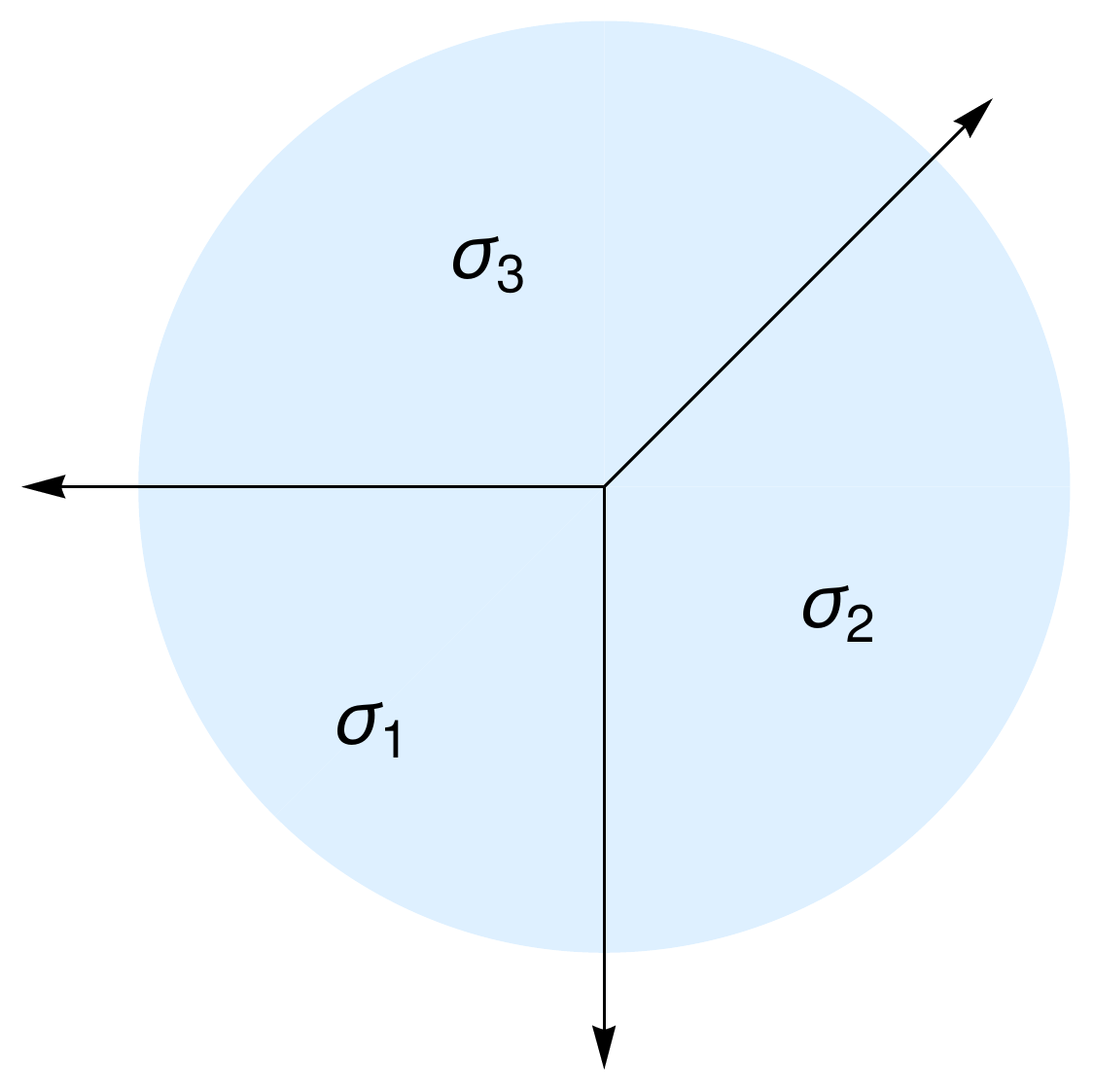}
        \caption{}
    \end{subfigure}
    \begin{subfigure}[b]{0.3\textwidth}
        \centering
        \includegraphics[width=\textwidth]{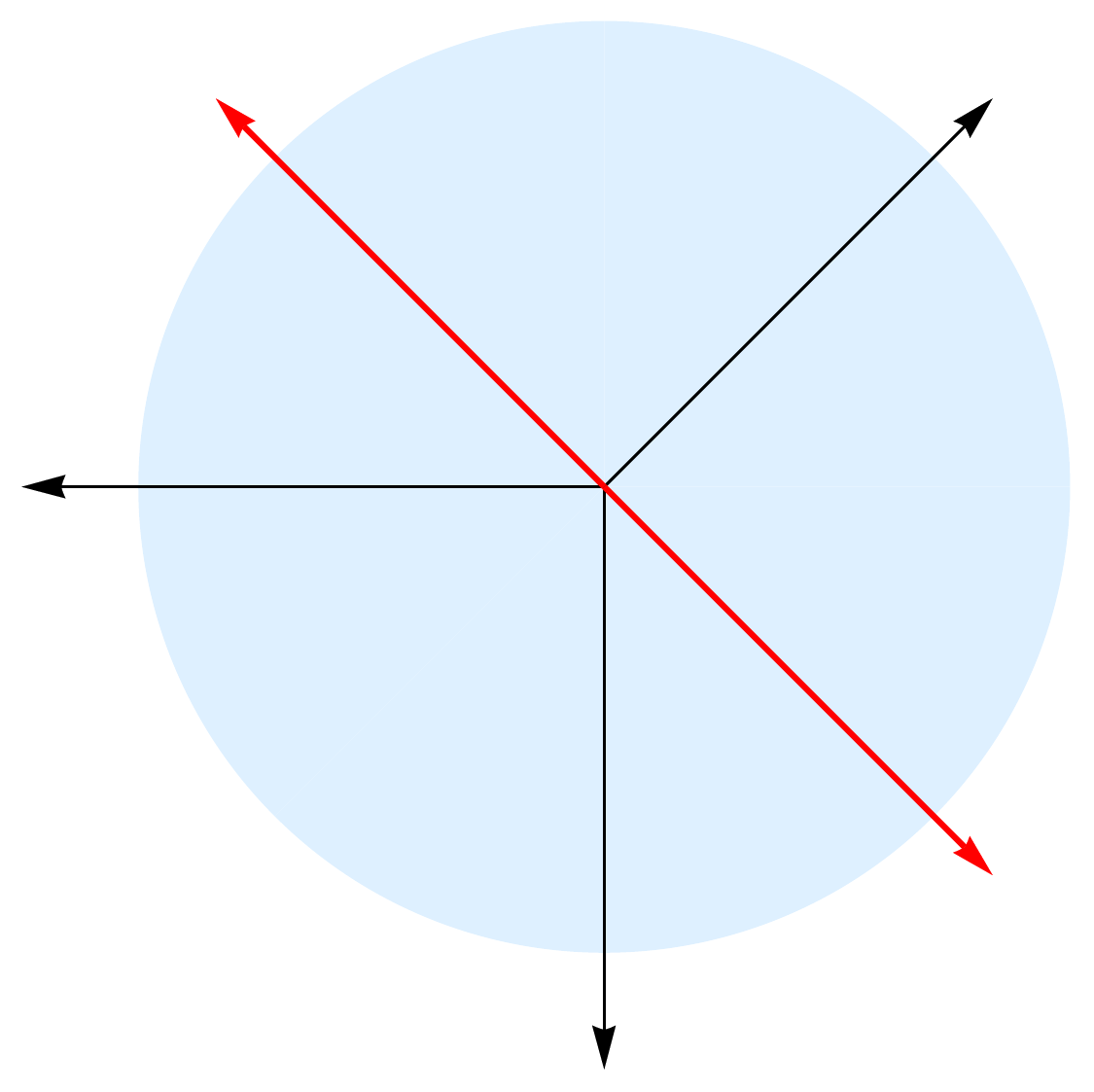}
        \caption{}
    \end{subfigure}
    \caption{The Newton polytope (a) and its normal fan (b). The refinement associated to the ``bad'' subtraction introduces two new rays, drawn in red in (c).}
    \label{fig:trianglegeometry}
\end{figure}
 The tropicalization $\mathrm{Trop\ } \mathcal{I}_G(p_1^2,0,0;\epsilon)$ is of order $\mathcal{O}(\epsilon)$ on the rays $\rho_{1} = (-1,0)$ and $\rho_3 = (1,1)$. The divergent subfan is therefore 
\begin{align}
    \Sigma^{\rm div} = \left\{ \rho_1, \rho_3, \sigma_{3} = \mathrm{Span}_+\{\rho_1,\rho_3\}\right\}.
\end{align}
The integrand $\mathcal{I}_G(p_1^2,0,0;\epsilon)$ satisfies the geometric property with vectors 
\begin{alignat*}{3}
    &w_{\rho_1}&&=(-1,1)\quad &&\Rightarrow\quad  v_{\rho_1} = \frac{1}{1+\alpha_2/\alpha_1} \\
    &w_{\rho_3}&&=(0,1)\quad &&\Rightarrow\quad v_{\rho_1} = \frac{1}{1+\alpha_2}
\end{alignat*}  
We are therefore in position to apply Th. \ref{th:subformula}.
\begin{table}[h!]
\renewcommand{\arraystretch}{1.5}
\centering
\begin{tabular}{ |c c c c c| } 
 \hline
 $\sigma \in \Sigma^{\rm div}$ & $\mathrm{Vol}(\sigma)$ & $\mathcal{I}|_\sigma$ & $V_\sigma$ & $u_\sigma$ \\ 
 \hline
 $\bf 0$ & $1$ & $(p_1^2)^{-(1+\epsilon)}\alpha_1^{-\epsilon} \alpha_2 (1+\alpha_1+\alpha_2)^{-1+2\epsilon}$ & $(\alpha_1,\alpha_2)$ & (0,0) \\ 
 \hline
 $\rho_1$ & $\epsilon$ & $(p_1^2)^{-(1+\epsilon)}\alpha_1^{-\epsilon} \alpha_2 (1+\alpha_2)^{-1+2\epsilon}$ & $(1,\alpha_2)$ & (0,1) \\ 
 \hline
 $\rho_3$ & $\epsilon$ & $(p_1^2)^{-(1+\epsilon)}\alpha_1^{-\epsilon} \alpha_2 (\alpha_1+\alpha_2)^{-1+2\epsilon}$ & $(\alpha_1,1)$ & (1,0) \\ 
 \hline
 $\mathrm{Span}_+\{\rho_1,\rho_3\}$ & $\epsilon$ & $(p_1^2)^{-(1+\epsilon)}\alpha_1^{-\epsilon} \alpha_2^{2\epsilon}$  & $(1,1)$ & (0,0) \\ 
 \hline
\end{tabular}
\caption{The data to build the subtraction scheme}
\label{tab:data}
\end{table}
Table \ref{tab:data} gathers all the data necessary to build the subtraction scheme. We have
\begin{align*}
    \frac{1}{\Gamma(d_G)} I_G(p_1^2,0,0;\epsilon) &= 
    \int_{V_{\bf 0}} \frac{d\alpha}{\alpha}\left(\mathcal{I} - (v_{\rho_1})^{1+\epsilon} \mathcal{I}|_{\rho_1}-(v_{\rho_3})^{1+\epsilon} \mathcal{I}|_{\rho_3}+(v_{\rho_1})^{1+\epsilon} (v_{\rho_3})^{1+\epsilon}\mathcal{I}|_{\mathrm{Span\ }\{\rho_1,\rho_3\}}\right) \\
    &+\frac{1}{\epsilon}\int_{V_{\rho_1}} \frac{d\alpha}{\alpha}\alpha^{u_{\rho_1}}\left(\mathcal{I}|_{\rho_1} - (v_{\rho_3})^{1+\epsilon} \mathcal{I}_{\mathrm{Span}_+ \{\rho_1,\rho_3\}}\right)\\
    &+\frac{1}{\epsilon}\int_{V_{\rho_3}} \frac{d\alpha}{\alpha}\alpha^{u_{\rho_3}}\left(\mathcal{I}|_{\rho_3} - (v_{\rho_1})^{1+\epsilon} \mathcal{I}_{\mathrm{Span}_+ \{\rho_1,\rho_3\}}\right)\\
    &+\frac{1}{\epsilon^2} (p_1^2)^{-(1+\epsilon)}.
\end{align*}
The integrals appearing on this formula can be expanded in series of $\epsilon$ directly at the integrand level,
which gives
\begin{align}
    \tau_\epsilon \frac{1}{\Gamma(d_G)} I_G(p_1^2,0,0;\epsilon) = \frac{1}{\epsilon^2}\frac{1}{p_1^2} - \frac{1}{\epsilon}\frac{\log(p_1^2)}{p_1^2} - \zeta(2)\frac{1}{p_1^2} + \frac{\frac{1}{2}\log(p_1^2)^2}{p_1^2} + \mathcal{O}(\epsilon).
\end{align}

\subsection{One loop triangle, bad subtraction}

Let us revisit the previous example choosing now
\begin{alignat*}{3}
    &w_{\rho_1}&&=-(1,1)\quad &&\Rightarrow\quad  v_{\rho_1} = \frac{1}{1+\alpha_1 \alpha_2} \\
    &w_{\rho_3}&&=1(,1)\quad &&\Rightarrow\quad v_{\rho_1} = \frac{1}{1+1/(\alpha_1 \alpha_2)}
\end{alignat*}  
This will show the importance of the assumptions of Th. \ref{th:localfinitesuf}.
Consider the renormalized integrand,
\begin{align}
\mathcal{I}^{\rm ren} = \mathcal{I} - (v_{\rho_1})^{1+\epsilon} \mathcal{I}|_{\rho_1}-(v_{\rho_3})^{1+\epsilon} \mathcal{I}|_{\rho_3}+(v_{\rho_1})^{1+\epsilon} (v_{\rho_3})^{1+\epsilon}\mathcal{I}|_{\mathrm{Span\ }\{\rho_1,\rho_3\}},
\end{align}
we have that $\mathrm{Trop\ } \mathcal{I}^{\rm ren} = a + \mathcal{O}(\epsilon)$ on the rays of $\mathrm{Newt\ } \mathcal{I}$, with $a<0$, so that naively one could expect $\mathcal{I}^{\rm ren}$ to be locally finite.
However, what matters is the \emph{common refinement} of the fans of the integrands appearing in the combination $\mathcal{I}^{\rm ren}$, which include two new rays shown in Fig. \ref{fig:trianglegeometry}.
For instance, we have that $\mathrm{Trop\ } \mathcal{I}^{\rm ren}(1,-1) = \epsilon$, thus invalidating the assumptions of Th. \ref{th:localfinitesuf}.
Consequently, the expansion $\tau_\epsilon \mathcal{I}^{\rm ren}$ yields divergent integrals, and therefore the renormalized integrand is incorrect.

\subsection{Sunrise}

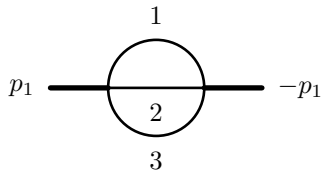
\begin{figure}[h!]
\centering
\begin{fmffile}{diagram-sunrise}
\begin{fmfgraph*}(80,50)
    \fmfpen{thin}
    \fmfleft{e1}
    \fmfright{e2}
    \fmflabel{$p_1$}{e1}
    \fmflabel{$-p_1$}{e2}
    \fmf{plain,tension=.2, label=$1$, left=1}{v1,v2}
    \fmf{plain,tension=.2, label=$2$}{v1,v2}
    \fmf{plain,tension=.2, label=$3$, right=1 }{v1,v2}
    \fmf{plain, width=thick}{e1,v1}
    \fmf{plain, width=thick}{e2,v2}
\end{fmfgraph*}
\end{fmffile}
\vspace{.1cm}
\caption{Sunrise diagram}
\label{fig:sunrisediagram}
\end{figure}
Consider the diagram of Fig. \ref{fig:sunrisediagram}.,
in the gauge with $\alpha_3 = 1$ the corresponding Feynman integral is
\begin{align}
    I(p^2, {\bf m};\epsilon) = \Gamma(3-D) \int_{\mathbb{R}^2_+} \frac{d\alpha}{\alpha} &\alpha \left(\alpha_1 \alpha_2 + \alpha_1 + \alpha_2 \right)^{3(-1+\epsilon)} \nonumber\\
    &\times\left(p^2 \alpha_1 \alpha_2 + (m_1^2 \alpha_1 + m_2^2 \alpha_2 + m_3^2)(\alpha_1 \alpha_2 + \alpha_1 +\alpha_2 \right)^{1-2\epsilon},
    \label{eq:sunriseintegral}
\end{align}
It is well known that in $D=2-2\epsilon$ space-time dimension \eqref{eq:sunriseintegral} is elliptic, and therefore cannot be evaluated in terms of polylogarithms. Geometrically, this is reflected in the fact that the diagram is not linearly reducible.
We will consider the diagram in $D=4-2\epsilon$, in which case the diagram has logarithmic sub-divergences. We list the divergent rays and the corresponding vectors $w_\rho$ as columns of the matrices
\begin{align}
\rho = \left(\begin{matrix}
  1 & 0 & -1 \\
  0 & 1 & -1
\end{matrix}\right) \quad
w = \left(\begin{matrix}
  -1 & 0  & 1 \\ 
   0 & -1 & 0
\end{matrix}\right)
\label{eq:doubleboxdata}
\end{align}
In this example, the Nilsson-Passare analytical continuation would yield finite integrals with poles in $\epsilon$ in the prefactors. Therefore, require expanding the integrals at higher $\epsilon$ orders, which cannot be computed using \verb|HyperInt| due to the underlying lack of linear reducibility in the geometry of the problem - which is dimension-independent.
The subtraction formula instead requires to compute integrals expanded only up to the finite part, which can directly be evaluated via \verb|HyperInt|, obtaining
\begin{align}
    \Gamma(-1+2\epsilon)^{-1} I(p^2, {\bf m};\epsilon) = \frac{\sum_{i=1}^3 m_i^2}{\epsilon} + \frac{s}{2} + \sum_{i=1}^3 m_i^2 - 2 \sum_{i=1}^3 m_i^2 \log(m_i).
\end{align}

This example shows the value of being able to extract only certain parts of a Feynman integral. The full $\epsilon$-expansion of the sunrise in $D=4-2\epsilon$ eventually requires elliptic functions, but if one is interested only in the result up to the finite part, the full complexity of the problem can be avoided, obtaining a result in terms of logarithms only.

\subsection{Double box}

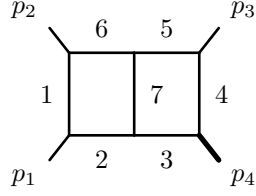
\begin{figure}[h!]
\centering
\begin{fmffile}{diagram-doublebox}
\begin{fmfgraph*}(80,50)
    \fmfpen{thin}
    \fmfleft{e2,e1}
    \fmfright{e3,e4}
    \fmflabel{$p_2$}{e1}
    \fmflabel{$p_1$}{e2}
    \fmflabel{$p_4$}{e3}
    \fmflabel{$p_3$}{e4}
    \fmf{plain,tension=.3, label=$1$}{v1,v2}
    \fmf{plain,tension=.3, label=$2$}{v2,v3}
    \fmf{plain,tension=.3, label=$3$}{v3,v4}
    \fmf{plain,tension=.3, label=$4$}{v4,v5}
    \fmf{plain,tension=.3, label=$5$}{v5,v6}
    \fmf{plain,tension=.3, label=$6$}{v6,v1}
    \fmf{plain,tension=0, label=$7$}{v3,v6}
    \fmf{plain}{e1,v1}
    \fmf{plain}{e2,v2}
    \fmf{plain, width=thick}{e3,v4}
    \fmf{plain}{e4,v5}
\end{fmfgraph*}
\end{fmffile}
\vspace{.1cm}
\caption{Double Box diagram}
\label{fig:doubleboxdiagram}
\end{figure}
Let us now move to a less trivial two-loop example, the one-mass double box diagram of Fig. \ref{fig:doubleboxdiagram}, with 
$p_1^2=p_2^2=p_3^2=0$ and $p_4^2\ne 0$ in $D=4-2\epsilon$.
This type of integral is relevant, for example, in the study of Higgs production through gluon fusion.

In the gauge with $\alpha_7=1$, the Feynman integral reads
\begin{align}
    I(s,t,p_4^2;\epsilon) = \Gamma(3+2\epsilon) \int_{\mathbb{R}^6_+} \frac{d\alpha}{\alpha} \alpha\  \mathcal{U}^{1+3\epsilon} \mathcal{F}^{-(3 + 2 \epsilon)},
\end{align}
with Symanzik polynomials 
\begin{alignat}{2}
&\mathcal{U} = 
 &&\alpha_1 \alpha_3 + \alpha_2 \alpha_3 + \alpha_1 \alpha_4 + \alpha_2 \alpha_4 + \alpha_1 \alpha_5 + \alpha_2 \alpha_5 + \alpha_3 \alpha_6 + \alpha_4 \alpha_6 + \alpha_5 \alpha_6  \notag\\
& && + \alpha_1 + \alpha_2 + \alpha_3 + \alpha_4 + \alpha_5 + \alpha_6  \notag\\
&\mathcal{F} = &&p_4^2 \left(\alpha_1 \alpha_3  \alpha_4 +  \alpha_2 \alpha_3 \alpha_4 + \alpha_3 \alpha_4 \alpha_6 + \alpha_2 \alpha_4 + \alpha_3 \alpha_4\right)  + t \alpha_1 \alpha_4 \notag\\
& &&+ s \left(\alpha_1 \alpha_3 \alpha_5 + 
    \alpha_2 \alpha_3 \alpha_5 + 
    \alpha_2 \alpha_3 \alpha_6 +  
    \alpha_2 \alpha_4 \alpha_6 + \alpha_2 \alpha_5 \alpha_6 + \alpha_3 \alpha_5 \alpha_6 + \right.   \notag\\
& &&\left. +  \alpha_2 \alpha_5 +  \alpha_3 \alpha_5 +  \alpha_2 \alpha_6 +  \alpha_3 \alpha_6 \right) \notag
\end{alignat}
The integrand is divergent along $6$ rays, we list the divergent rays and the associated vectors $w_\rho$ as columns of the following matrices
\begin{align}
\rho = \left(\begin{matrix}
  1 & 1 & 0 & 0 & 0 & -1 \\
  1 & 0 & 0 & 0 & -1 & -1 \\
  0 & 0 & 0 & 0 & -1 & -1 \\
  0 & 0 & -1 & 1 & -1 & 0 \\
  0 & 0 & -1 & 1 & 0 & 0 \\
  0 & 1 & -1 & 0 & 0 & 0
\end{matrix}\right) \quad
w = \left(\begin{matrix}
  0 & -1 & 1 & 0 & 0 & 0 \\
  -1 & 1 & 0 & -1 & 1 & 0 \\
  0 & 0 & 0 & 0 & 1 & 0 \\
  0 & 0 & 0 & 0 & -1 & 0 \\
  0 & 0 & 0 & 1 & -1 & 0 \\
  1 & 0 & 0 & 0 & 0 & -1
\end{matrix}\right)
\label{eq:doubleboxdata}
\end{align},
Let us denote by $(i_1, \dots, i_m)$ the cone $\mathrm{Span}_+ \{\rho_i\}_{i=1,\dots,m}$. Then we have
\begin{align*}
    \Sigma^{\rm div} = \{&(), \\ &(1), (2), (3), (4), (5), (6), \\ &(1, 2), (1, 3), (1, 4), (1, 5), (2, 3), (2, 4), (2, 5), (2, 6), (3, 5), (4, 5), (4, 6), (5, 6),\\ &(1, 2, 3), (1, 2, 4), (1, 2, 5), (1, 3, 5), (1, 4, 5), (2, 3, 5), (2, 4, 5), (2, 4, 6), (2, 5, 6), (4, 5, 6), \\ &(1, 2, 3, 5), (1, 2, 4, 5), (2, 4, 5, 6)\}.
\end{align*}
So that the subtraction formula contains $32$ terms.
This may seem a bit overwhelming, but note that sector decomposition would yield $141$ terms - and break linear reducibility - while Nilsson-Passare would yield a whopping $1497$ number of terms and require expanding integrands up to order $\mathcal{O}(\epsilon^6)$.
\footnote{These numbers have been obtained with a simple-minded private implementation of both algorithms. It is most likely possible to obtain better formulae with more sophisticated tricks}.

We do not write down explicitly the subtraction, although all the necessary data is provided (somewhat implicitly) in \eqref{eq:doubleboxdata}.
For reasons of space, we report the result at a particular kinematical point, $s=t=p_4^2=1$, \footnote{Once again, the result is obtained exactly with {\bf HyperInt}. There is no obstacle in obtaining an analytic result for symbolic values of $p_4^2$, $s$ and $t$.},
\begin{align}
    \Gamma(3+2\epsilon)^{-1} I(1,1,1;\epsilon) = &\frac{1}{\epsilon^4} \frac{1}{2} - \frac{1}{\epsilon^3} \frac{3}{2}   + \frac{1}{\epsilon^2}\frac{14 - \zeta(2)}{4}+ \frac{1}{\epsilon}\frac{-60 + 42 \zeta(2) + 6 \zeta(3)}{8}\\
    &\frac{11160-8820\zeta(2)+1332\zeta(2)^2-1620 \zeta(3)}{720} + \mathcal{O}(\epsilon),
\end{align}
which agrees with the numerical values obtained by \verb|FIESTA| \cite{Smirnov:2021rhf}.

\subsection{Eye Graph}

\begin{figure}[h!]
\centering
\begin{fmffile}{diagram-eye}
\begin{fmfgraph*}(80,50)
    \fmfpen{thin}
    \fmfleft{e1}
    \fmftop{w1}
    \fmfbottom{w2}
    \fmfright{e2}
    \fmflabel{$p_1$}{e1}
    \fmflabel{$-p_1$}{e2}
    \fmf{phantom,tension=5}{w1,v3}
    \fmf{phantom,tension=5}{w2,v4}
    \fmf{plain,tension=1,label=$4$}{v3,v1}
    \fmf{plain,tension=1,label=$3$}{v2,v3}
    \fmf{plain,tension=1,label=$1$}{v4,v1}
    \fmf{plain,tension=1,label=$2$}{v2,v4}
    \fmf{plain,tension=.5,label=$5$,left=.25}{v3,v4}
    \fmf{plain,tension=.5,label=$6$,right=.25}{v3,v4}
    \fmf{plain,tension=5,width=thick}{e1,v1}
    \fmf{plain,tension=5,width=thick}{e2,v2}
\end{fmfgraph*}
\end{fmffile}
\vspace{.1cm}
\caption{Eye diagram}
\label{fig:eyediagram}
\end{figure}
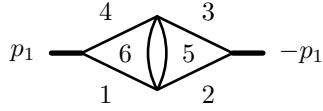
We now consider an example which violates the geometric property, and explain how to deal with it.
Consider the ``eye-diagram'' of Fig. \ref{fig:eyediagram}, in $D=4-2\epsilon$ and choosing $\alpha_6=1$ the corresponding integral is given by
\begin{align}
    I(s;\epsilon) = \Gamma(3\epsilon) \int_{\mathbb{R}^5_+} \frac{d\alpha}{\alpha} \alpha \mathcal{U}^{-2 + 4 \epsilon} \mathcal{F}^{-3\epsilon},
\end{align}
with Symanzik polynomials
\begin{alignat}{2}
&\mathcal{U} = &&\alpha_1 \alpha_2 \alpha_5 + \alpha_1 \alpha_3 \alpha_5 + \alpha_2 \alpha_4 \alpha_5 + \alpha_3 \alpha_4 \alpha_5 + \alpha_1 \alpha_2 \alpha_6 + \alpha_1 \alpha_3 \alpha_6 \notag \\
& &&+ \alpha_2 \alpha_4 \alpha_6 + \alpha_3 \alpha_4 \alpha_6 + \alpha_1 \alpha_5 \alpha_6 + \alpha_2 \alpha_5 \alpha_6 + \alpha_3 \alpha_5 \alpha_6 + \alpha_4 \alpha_5 \alpha_6 \notag\\
&\mathcal{F} = &&s \left(\alpha_1 \alpha_2 \alpha_3 \alpha_5 + \alpha_1 \alpha_2 \alpha_4 \alpha_5 + \alpha_1 \alpha_3 \alpha_4 \alpha_5 + \alpha_2 \alpha_3 \alpha_4 \alpha_5 \right. \notag\\
& &&+\alpha_1 \alpha_2 \alpha_3 \alpha_6 + \alpha_1 \alpha_2 \alpha_4 \alpha_6 + \alpha_1 \alpha_3 \alpha_4 \alpha_6 + \alpha_2 \alpha_3 \alpha_4 \alpha_6 + \alpha_1 \alpha_3 \alpha_5 \alpha_6  \notag\\
& &&+\left. \alpha_2 \alpha_3 \alpha_5 \alpha_6 + \alpha_1 \alpha_4 \alpha_5 \alpha_6 + \alpha_2 \alpha_4 \alpha_5 \alpha_6\right).
\end{alignat}
The integrand has logarithmic divergences on the rays 
\begin{align}
    \rho_1 &= (1,1,1,1,0,0) \\ 
    \rho_2 &= (0,1,1,0,0,0) \\
    \rho_3 &= (1,0,0,1,0,0),
\end{align}
$\rho_1$ is compatible with both $\rho_2$ and $\rho_3$. Due to the relation $\rho_1 = \rho_2 + \rho_3$, the geometric property cannot be satisfied.
To solve the issue, we can use Nilsson-Passare to analytically continue along e.g. $\rho_2$,
this gives
\begin{align}
    \mathcal{I} = \frac{(1-2\epsilon)s^{-3\epsilon}}{\epsilon} \mathcal{I}_1 + \frac{3 s^{-3\epsilon}}{2} \mathcal{I}_2,
\end{align}
the explicit pole in front of $\mathcal{I}_1$ means the corresponding integral has to be computed up to order $\mathcal{O}(\epsilon)$.
The divergent fans are
\begin{align}
    \Sigma^{\rm div}_1 = \{\mathrm{Span}_+ \rho_3 \} 
    \quad \Sigma^{\rm div}_2 = \{\mathrm{Span}_+ \rho_1, \mathrm{Span}_+ \rho_3,  \mathrm{Span}_+ (\rho_1,\rho_3)\},
\end{align}
so that both integrand satisfy the geometric property and can be computed by the subtraction formula.
Putting everything together, we find
\begin{align}
    \Gamma(3\epsilon)^{-1} I(s;\epsilon) = \frac{1}{\epsilon^2} + \frac{7}{\epsilon} + (31 - 6 \zeta(2)) + \mathcal{O}(\epsilon),
\end{align}
which is in agreement with previous results \cite{Chetyrkin:1981qh,Hillman:2023ezp}.

\subsection{A region integral}

We now consider an example which involves a generalization of a Feynman integral,
\begin{align}
    I(s; \epsilon, \mu) = s^{-(1+\epsilon+\mu)} \int_{\mathbb{R}^3_\ge} \frac{d\alpha}{\alpha} \alpha^{(1+\mu,1,1-\mu)} (1 + \alpha_1 + \alpha_2)^{2\epsilon} (\alpha_2 + t \alpha_1 \alpha_3)^{-(2+\epsilon)}.
    \label{eq:regionintegral}
\end{align}
Such integrals appearing in the context of the \emph{Method of Regions}, a technique useful to study Feynman integrals around kinematical limits. The particular example \eqref{eq:regionintegral} corresponds to one of the two \emph{soft regions} contributing to the Regge limit, $s \to 0$, of a massless box in $D = 6-2\epsilon$.
An interesting feature of this example is the use of an additional exponent variable, the analytic regulator $\mu$. The poles in $\mu$ and $\epsilon$ combine with the prefactor $s^{-(1+\epsilon\mu)}$ to give rise to logarithms in $s$ in the final result.
In physical applications, such as the study of effective field theories, it is useful to be able to disentangle the logarithms originating from poles in the different regulators.

Let us assume that $-\epsilon \ge \mu \ge 0$, and let us study the expansion in $\mu$.
The divergent fan $\Sigma^{\rm div}$ consists of a single ray $\rho = (-1,0,1)$. We have,
\begin{align}
    w_\rho = (1,0,0), \quad \rm{Vol}(\rho) = \frac{1}{2 \mu}, \quad \eta_\rho = (0,0), \quad V_\rho = \{\alpha_1 = 1\}
\end{align}
and 
\begin{align}
    \mathcal{I}|_\rho = \alpha_1^\mu \alpha_3^{-\mu} (1+\alpha_2)^{2\epsilon}(\alpha_2 + t \alpha_1 \alpha_3)^{-(2+\epsilon)}
\end{align}
Accordingly the subtraction formula gives
\begin{align}
    s^{(1+\epsilon+\mu)}  I(s; \epsilon, \mu) = 
    \int_{\mathbb{R}^3} \frac{d\alpha}{\alpha} \left(\mathcal{I}-v_\rho^{1+2\mu}  \mathcal{I}|_{\rho}\right)
    + \frac{1}{2\mu} \int_{V_\rho} \frac{d\alpha}{\alpha} \mathcal{I}|_{\rho}.
\end{align}
The integrands can now be expanded in $\mu$. 
The expansion in $\epsilon$ of the second integrand can be performed in the same way. On the other hand, the technique is not immediately applicable to the first combination of integrands. We will leave to a future work the study of this interesting problem.

Finally, the simultaneous expansion in $\mu$ and $\epsilon$ can be performed as usual.
Considering the line $\epsilon = -2 \mu$, there are three divergent rays. The geometric property is not satisfied, but this can be easily addressed by Nilsson-Passare analytical continuation along the ray $\rho = (-1,0,1)$.
The result is
\begin{align}
    s^{(1+\epsilon+\mu)}  I(s; -2\mu, \mu) = 2
    \int_{\mathbb{R}^3_\ge} \frac{d\alpha}{\alpha} \alpha^{(2+\mu,1,1-\mu)} (1 + \alpha_1 + \alpha_2)^{-1-4\mu} (\alpha_2 + t \alpha_1 \alpha_3)^{-(2+\mu)},
\end{align}
which now satisfies the geometric property. Note that no explicit pole in $\mu$ has been produced in the process. The subtraction formula gives
\begin{align}
    \frac{t}{2} s^{(1+\epsilon+\mu)}  I(s; -2\mu, \mu) = \frac{1}{\mu^2} + \frac{1}{\mu} \left(\log(t)+2\right)  + 2 \left(2-2\zeta(2)+\log(t)\right) + \mathcal{O}(\mu).
\end{align}

\subsection{An angular integral}

We provide another example beyond the case of Feynman integrals, a class of phase-space integrals known as \emph{angular integrals}.
In \cite{smirnov2024expansion} it was introduced a parametric formulation which allows to apply our master formula to this case as well.
In the language of \cite{smirnov2024expansion}, we consider the four denominators, one-mass integral, given by
\begin{align}
    I({\bf v};\epsilon) = \mathcal{N} \int_{\mathbb{R}^4_+} \frac{dt}{t} t^{(1,1,1,5/2)} (1+t_2)(1+t_3)^2(1+t_4)^{\epsilon+1/2} \mathcal{Q}^{-2},
\end{align}
where the normalization factor is
\begin{align}
    \mathcal{N} = \frac{2\pi}{1-2\epsilon} \frac{\Gamma(4) \Gamma(\frac{3}{2}-\epsilon)}{\Gamma(\frac{5}{2})\Gamma(-\epsilon-1)}  = 8 \pi \epsilon + \mathcal{O}(\epsilon^2)
\end{align}
and the $\mathcal{Q}$ is a quadratic polynomial,
\begin{align*}
    \mathcal{Q}=&(t_1+1)^2 (t_2+1)^2
   (t_3+1)^2+\\
   &t_4 \left[t_3 v_{34}+2 t_1 (t_2+1) (t_3+1) (t_2 (t_3+1)
   v_{12}+t_3 v_{13}+v_{14}) +2 t_2
   (t_3+1) (t_3 v_{23}+v_{24})+v_{44}\right]\;,
\end{align*}
Note that due to the prefactor, in order to compute the integral up to the finite part we need to compute $I({\bf v};\epsilon) \mathcal{N}^{-1}$ only up to the first pole in $\epsilon$. This is a situation where the subtraction method is intrinsically superior to the Nilsson-Passare analytic continuation.

The integrand has a power-divergence along the ray $\rho = (0,0,0,1)$, which requires analytic continuation in that direction to reduce log-divergent integrands. This produces two new integrands that satisfy the geometric property, hence we can apply our subtraction formula to them.
Although the resulting integrands are not linearly reducible, they can still be integrated with \verb|HyperInt| by rationalizing the last integration.
In particular, the final result contains an interesting square root,
\begin{align*}
    r_{\pm} = -\frac{v_{14} v_{23} + v_{13} v_{24} - v_{12} v_{34} \pm \sqrt{( v_{14} v_{23} + v_{13} v_{24} - 2 v_{12} v_{34})^2 + 2 v_{13} v_{23} (v_{12} v_{44}-2 v_{14} v_{24} )
    }}{2 v_{13} v_{23} }.
\end{align*}
The result is given by
\begin{align}
    I({\bf v}; \epsilon) = \mathcal{N} \left(\frac{1}{\epsilon^2} J^{(-2)} + \frac{1}{\epsilon} J^{(-1)} \right) + \mathcal{O}(\epsilon),
\end{align}
where
\begin{align}
    J^{(-2)}= -\frac{v_{12} v_{14} v_{24} + v_{13} v_{14} v_{34} + v_{23} v_{24} v_{34}}{
 8 v_{12} v_{13} v_{14} v_{23} v_{24} v_{34}},
\end{align}
the expression for $J^{(-1)}$ involves polylogarithmic functions of weight two, and is reported in the ancillary files\footnote{A mistake was done in the first version of this manuscript, and the reported result was wrong. We would like to thank V.Smirnov and F.Wunder for noticing the mistake}. The analytic expression for this angular integral was previously unknown.

\subsection{Examples with $n+L \ge 7$}

We conclude with an exhibition of less trivial diagrams satisfying the geometric property, showing that it is not a consequence of the simplicity of the previous examples.
At the time of writing, a rough characterization of the state of the art in the computation of Feynman integrals is given by the quantity $n+L$, where $n$ is the number of external particles and $L$ the number of loops. Diagrams below the critical line $n+L=7$ are mostly understood, with some exceptions due to extremely complicated topologies and number of kinematical variables.
In Fig. \ref{fig:various examples} we show some examples beyond this critical line, all of which satisfy the geometrical property.
\begin{figure}[t]
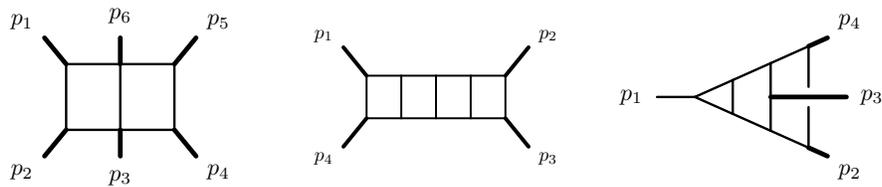

    \vspace*{3mm}
    \centering
    \begin{subfigure}[c]{.27\textwidth}
        \centering
        \resizebox{.7\textwidth}{!}{\input{doublehexabox}}
    \end{subfigure}%
    \begin{subfigure}[c]{.27\textwidth}
        \centering
        \resizebox{.7\textwidth}{!}{\input{triplebox}}
    \end{subfigure}%
    \begin{subfigure}[c]{.27\textwidth}
        \centering
        \resizebox{.7\textwidth}{!}{\input{4pt3loop_nonplanar_1}}
    \end{subfigure}%
    \vspace*{1.5mm}
    \caption{A collection of Feynman integrals which can be treated by our subtraction scheme. At least one of the solid lines must carry a momentum with $p_i^2 \ne 0$.}
    \label{fig:various examples}
\end{figure}

\section{Conclusion} 
\label{sec:outlook}

In this paper we have applied tropical geometry to the study of local subtraction schemes for Euler integrals.
Our first main result is a criterion for a combination of Euler integrands to be locally finite, expressed by Th. \ref{th:localfinite3} and Th. \ref{th:localfinitesuf}.
We have applied this theorem, and the tropical wisdom that goes into its proof, to construct a simple-minded subtraction scheme for Euler integrals which satisfy the geometrical property \ref{def:geometricproperty}.
The scheme can be summarized by the subtraction formula described in Th. \ref{th:subformula}, which is the central result of the paper.
The Nilsson-Passare analytical continuation allows to algorithmically express any Euler integral as a combination of integrals satisfying the assumption of Th. \ref{th:subformula}, so our results apply to arbitrary Euler integrals.
A private implementation of this method is available upon request to the author, and it will be made public in the future.

There are several natural directions for further investigation.

{\bf Relative signs.} What if the coefficients of ${\bf s}$ have relative signs?
These Euler integrands are the \emph{b\^ete noire} of tropicalization, because they may develop singular behaviours around loci in the \emph{interior} of the domain of integration, to which the blow-ups given by the monomial changes \eqref{eq:monomialChange} are oblivious.
This may result in the appearance of additional poles in $\epsilon$, which in particular would invalidate the subtraction scheme constructed in Section \ref{sec:scheme}.
While we are not aware of examples of this phenomenon appearing in Feynman integrals, we are also not aware of a rigorous no-go theorem ruling out this possibility.
It would be interesting to investigate this aspect by applying recent developments in semi-algebraic geometry \cite{telek2023geometry} to the Landau analysis.

{\bf Geometrical Property}.
The main limitation of our subtraction scheme is that it applies only to Euler integrands satisfying the geometric property.

One way to deal with this issue is to apply Nilsson-Passare analytical continuation along enough divergent rays, until the integral is expressed in terms of a combination of integrands satisfying the geometric property. 
Although practical, this approach goes against the original motivation of the subtraction.
A better solution would be instead to find suitable counterterms to deal with these cases, by considering  more general rational functions for the factors $v_\rho$. 

A considerably different approach is to construct the counterterms in terms of exact differential forms, which makes it easier to integrate them.
This is the approach followed in \cite{Brown:2019wna} for string amplitudes and in \cite{Hillman:2023ezp} for UV divergent Feynman integrals.
The idea is to find \emph{u-variables}, rational functions $u_\rho$ taking values in $[0,1]$, with suitable asymptotic behaviors along the divergent rays. Then one construct counterterms using logarithmic differential forms $d\log(u)$. It would be interesting to connect these approaches with the formalism established here, in particular with the criterion for local finiteness. This would require to study the tropicalization of the \emph{Toric Jacobian}, $\mathcal{J}$, defined by \cite{cattani1997residues}
\begin{align}
    \mathcal{J} \frac{d\alpha}{\alpha} \coloneqq \frac{du_1}{u_1} \wedge \dots \wedge \frac{du_n}{u_n} .
\end{align}

{\bf Multiple expansions}
As mentioned in one of the examples, it is sometimes interesting to look at the multiple expansion of an Euler integral in several exponent variables.
In general, one does not expect the expansions to commute, and different orders may reveal different physical aspects of the problem under consideration.
While the main subtraction formula is not immediately applicable to study this problem, we expect it to be possible to suitably modify it with the aid of the local finiteness criterion explained in this paper.

{\bf Finite Integrals.}
In various situations in physics one often encounters combinations of Feynman integrals - and generalizations thereof - which are finite but are not presented in a locally finite presentation from the get-go.
A famous and simple example is the following combination of Feynman integrals \cite{Henn_2021},
\vspace{2mm}
\begin{alignat*}{1}
I(s,t;\epsilon) = \quad\quad {\vcenter{\hbox{\resizebox{.081\textwidth}{!}{\fontsize{80}{90} \begin{fmffile}{diagram-box}
\begin{fmfgraph*}(80,50)
    \fmfpen{thin}
    \fmfleft{e2,e1}
    \fmfright{e3,e4}
    \fmflabel{$p_1$}{e1}
    \fmflabel{$p_2$}{e2}
    \fmflabel{$p_3$}{e3}
    \fmflabel{$p_4$}{e4}
    \fmf{plain,tension=.4, label=$1$}{v1,v2}
    \fmf{plain,tension=.4, label=$2$}{v2,v3}
    \fmf{plain,tension=.4, label=$3$}{v3,v4}
    \fmf{plain,tension=.4, label=$4$}{v4,v1}
    \fmf{plain}{e1,v1}
    \fmf{plain}{e2,v2}
    \fmf{plain}{e3,v3}
    \fmf{plain}{e4,v4}
\end{fmfgraph*}
\end{fmffile}} }  }} 
 -\frac{1}{t}\times \ {\vcenter{\hbox{\resizebox{.081\textwidth}{!}{\fontsize{80}{90}\begin{fmffile}{diagram-pbox1}
\begin{fmfgraph*}(80,50)
    \fmfpen{thin}
    \fmfleft{e2,e1}
    \fmfright{e3,e4}
    \fmflabel{$p_1$}{e1}
    \fmflabel{$p_2$}{e2}
    \fmflabel{$p_3$}{e3}
    \fmflabel{$p_4$}{e4}
    \fmf{plain,tension=100}{v1,v2}
    \fmf{plain,tension=.2, label=$2$}{v3,v2}
    \fmf{plain,tension=.2, label=$3$}{v3,v4}
    \fmf{plain,tension=.2, label=$4$}{v1,v4}
    \fmf{plain}{e1,v1}
    \fmf{plain}{e2,v2}
    \fmf{plain}{e3,v3}
    \fmf{plain}{e4,v4}
\end{fmfgraph*}
\end{fmffile}} }  }}
 -\frac{1}{t}\times \ {\vcenter{\hbox{\resizebox{.081\textwidth}{!}{\fontsize{80}{90}\begin{fmffile}{diagram-pbox3}
\begin{fmfgraph*}(80,50)
    \fmfpen{thin}
    \fmfleft{e2,e1}
    \fmfright{e3,e4}
    \fmflabel{$p_1$}{e1}
    \fmflabel{$p_2$}{e2}
    \fmflabel{$p_3$}{e3}
    \fmflabel{$p_4$}{e4}
    \fmf{plain,tension=.2, label=$1$}{v1,v2}
    \fmf{plain,tension=.2, label=$2$}{v3,v2}
    \fmf{plain,tension=100}{v3,v4}
    \fmf{plain,tension=.2, label=$4$}{v4,v1}
    \fmf{plain}{e1,v1}
    \fmf{plain}{e2,v2}
    \fmf{plain}{e3,v3}
    \fmf{plain}{e4,v4}
\end{fmfgraph*}
\end{fmffile}} }  }} 
 -\frac{1}{s}\times \ {\vcenter{\hbox{\resizebox{.081\textwidth}{!}{\fontsize{80}{90}\begin{fmffile}{diagram-pbox2}
\begin{fmfgraph*}(80,50)
    \fmfpen{thin}
    \fmfleft{e2,e1}
    \fmfright{e3,e4}
    \fmflabel{$p_1$}{e1}
    \fmflabel{$p_2$}{e2}
    \fmflabel{$p_3$}{e3}
    \fmflabel{$p_4$}{e4}
    \fmf{plain,tension=.2, label=$1$}{v2,v1}
    \fmf{plain,tension=100}{v2,v3}
    \fmf{plain,tension=.2, label=$3$}{v4,v3}
    \fmf{plain,tension=.2, label=$4$}{v4,v1}
    \fmf{plain}{e1,v1}
    \fmf{plain}{e2,v2}
    \fmf{plain}{e3,v3}
    \fmf{plain}{e4,v4}
\end{fmfgraph*}
\end{fmffile}} }  }}
 -\frac{1}{s}\times \ {\vcenter{\hbox{\resizebox{.081\textwidth}{!}{\fontsize{80}{90}\begin{fmffile}{diagram-pbox4}
\begin{fmfgraph*}(80,50)
    \fmfpen{thin}
    \fmfleft{e2,e1}
    \fmfright{e3,e4}
    \fmflabel{$p_1$}{e1}
    \fmflabel{$p_2$}{e2}
    \fmflabel{$p_3$}{e3}
    \fmflabel{$p_4$}{e4}
    \fmf{plain,tension=.2, label=$1$, label.side=right}{v1,v2}
    \fmf{plain,tension=.2, label=$2$}{v2,v3}
    \fmf{plain,tension=.2, label=$3$}{v4,v3}
    \fmf{plain,tension=100}{v4,v1}
    \fmf{plain}{e1,v1}
    \fmf{plain}{e2,v2}
    \fmf{plain}{e3,v3}
    \fmf{plain}{e4,v4}
\end{fmfgraph*}
\end{fmffile}} }  }}.
\end{alignat*}
\vspace{2mm}

When trying to find a locally finite presentation for $I(s,t;\epsilon)$, one faces an obstacles in that the Feynman parametric integrals associated to the diagrams live in different spaces.
An obvious solution is to consider the following ``global'' Schwinger parametrization,
\begin{align}
    I(s,t; {\bf \nu}, \epsilon) &= \frac{1}{(i \pi)^{D/2}}\int \frac{d^D \ell}{[\ell^2]^{\nu_1} [(\ell+p_2)^2]^{\nu_2}  [(\ell+p_2+p_3)^2]^{\nu_3}  [(\ell-p_1)^2]^{\nu_4}}\\
    &=\Gamma\left(\sum_{i=1}^4 \nu_i - D/2\right) \int_{\mathbb{P}^3_{\ge0}} \frac{1}{\mathrm{GL}(1)} \frac{d\alpha}{\alpha} \alpha^{\bf \nu} \mathcal{U}^{\left(\sum_{i=1}^4 \nu_i-D\right)} \mathcal{F}^{-\left(\sum_{i=1}^4 \nu_i-D/2\right)},
    \label{eq:family}
\end{align}
where ${\bf \nu} = (\nu_1, \dots, \nu_4)$.
The Symanzik polynomials are
\begin{alignat}{2}
&\mathcal{U} = && \alpha_1 + \alpha_2 + \alpha_3 + \alpha_4  \notag \\
&\mathcal{F} = && s \alpha_1 \alpha_3 + t \alpha_2 \alpha_4 \notag
\end{alignat}
The massless box diagram $I_{\rm box}$ appearing as the first term in $I(s,t;\epsilon)$ corresponds to ${\bf \nu} = (1, \dots, 1)$, while the triangle obtained by pinching the $i$-th edge corresponds to ${\bf \nu}_j = 1-\delta_{i,j} \mu$, for small $\mu$:
\begin{align}
I_{t_i} = I(s,t; \nu_j = 1 - \delta_{i,j}\mu, \epsilon) + \mathcal{O}(\mu).
\end{align}
Plugging the parametrization \eqref{eq:family} into $I(s,t;\epsilon)$, however, \emph{does not} yield a locally finite presentation. We can see this using Th. \ref{th:localfinitesuf}: each triangle integrand $\mathcal{I}_{t_i}$ has a log-divergence along a ray $\rho_i$ which, being unique to that particular integrand, cannot be cancelled. 
Informed by this, we apply Nilsson-Passare analytically continue $\mathcal{I}_{t_i}$ along $\rho_i$, after which the parametrization collapses to
\begin{align}
    I(s,t;\epsilon) &= 2(-1+2\epsilon) \Gamma(1+\epsilon)\int_{\mathbb{R}^3_{\ge0}} \frac{d\alpha_1 d\alpha_2 d\alpha_3}{(1+\alpha_1+\alpha_2+\alpha_3)^{2-\epsilon}(s \alpha_2 + t \alpha_1 \alpha_3)^{1+\epsilon}} + \mathcal{O}(\mu) \\
    &= -\log\left(\frac{s}{t}\right)^2-\pi^2 + \mathcal{O}(\mu,\epsilon),
\end{align}
which is locally finite. 

It would be interesting to employ the same ideas to find finite combinations of Feynman integrals - such as those appearing in \cite{Henn_2023,Arkani_Hamed_2022,Arkani_Hamed_2012} - and put them in a locally finite form.
See \cite{gambuti2023finite} for an alternative approach based on loop momentum space.

{\bf Linear Reducibility.}
Linear reducibility is a property satisfied by many important Euler integrands which allows, when the integrals are convergent, to integrate them in terms of Goncharov polylogarithms \cite{Panzer_2015, brown2010periods, Brown:2008um}. Geometrically, it requires the iterated projections of the \emph{Landau variety} to be contained in linear varieties.

Starting from a linearly reducible integrand $\mathcal{I}$ which satisfies the geometric property, there is no guarantee that the locally finite integrands produced by the subtraction scheme will also be linearly reducible.
The property may be broken by the additional components of the Landau variety introduced by the factors $v_\rho$, as well as by the initial forms $\mathcal{I}|_\sigma$.
Therefore, requiring linear reducibility to be preserved poses additional constraints on the structure of the allowed counterterms. This is essentially an algebraic geometrical problem, which would be interesting to spell out in order to characterize the integrals for which linear reducibility can be preserved, as well as to design algorithms to construct the required counterterms directly.

{\bf Numerical Integration}
In most cases, the subtraction formula involves integrals that cannot be presented in terms of known special functions, this motivates the search for numerical methods for their evaluation.

A natural concern is that the cancellations of divergences inherent to any subtraction scheme may be difficult to detect numerically.
However, an advantage of our scheme is that the regions in the integration space where the cancellations take place are completely understood. They correspond to the cones of the common refiment of the integrands, so that the cancellations can be accurately sampled after performing the monomial changes used in the construction of the scheme. One can even see the cancellations analytically, in each cone, by setting $\epsilon=0$ and factoring the combination of integrands.

A recent breakthrough in the context of Feynman integrals, \emph{tropical sampling}, seems a particularly promising approach to apply to our formula \cite{Borinsky_2023,feyntrop}. It would be interesting to test these ideas on the tropicalization of the local integrands produced by the subtraction formula.

{\bf Beyond Feynman Integrals}
The subtraction scheme described in this paper is applicable to any Euler integrand satisfying the geometric property, as it was shown in the provided examples.
A particularly satisfying aspect of this framework is that it illuminates the combinatorics of the cancellations. Traditional challenges of subtraction schemes, such as treating double-counting of divergences, are neatly dealt by the face lattice of the Newton polytope.
Based on this, we advocate for the search of parametrizations of integrals appearing in various physical applications, such as cosmology and phase-space integrations, in order to express these quantities in terms of Euler integrals.

{\bf Subtraction formulae for Curve Integrals}
Part of the motivation behind the research presented in this manuscript is the recent introduction of the \emph{curve integral} presentation for scattering amplitudes \cite{arkanihamed2023loop,curve2}. In this formalism, scattering amplitudes are described as a single integral of an integrand which can be efficiently evaluated at every integration point. Surprisingly, the enumeration of Feynman diagrams, which is a major bottleneck in the computation of scattering amplitudes, is completely avoided.
While the formalism was originally developed for $\rm{Tr\ } \phi^3$ theory, it was then extended to more general scalar theories, and there is growing evidence that efficient formulations should exist for theories describing the real world \cite{Arkani-Hamed:2024vna,Arkani-Hamed:2023jry, Arkani-Hamed:2024nhp,Laddha:2024qtn, De:2024wsy}.

An exciting prospect is that the curve integral formalism may result in a powerful numerical method to compute scattering amplitudes in phenomenologically relevant theories. In order for this prospect to concretize, however, several challenges have to be faced, among which is the treatment of UV and IR divergences.
A natural approach would the construction of a subtraction scheme for the curve integrand. This would allow, after the local removal of UV and IR divergences, to evaluate numerically finite observables without generating Feynman diagrams.

\section{Acknowledgments}

GS would like to thank Aaron Hillman and Nima Arkani-Hamed for various discussions around the topic of subtraction schemes, Johannes Henn for suggesting physical problems that have helped in orienting the investigation of this topic, as well as for valuable comments on the first draft of the manuscript, Giulio Ruzza for discussions about the commutativity of series expansions and integrals and Veronika Salvatori for saying yes.
This project was funded by the European Union (ERC, UNIVERSE PLUS, 101118787). Views and opinions expressed are however those of the author(s) only and do not necessarily reflect those of the European Union or the European Research Council Executive Agency. Neither the European Union nor the granting authority can be held responsible for them.

\newpage
\newpage

\bibliographystyle{unsrt}
\bibliography{refs}

\end{document}